\PassOptionsToPackage{unicode}{hyperref}
\PassOptionsToPackage{hyphens}{url}
\PassOptionsToPackage{dvipsnames,svgnames,x11names}{xcolor}
\documentclass[
  12pt]{article}  
\usepackage{adjustbox}  

\usepackage{amsmath,amssymb}
\usepackage{iftex}
\ifPDFTeX
  \usepackage[T1]{fontenc}
  \usepackage[utf8]{inputenc}
  \usepackage{textcomp} 
\else 
  \usepackage{unicode-math}
  \defaultfontfeatures{Scale=MatchLowercase}
  \defaultfontfeatures[\rmfamily]{Ligatures=TeX,Scale=1}
\fi
\usepackage{lmodern}
\ifPDFTeX\else  
\fi
\IfFileExists{upquote.sty}{\usepackage{upquote}}{}
\IfFileExists{microtype.sty}{
  \usepackage[]{microtype}
  \UseMicrotypeSet[protrusion]{basicmath} 
}{}
\makeatletter
\@ifundefined{KOMAClassName}{
  \IfFileExists{parskip.sty}{%
    \usepackage{parskip}
  }{
    \setlength{\parindent}{0pt}
    \setlength{\parskip}{6pt plus 2pt minus 1pt}}
}{
  \KOMAoptions{parskip=half}}
\makeatother
\usepackage{xcolor}
\setlength{\emergencystretch}{3em} 
\setcounter{secnumdepth}{5}
\makeatletter
\ifx\paragraph\undefined\else
  \let\oldparagraph\paragraph
  \renewcommand{\paragraph}{
    \@ifstar
      \xxxParagraphStar
      \xxxParagraphNoStar
  }
  \newcommand{\xxxParagraphStar}[1]{\oldparagraph*{#1}\mbox{}}
  \newcommand{\xxxParagraphNoStar}[1]{\oldparagraph{#1}\mbox{}}
\fi
\ifx\subparagraph\undefined\else
  \let\oldsubparagraph\subparagraph
  \renewcommand{\subparagraph}{
    \@ifstar
      \xxxSubParagraphStar
      \xxxSubParagraphNoStar
  }
  \newcommand{\xxxSubParagraphStar}[1]{\oldsubparagraph*{#1}\mbox{}}
  \newcommand{\xxxSubParagraphNoStar}[1]{\oldsubparagraph{#1}\mbox{}}
\fi
\makeatother

\usepackage{longtable,booktabs,array}
\usepackage{calc} 
\usepackage{etoolbox}

\makeatletter
\patchcmd\longtable{\par}{\if@noskipsec\mbox{}\fi\par}{}{}
\makeatother
\IfFileExists{footnotehyper.sty}{\usepackage{footnotehyper}}{\usepackage{footnote}}
\makesavenoteenv{longtable}
\usepackage{graphicx}
\makeatletter
\def\maxwidth{\ifdim\Gin@nat@width>\linewidth\linewidth\else\Gin@nat@width\fi}
\def\maxheight{\ifdim\Gin@nat@height>\textheight\textheight\else\Gin@nat@height\fi}
\makeatother
\setkeys{Gin}{width=\maxwidth,height=\maxheight,keepaspectratio}
\makeatletter
\def\fps@figure{htbp}
\makeatother

\addtolength{\oddsidemargin}{-.5in}%
\addtolength{\evensidemargin}{-.1in}%
\addtolength{\textwidth}{1in}%
\addtolength{\textheight}{1.7in}%
\addtolength{\topmargin}{-1in}
\makeatletter
\@ifpackageloaded{caption}{}{\usepackage{caption}}
\AtBeginDocument{%
\ifdefined\contentsname
  \renewcommand*\contentsname{Table of contents}
\else
  \newcommand\contentsname{Table of contents}
\fi
\ifdefined\listfigurename
  \renewcommand*\listfigurename{List of Figures}
\else
  \newcommand\listfigurename{List of Figures}
\fi
\ifdefined\listtablename
  \renewcommand*\listtablename{List of Tables}
\else
  \newcommand\listtablename{List of Tables}
\fi
\ifdefined\figurename
  \renewcommand*\figurename{Figure}
\else
  \newcommand\figurename{Figure}
\fi
\ifdefined\tablename
  \renewcommand*\tablename{Table}
\else
  \newcommand\tablename{Table}
\fi
}
\@ifpackageloaded{float}{}{\usepackage{float}}
\floatstyle{ruled}
\@ifundefined{c@chapter}{\newfloat{codelisting}{h}{lop}}{\newfloat{codelisting}{h}{lop}[chapter]}
\floatname{codelisting}{Listing}

\makeatother
\makeatletter
\makeatother
\makeatletter
\@ifpackageloaded{caption}{}{\usepackage{caption}}
\@ifpackageloaded{subcaption}{}{\usepackage{subcaption}}
\makeatother

\ifLuaTeX
  \usepackage{selnolig}  
\fi
\usepackage[]{natbib}
\bibliographystyle{agsm}
\usepackage{bookmark}

\IfFileExists{xurl.sty}{\usepackage{xurl}}{} 
\urlstyle{same} 
\hypersetup{
  pdftitle={Title},
  pdfauthor={Author 1; Author 2},
  pdfkeywords={3 to 6 keywords, that do not appear in the title},
  colorlinks=true,
  linkcolor={blue},
  filecolor={Maroon},
  citecolor={Blue},
  urlcolor={Blue},
  pdfcreator={LaTeX via pandoc}}

\usepackage{babel}
\usepackage{float}
\usepackage{booktabs}
\usepackage{fancybox}
\usepackage{calc}
\usepackage{mathtools}
\usepackage{url}
\usepackage{algorithm2e}
\SetAlFnt{\small}
\usepackage{algorithmic}
\algsetup{linenosize=\tiny}
\usepackage{amsmath}
\usepackage{amsthm}
\usepackage{amssymb}
\usepackage{graphicx}

\usepackage{bm}
\usepackage{amsfonts}

\usepackage{afterpage}

\newcommand{\R}{\mathbb{R}}

\newcommand{\ind}{\overset{\mathrm{ind}}{\sim}}

\newcommand{\x}{\mathbf{x}}

\theoremstyle{plain}
\newtheorem{theorem}{Theorem}[section]
\newtheorem{lemma}[theorem]{Lemma}\newtheorem{corollary}[theorem]{Corollary}

\theoremstyle{definition}

\theoremstyle{remark}

\newcommand{\anon}{1}

\begin{document}

\def\spacingset#1{\renewcommand{\baselinestretch}%
{#1}\small\normalsize} \spacingset{1}


\if1\anon
{
  \title{\bf Wavelet Tree Ensembles for Triangulable Manifolds}
  \author{Hengrui Luo
    \hspace{.2cm}\\
    Department of Statistics, Rice University;\\
    Computational Research Division, Lawrence Berkeley National Laboratory\\
    and \\
    Akira Horiguchi \\
    Department of Statistics, University of California Davis \\
    and \\
    Li Ma \\
    Department of Statistics \& Data Science Institute, University of Chicago}
    \date{}
  \maketitle
} \fi

\if0\anon
{
  \bigskip
  \bigskip
  \bigskip
  \begin{center}
    {\LARGE\bf Wavelet Tree Ensembles for Triangulable Manifolds}
\end{center}
  \medskip
} \fi

\bigskip
\begin{abstract}
We develop unbalanced Haar (UH) wavelet tree ensembles for regression on triangulable manifolds.
Given data sampled on a triangulated manifold, we construct UH wavelet trees whose atoms are supported on geodesic triangles and form an orthonormal system in $L^2(\mu_n)$, where $\mu_n$ is the empirical measure on the sample, which allows us to use UH trees as weak learners in additive ensembles.
Our construction extends classical UH wavelet trees from regular Euclidean grids to generic triangulable manifolds while preserving three key properties: (i) orthogonality and exact reconstruction at the sampled locations, (ii) recursive, data-driven partitions adapted to the geometry of the manifold via geodesic triangulations, and (iii) compatibility with optimization-based and Bayesian ensemble building. 
In Euclidean settings, the framework reduces to standard UH wavelet tree regression and provides a baseline for comparison.
We illustrate the method on synthetic regression on the sphere and on climate anomaly fields on a spherical mesh, where UH ensembles on triangulated manifolds substantially outperform classical tree ensembles and non-adaptive mesh-based wavelets. 
For completeness, we also report results on image denoising on regular grids.
A Bayesian variant (RUHWT) provides posterior uncertainty quantification for function estimates on manifolds.
Our implementation is available at 
\if1\anon
{
\url{http://www.github.com/hrluo/WaveletTrees}.
} \fi
\if0\anon
{
[URL hidden for blind version.]
} \fi
\end{abstract}

\noindent%
{\it Keywords:} Tree ensemble methods, nonparametric regression, scalar-on-manifold regressions, scalar-on-sphere regressions.
\vfill

\newpage
\spacingset{1} 

\section{Introduction}

\begin{table}[h!]
\centering
\resizebox{\linewidth}{!}{%
\begin{tabular}{lllclc}
\toprule
Method / family & Key reference & Domain & Data-driven & Base & Ensemble\\
 &  &  & partition? & measure & diversity\\
\midrule
Dyadic wavelets & \citet{daubechies1992ten} & Images & No & Lebesgue & None\\
 & \citet{donoho1995wavelet} &  &  &  & \\
\midrule
SHAH & \citet{fryzlewicz2016shah} & Images & Yes{*} & Counting & None\\
\midrule
WARP & \citet{li2021learning} & Images & Yes{*} & Counting & Cyclic spinning\\
\midrule
Random Forest & \citet{breiman2001random} & Euclidean & Yes & Empirical & Stochastic splits\\
BART & \citet{chipman2010bart} &  &  &  & \\
\midrule
UHWT & \textbf{This work} & Triangulable & Yes & Empirical & None\\
\midrule
UHWT boost & \textbf{This work} & Triangulable & Yes & Empirical & Boosting\\
\midrule
RRE (with UHWT) & \citet{blaser2016random} & Triangulable & Yes & Empirical & Haar rotations\\
\midrule
RotF (with UHWT) & \citet{rodriguez2006rotation} & Surface & Yes & Empirical & PCA rotations\\
\midrule
RotF (with sPCA) & \citet{rodriguez2006rotation,hrluo_2021c} & Sphere & Yes & Empirical & Bootstrap+sPCA\\
\bottomrule
\end{tabular}%
}
\caption{Structural properties of classical, adaptive and ensemble wavelet
frameworks. {*} means that the data-driven part only applies to split
dimensions. ``Triangulable'' refers to ``triangulable manifold.''
``RotF'' is an abbreviation for ``Rotation Forest.''}
\end{table}

Wavelets are effective for sparsely representing functions with sharp changes and have been broadly applied in scenarios with a low signal-to-noise ratio. 
This work is motivated by the observation that the existing literature on wavelet regression has largely concerned the effective representation of functions using a set of wavelet bases defined along a single recursive partition tree. At the same time, tree-based regression methods such as Classification and Regression Trees (CART) \citep{breiman1984classification} have been generalized to ensembles, most notably through random forests \citep{breiman2001random} and boosting \citep{Friedman01}, to mitigate high variance and overfitting, which can be serious when the signal-to-noise ratio is low. Wavelet transforms and ensembles thus present two complementary approaches for reducing variance. A natural question is whether incorporating both approaches simultaneously in a single framework can lead to further improvements in the resulting estimation and inference. 

Following this motivation, we introduce an additive boosted ensemble model of weak wavelet-based tree learners.
Whereas standard boosted regression trees regularize only at the leaf-level of the partition trees, each of our weak learners is a strongly regularized estimator along a binary recursive partition tree over the function domain, with regularization enforced under a wavelet basis along the partition tree. Standard wavelet transforms, however, are defined on dyadic trees that partition at fixed midpoints of the domain dimensions, which limits the diversity of the collection of trees and hence leads to large bias or approximation error in the resulting additive ensemble. To overcome this limitation, we employ an ``unbalanced'' Haar wavelet transform defined on partition trees that can be divided into a rich set of possible locations. 
Ensemble ideas for wavelet-based generative models also appear in \citet{xie2016inducing}, who introduce wavelets into a boosting framework and incrementally build complex models for wavelet coefficients.

We carry out a theoretical analysis to understand the difference in asymptotic behaviors of standard regression trees such as CART, which we refer to as leafwise estimators, and wavelet-based regression trees with unbalanced bases, which estimate coefficients and impose regularization on interior nodes of the partition tree. We derive oracle bounds showing that the stochastic term for the UH estimator scales with the number of active
coefficients, while the corresponding term for the leafwise estimator
scales with the total number of leaves. This clarifies when reconstruction
based on UH coefficients can substantially reduce variance,
particularly for signals that are sparse in the corresponding dictionary.

In addition, we construct a new regression framework based on orthogonal unbalanced Haar (UH) wavelet trees (UHWT) whose main target is regression on triangulable manifolds. The framework further connects classical wavelet and tree methodology and extends additive ensembles to non-Euclidean domains, while reducing to standard UHWT regression on regular Euclidean grids as a special case.
To this end, we extend UHWT from two-dimensional
images to higher-order tensors and triangulable manifold domains.
On the latter, we work with geodesic triangulations and data-adaptive recursive splits of triangles, which yield an orthonormal unbalanced
Haar system in $L^{2}(\mu_{n})$ on curved domains. This construction
permits a data-driven wavelet regression ensemble for generic non-Euclidean spaces.
We use UH trees as weak learners in additive ensembles
and employ random rigid motions, for example, rotations on the sphere 
or intrinsic reparameterizations, to generate diverse
trees while preserving the geometry of the domain. 
This unbalanced construction has concrete advantages
over more rigid multiresolution schemes, such as balanced Haar bases on fixed
spherical grids or fixed-depth trees. Balanced constructions split every cell
according to a predetermined schedule, regardless of the data, and they typically rely on global
coordinate systems that distort geometry on curved surfaces. In contrast, UH
trees choose split locations along edges using the empirical masses and local
variation of the data, so partitions can be simultaneously highly refined near sharp
transitions on the manifold and coarse in smooth or low-data regions. This
adaptivity is particularly important when triangle shapes and sampling
densities vary across the surface, as in our spherical examples.

While we mainly take an optimization approach to risk minimization in designing our algorithms for fitting our estimators, for completeness, we also present a probabilistic
formulation of the unbalanced wavelet-tree ensemble model, where a prior distribution over split dimensions and locations induces a
prior distribution over UH trees, and a wavelet likelihood factor
leads to a posterior distribution over tree structures with direct uncertainty quantification for shapes \citep{luo2024multiple}. 
The latter formulation leads to a Bayesian backfitting algorithm that parallels that for fitting 
Bayesian Additive Regression Trees (BART) \citep{chipman2010bart}. 

To our knowledge, this is the first framework that combines orthogonal
wavelets, recursive partitions that are aware of manifold geometry,
and modern ensemble techniques into a unified regression
method on triangulable manifolds.
The remainder of the paper develops UHWTs on grids and triangulated manifolds (Sections~\ref{sec:Wavelet-Regression}–\ref{sec:Wavelet-Trees-general domain}), their optimization-based and Bayesian ensembles (Section~\ref{sec:Wavelet-Regression-ensemble}). Section~\ref{sec:Theoretical-Results}
investigates the theoretical properties of our methods. Section~\ref{sec:Experiments}
presents empirical evaluations using diverse real and synthetic datasets.

\section{Wavelet Tree Regression}
\label{sec:Wavelet-Regression}

Signal denoising is a fundamental task in signal processing and can be
formulated as a regression problem. For a domain
$\Omega$, we aim to recover a signal $f\colon\Omega\to\mathbb{R}$ from observations
\begin{align}
y_{i}=f(\bm{x}_{i})+\epsilon_{i},\quad i=1,\ldots,n\label{eq:observation}
\end{align}
at locations $\bm{x}_{i}\in\Omega$ with zero mean additive noise
$\epsilon_{i}$ that is stable under orthogonal wavelet transforms.
Classical wavelet denoising proceeds by projecting the noisy signal
onto a wavelet system \citep{mallat1999wavelet}, shrinking the resulting coefficients, and
reconstructing a smoothed version of the signal
\citep{donoho1995wavelet}. In this section, we construct a wavelet
representation along a binary tree that links directly to regression
trees and lays the groundwork for both the optimization and Bayesian
approaches developed later. These two approaches share the same
 Haar representation and the same notion of wavelet
coefficients attached to nodes of a tree, but they differ in how
splits are chosen and how regularization is imposed.
The remainder of Section~\ref{sec:Wavelet-Regression} will assume that the domain $\Omega\subset [0,1]^D$ is a known, regular grid 
$\{1/n_{1},\ldots,1\}\times\{1/n_{2},\ldots,1\}\times\cdots\times\{1/n_{D},\ldots,1\}$
for some $D \in \mathbb{N}$, where usually $D=2$ (e.g., an image) or $D=3$ (e.g., a video or a spatial temporal field).
Section~\ref{sec:Wavelet-Trees-general domain} will then introduce our methodology for when $\Omega$ is a triangulable manifold (e.g., a sphere). 
For simplicity, the entire paper will assume that the locations $\bm{x}_{1},\ldots,\bm{x}_{n}$ are unique.

\subsection{Unbalanced Haar Wavelet Tree}
\label{sec:Haar-Wavelets}\label{sec:Wavelet-Regression-via}

The choice of wavelet functions determines how well the procedure
captures the structure of the data. To realize the potential of ensembling over wavelet regression as base learners, we need to enable enough diversity among the partition trees. As such, we must depart from standard wavelets that always partition in a balanced fashion at the midpoints of the domain dimensions, and consider unbalanced domain partitions. While in general this strategy can be incorporated for any wavelet basis, the Haar basis enjoys computational simplicity \citep{li2021learning}.
Thus we start by constructing unbalanced Haar (UH) functions through a 
recursive binary partitioning of the grid domain $\Omega$.

Starting with $A = \Omega$, we select---either greedily or probabilistically---a ``valid'' dimension $d \in \{1,\ldots,D\}$
and then select a ``valid'' location $\ell \in (a,b)$, 
where $a$ and $b$ are the
endpoints of the marginal interval of $A$ along dimension $d$. 
We call a dimension/location \textit{valid} if the node can be bisected along that dimension/location. 
The set $A$ is then bisected along dimension $d$ at location $\ell$ 
into two disjoint pieces, $A_{l}$ and $A_{r}$.

For this split $(A,d,\ell)$, we define the UH wavelet function by
\begin{align}
\psi_{A,d,\ell}(\bm{x}) =
\sqrt{\frac{|A_{l}|\;|A_{r}|}{|A|}}\left(
\frac{\mathbf{1}\{\bm{x}\in A_{l}\}}{|A_{l}|} -
\frac{\mathbf{1}\{\bm{x}\in A_{r}\}}{|A_{r}|}\right),\label{eq:UHbasis_function}
\end{align}
where $|\cdot|$ denotes the $\mu$-measure of a set for a fixed base measure $\mu$ on $\Omega$ (e.g., Lebesgue measure in the classical setting or the empirical measure $\mu_n$ in our finite-sample theory).

In classical wavelet regression \citep{mallat1989theory}, one chooses a {\em pre-defined} family of wavelets. In a data-driven regression tree, the same splitting construction is applied recursively to each piece, $A_{l}$ and $A_{r}$,
that can be split. In this way, a recursive partition
and a family of UH functions \eqref{eq:UHbasis_function} are built in parallel.
The collection of functions
generated in this way forms a dictionary of piecewise-constant
functions supported on sets of varying shapes and locations
\citep{friedman1974projection,donoho1997cart}. This
dictionary is not necessarily a complete basis of $L^{2}([0,1]^{D})$, but it is
sufficiently rich for regression on the observed design.

Given such a recursive binary partition, each split $(A,d,\ell)$ can be associated with an internal node of a binary tree rooted at $\Omega$; 
we will call it an \emph{Unbalanced Haar Wavelet Tree} (UHWT).
Each split is also associated with a wavelet coefficient defined as the $L^{2}(\mu)$ inner product
\begin{align}
w_{d,\ell}(A)\coloneqq
\langle\psi_{A,d,\ell},y\rangle_{L^{2}(\mu)}
&=\sqrt{\frac{|A_{l}^{(d,\ell)}|\;|A_{r}^{(d,\ell)}|}{|A|}}
\left(
  \frac{\int_{A_{l}^{(d,\ell)}}y(\bm{x})\,\mathrm d\mu(\bm{x})}{|A_{l}^{(d,\ell)}|}
  -
  \frac{\int_{A_{r}^{(d,\ell)}}y(\bm{x})\,\mathrm d\mu(\bm{x})}{|A_{r}^{(d,\ell)}|}
\right)\label{eq:wavelet_coef}
\end{align}
where $A_{l}^{(d,\ell)}$ and $A_{r}^{(d,\ell)}$ denote the left
and right children obtained by splitting $A$ at $(d,\ell)$.
When the domain $\Omega$ is a regular grid and we take $\mu$ to be the empirical
measure $\mu_n = n^{-1}\sum_{i=1}^n \delta_{\bm x_i}$, 
for any node $A$ we have
$\int_A y(\bm x)\,\mathrm d\mu_n(\bm x)
=
n_A^{-1}\sum_{\bm x_i\in A} y_i$ and $\mu_n(A)= n_A/n,$
where $n_A$ is the number of design points in $A$. Substituting into \eqref{eq:wavelet_coef}
implies 
$w_{d,\ell}(A) = \sqrt{\frac{n_l n_r}{n_A}}\;(\bar y_l - \bar y_r),$
where $\bar y_B = n_B^{-1}\sum_{\bm x_i\in B} y_i$ denotes the sample mean in node $B$.
The observed signal can then be reconstructed using an empirical inverse transform.
Denoising the signal, however, will require regularizing the coefficients, as
will be discussed in Section~\ref{sec:optimizationapproach}.

Related connections between Haar wavelets and regression trees include \cite{donoho1997cart} and \cite{fryzlewicz2007unbalanced}.
Our construction allows continuous split locations in multiple dimensions, and in the grid case the inverse UH transform coincides exactly with the leafwise CART estimator on the same tree (online Appendix~C; see also online Appendix~G.3).

Our construction allows continuous split
locations in multiple dimensions and, in the grid case, we show 
in online Appendix~C that
the associated regression fit coincides exactly with the leafwise
CART estimator on the same tree (see online Appendix~G.3 for a connection to \citet{fryzlewicz2007unbalanced}). %

As will be shown below, our UH construction is algebraically agnostic and can be extended to accommodate tensors;  
In contrast, the tensor-tree 
method of \citet{luo2024efficient} is explicitly tensor aware:
it evaluates splits using low-rank approximation errors, for example
based on CP or Tucker reconstructions, and fits low-rank models inside
leaves. 
Our UHWT tensor method instead emphasizes geometric
splits and Haar contrasts and is designed for signals with local
discontinuities or anisotropic edges.

\subsection{The optimization approach to fitting a UHWT}
\label{sec:optimizationapproach}

\begin{figure}[t]
\centering
\includegraphics[width=0.2\textwidth]{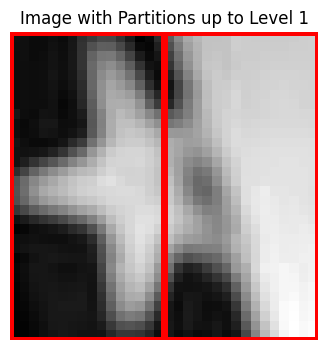}\includegraphics[width=0.2\textwidth]{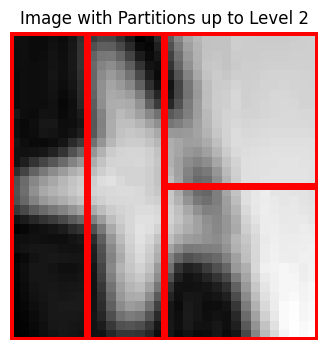}\includegraphics[width=0.2\textwidth]{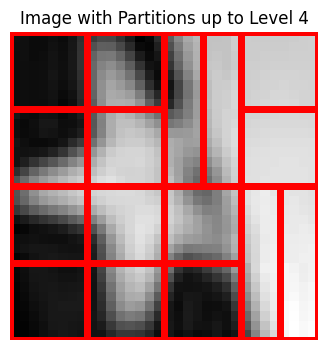}\includegraphics[width=0.2\textwidth]{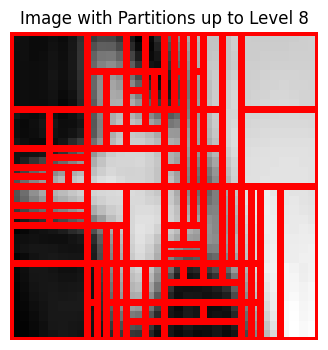}\includegraphics[width=0.2\textwidth]{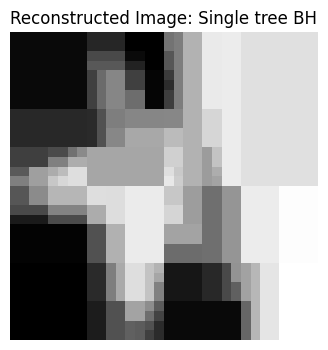}
~ \includegraphics[width=0.2\textwidth]{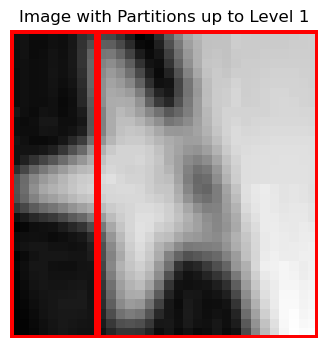}\includegraphics[width=0.2\textwidth]{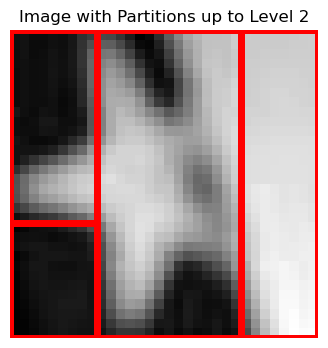}\includegraphics[width=0.2\textwidth]{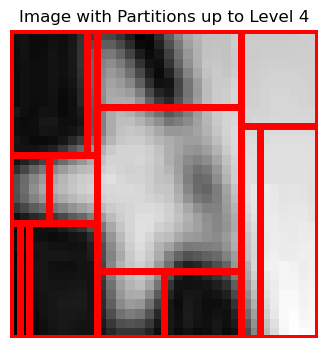}\includegraphics[width=0.2\textwidth]{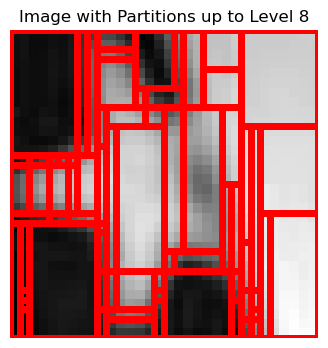}\includegraphics[width=0.2\textwidth]{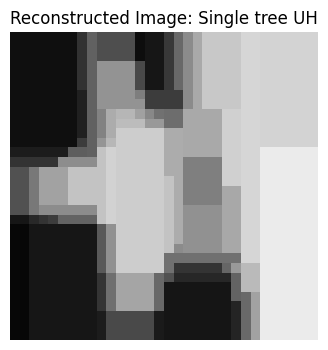}
\caption{Comparison between the Balanced Haar (BH) and UH optimization approach on a noisy
image that contains a star and a mostly two tone background. The first
row uses BH splits. The second row uses UH splits chosen by optimization;
these splits delineate the boundary of the star and adapt to the
volume of the main object. At depth 8 the partitions concentrate
along the edges of the star. The final reconstruction with a single
tree preserves both the boundary and the volume of the star while
reducing background interference.}
\label{fig:comparison_basis-1}
\end{figure}

In the optimization approach, each split in the UHWT is chosen greedily and deterministically from the data. 
(Online Appendix~C explores the connection of our method to CART in the grid case.)
At each node $A$, this approach chooses a split dimension and location pair $(d,\ell)$ 
that maximizes the absolute value of the coefficient $w_{d,\ell}(A)$ from \eqref{eq:wavelet_coef},
\begin{align}
(d^{*},\ell^{*})\in\arg\max_{(d,\ell)}\left|w_{d,\ell}(A)\right|,\label{eq:opt_approach}
\end{align}
subject to basic safeguards such as minimum sample sizes in the
children, as described in online Appendix~A,
and optional stopping rules. Because the UH scaling and wavelet functions
form an orthonormal system in $L^{2}(\mu_{n})$ (see Section~\ref{sec:Theoretical-Results} and online Appendix~B),
the immediate reduction in empirical squared error achieved by splitting at
$(d,\ell)$ is exactly $\left|w_{d,\ell}(A)\right|^{2}$. Thus the criterion
\eqref{eq:opt_approach} selects the split with the largest one-step
reduction in empirical risk on node $A$.

Each greedy split \eqref{eq:opt_approach} requires the computation of wavelet coefficients for many possible splits.
To reduce the amount of computation, we will use an early stopping mechanism:
if the wavelet coefficient for an internal node $A$ falls below the threshold $b\,\hat{\sigma}$ for $\hat{\sigma}$ defined in \eqref{eq:sigmahat} and some user-chosen constant $b \geq 0$, then $A$ will be set to a leaf node that is no longer allowed to split.
This is optional if we ultimately fit just a single tree, but it will prove to be computationally crucial when we construct ensembles of wavelet trees as in Section~\ref{sec:Wavelet-Regression-ensemble}.
Figure~2 in online Appendix~L.1.2 provides empirical results on an image dataset for using both the soft-thresholding and early-stopping mechanisms.

After the binary partition tree is generated, a denoised signal is obtained by an empirical inverse transform on
the learned binary-tree partition $\mathcal{P}_n$ using possibly regularized versions of the wavelet coefficients. 
The transform at any location $\bm{x}$ is described as follows.
Let $A_{0}\supset A_{1}\supset\cdots\supset A_{S(\bm{x})}$ be the
sequence of nodes along the path from the root $A_{0}$ to the leaf $A_{S(\bm{x})}$ containing
$\bm{x}$.
For each internal node $A_{i}$ on this path $1,2,\cdot,S(\bm{x})$, we store the coefficient $w_{d_{i},\ell_{i}}(A_{i})$. 
For some regularized versions of these coefficients, which we will denote as 
$\tilde{w}_{d_{i},\ell_{i}}(A_{i})$,
we define the denoised signal
\begin{align}
\hat{f}_{\mathcal{P}_{n}}^{\text{UH}}(\bm{x})
=
\frac{1}{n}\sum_{i=1}^{n}y_{i}
+
\sum_{i=0}^{S(\bm{x})-1}
u_{i}\frac{\tilde{w}_{d_{i},\ell(A_{i})}(A_{i})}{c(A_{i})},
\quad
c(A_{i})
=
|A_{i}|^{1/2}
\left(\frac{|A_{i,l}^{(d_{i})}|}{|A_{i,r}^{(d_{i})}|}\right)^{u_{i}/2},\label{eq:recon_mean}
\end{align}
where $u_{i}=1$ if $\bm{x}$ is in the left child of $A_{i}$ and
$u_{i}=-1$ if $\bm{x}$ is in the right child. 
The normalization
$c(A_{i})$ corresponds to the UH normalization and ensures
orthonormality in $L^{2}(\mu_{n})$.
If $\tilde{w}_{d,\ell}(A)=w_{d,\ell}(A)$ for all splits in $\mathcal{P}_n$, and if the leaves of $\mathcal{P}_n$ each contain exactly one training location, then the original observed (noisy) signal is recovered.

The amount and quality of denoising will depend on the early stopping and how the wavelet coefficients are regularized in \eqref{eq:recon_mean}.
Our optimization approach will involve two main ingredients. 
The first ingredient is an estimate of the noise level based on a
robust scale measure. Let $\mathcal{W}_{\text{deep}}$ denote a
collection of coefficients at deep levels of the tree where the
signal is expected to be weak. We define
\begin{align}\label{eq:sigmahat}
\hat{\sigma}
=
\frac{\operatorname{MAD}\{w_{d,\ell}(A): (A,d,\ell)\in\mathcal{W}_{\text{deep}}\}}{0.6745},
\end{align}
where MAD is the median absolute deviation recommended by \citet{nason2008wavelet}. This estimator is widely
used in wavelet denoising and is stable in the presence of a small
number of large coefficients (See Sec \ref{sec:Theoretical-Results}).
The second ingredient is a separate threshold applied at
reconstruction. Once the tree is grown and the coefficients
$w_{d,\ell}(A)$ are stored at each internal node, we compute 
soft-thresholded coefficients
\begin{align}\label{eq:soft-threshold}
\tilde{w}_{d,\ell}(A)
=
\operatorname{sign}\bigl(w_{d,\ell}(A)\bigr)\,
\bigl(|w_{d,\ell}(A)|-\tau\bigr)_{+},
\end{align}
with threshold $\tau=a\,\hat{\sigma}\sqrt{2\log n}$ for some user-chosen constant $a \geq 0$. 
More sophisticated ways of thresholding can be used; we will leave this exploration for future work.

\subsection{Bayesian modeling and sampling on UHWTs}
\label{sec:bayesianapproach}

\begin{figure}[ht]
\centering

\includegraphics[width=0.216\textwidth]{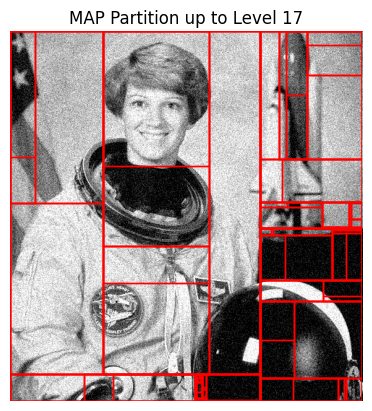}\includegraphics[width=0.78\textwidth]{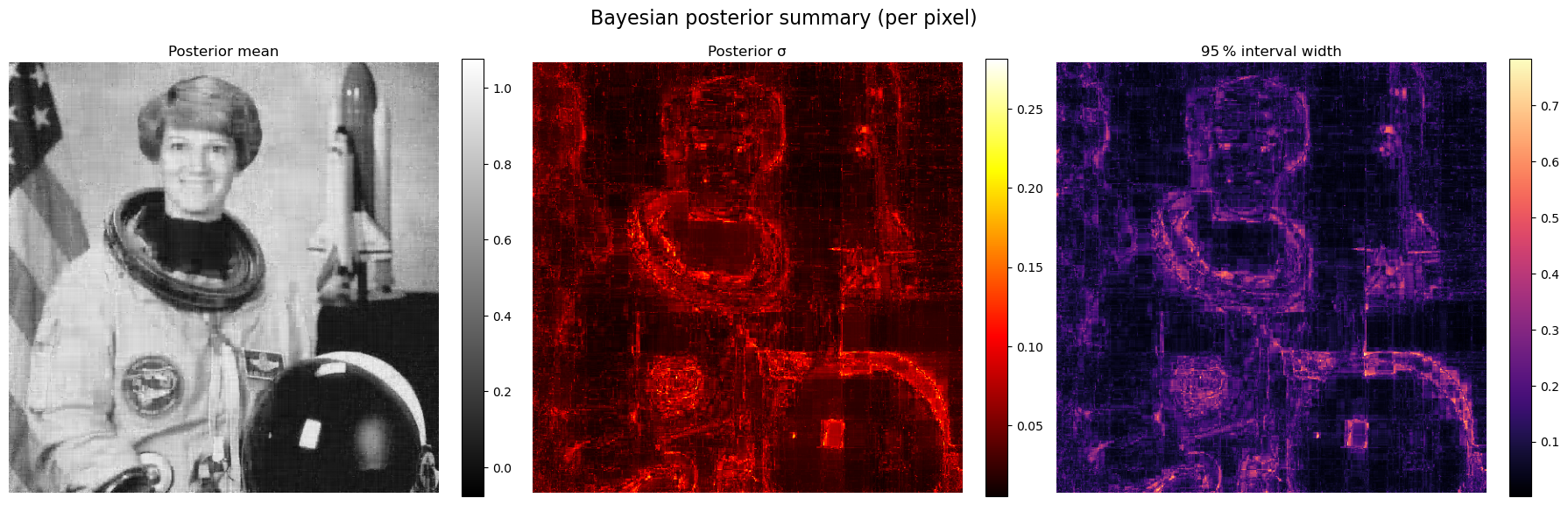}

\caption{First panel: MAP partition overlaid on the input image. 
Posterior mean (second panel), standard
deviation (third panel), and 95\% credible interval width (fourth panel) computed
from 500 full posterior draws of a Bayesian additive UH-tree model
with 200 trees and geometric early-stopping prior probability $0.5^{\text{node depth}}$.
}
\label{fig:comparison_basis-2} 
\end{figure}

The optimization approach will produce a point estimate for the underlying signal function without any associated uncertainty.
In contrast, a Bayesian modeling and sampling approach will induce stochastic sampling from the corresponding posterior distribution, directly allowing uncertainty quantification as shown in Figure~\ref{fig:comparison_basis-2}. 
The prerequisite is to specify a prior distribution on the space of UHWTs. 
Earlier literature introduced a number of priors for wavelets defined on fixed trees splitted at mid-points \citep{li2021learning}. 
Here we allow unbalanced splits, which will be critical for our later ensembling.

Here we describe our prior, which we call a random UHWT (RUHWT) model.
We start at the root node $A_0 = \{\Omega\}$. 
For any node $A$ that can be split, we draw a split dimension $D(A)$ from a set $\mathcal{D}(A)$ of admissible split dimensions according to probability mass function $\{\lambda_{d}(A)\colon d \in \mathcal{D}(A)\}$ and then draw a split location
$L(A)$ from a distribution $B_{A,d}$ on the
unit interval. 
The following theorems will allow $B_{A,d}$ to have full support on $[0,1]$, but 
for a grid-domain application, each $B_{A,d}$ is a discrete distribution on the set $\{i/n_A\colon i=1,\ldots,(n_A-1)\}$; 
we could then specify median splits by setting each $B_{A,d}$ to be a point mass at $0.5$ (assuming $n_A$ is divisible by two).

Given a tree $T$ and its UH system, we specify a Bayesian wavelet
model. In the simplest case,
we assume that the UH coefficients are independent and identically
distributed under a parametric family $p_{i}(w,z\mid\bm{\phi})$ 
--- see \eqref{eq:wavelet-model} ---
where $z$ are local latent variables and $\bm{\phi}$ are
hyperparameters.
The following theorem expresses the marginal posterior for $T$ 
in terms of updated split probabilities that combine
the prior weights $\lambda_{d}(A)$, $B_{A,d}$ and local marginal
likelihoods based on the same coefficients $w_{d,\ell}(A)$ that
drive the optimization rule. 

\begin{theorem}
\label{thm:posterior-nolatent}
Suppose $T$ has a RUHWT prior with split dimension probabilities
$\{\lambda_{d}(A)\colon A\in\mathcal{A},d\in\mathcal{D}(A)\}$ and
split location distributions $\{B_{A,d}\colon A\in\mathcal{A},d\in\mathcal{D}(A)\}$
on $[0,1]$, with zero probability of creating a child with no
training locations. Suppose that, given $T$ and hyperparameters $\bm{\phi}$, the wavelet coefficients satisfy
\begin{align} \label{eq:wavelet-model}
(w_{i},z_{i})\;\ind\;p_{i}(w,z\mid\bm{\phi}), \quad \text{for all }A_i \in T.
\end{align}
Then the marginal
posterior of $T$ is a RUHWT with posterior splitting probabilities
\begin{align}
P(D(A)=d\mid \bm{y})
&\propto
\lambda_{d}(A)\int_0^1
M_{d,\ell}(A)\,
\Phi\bigl(A_{l}^{(d,\ell)}\bigr)\,
\Phi\bigl(A_{r}^{(d,\ell)}\bigr)\,\mathrm{d}B_{A,d}(\ell),\label{eq:posterior_dimension}
\end{align}
and conditional split location distribution
\begin{align}
\mathrm{d}B_{A,d}(\ell\mid D(A)=d,\bm{y})
&\propto
\mathrm{d}B_{A,d}(\ell)\,
M_{d,\ell}(A)\,
\Phi\bigl(A_{l}^{(d,\ell)}\bigr)\,
\Phi\bigl(A_{r}^{(d,\ell)}\bigr),\label{eq:posterior_location}
\end{align}
where
$M_{d,\ell}(A_i)
=
\int p_{i}(w_{d,\ell}(A_i),z\mid\bm{\phi})\mathrm{d}z$
is the marginal likelihood contribution of the coefficient at node
$A_i$ if it is split at $(d,\ell)$, and $\Phi(A)$ satisfies
\begin{align}
\Phi(A)
=
\sum_{d\in D(A)}\lambda_{d}(A)\int_0^1
M_{d,\ell}(A)\,
\Phi\bigl(A_{l}^{(d,\ell)}\bigr)\,
\Phi\bigl(A_{r}^{(d,\ell)}\bigr)\,\mathrm{d}B_{A,d}(\ell)\label{eq:phi2}
\end{align}
when $A$ is not atomic and $\Phi(A)=1$ when $A$ is atomic.
\end{theorem}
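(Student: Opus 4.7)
}
The plan is to exploit the recursive structure of the RUHWT prior and the conditional independence of wavelet coefficients at distinct nodes to factor the joint density of $(T,\bm{y})$ as a product of node-level contributions, then marginalize bottom-up using $\Phi(A)$ as the marginal likelihood of the subtree rooted at $A$. This will simultaneously yield both the recursion \eqref{eq:phi2} and the posterior formulas \eqref{eq:posterior_dimension}--\eqref{eq:posterior_location} from a single application of Bayes' rule at each internal node.

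First, I would write down the joint law of $(T,\bm{y})$ explicitly. A realization of $T$ is a binary tree together with a dimension-location pair $(d(A),\ell(A))$ at each internal node $A$. Under the RUHWT prior, the conditional probability of these splits factors as a product over the internal nodes of $\lambda_{d(A)}(A)\,\mathrm{d}B_{A,d(A)}(\ell(A))$. Under the likelihood assumption \eqref{eq:wavelet-model}, the $(w_{A},z_{A})$ across internal nodes are conditionally independent given $T$, so after integrating out $z_A$ the contribution of node $A$ becomes $M_{d(A),\ell(A)}(A)$. Thus, relative to an appropriate product reference measure,
\begin{align*}
p(T,\bm{y}\mid\bm{\phi})
=
\prod_{A\in\mathcal{I}(T)}
\lambda_{d(A)}(A)\,M_{d(A),\ell(A)}(A)\,\mathrm{d}B_{A,d(A)}(\ell(A)),
\end{align*}
where $\mathcal{I}(T)$ denotes the internal nodes of $T$. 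The no-empty-child condition ensures all factors are well defined.

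Second, I would define $\Phi(A)$ as the integral of this joint density over all RUHWT extensions of the tree rooted at $A$. Atomic nodes contribute no further factors, giving the base case $\Phi(A)=1$. For a non-atomic $A$, I condition on the first split $(d,\ell)$ at $A$ and peel it off the product; the remaining factors split into two independent products, one for each subtree, which by definition integrate to $\Phi(A_l^{(d,\ell)})$ and $\Phi(A_r^{(d,\ell)})$ respectively. Summing over $d$ and integrating over $\ell$ against the prior weights $\lambda_d(A)$ and $\mathrm{d}B_{A,d}(\ell)$ yields exactly the recursion \eqref{eq:phi2}. This step crucially uses that the subtrees rooted at $A_l^{(d,\ell)}$ and $A_r^{(d,\ell)}$ are conditionally independent under the prior and that the likelihood factorizes across them.

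Third, I would apply Bayes' rule node-by-node. The posterior density of the split parameters at $A$, given the data and the tree structure strictly outside $A$, is proportional to
$\lambda_d(A)\,M_{d,\ell}(A)\,\Phi(A_l^{(d,\ell)})\,\Phi(A_r^{(d,\ell)})\,\mathrm{d}B_{A,d}(\ell)$,
with normalizing constant $\Phi(A)$ by \eqref{eq:phi2}. Marginalizing over $\ell$ gives \eqref{eq:posterior_dimension}, and conditioning on $D(A)=d$ gives \eqref{eq:posterior_location}. Because the posterior factor at $A$ depends only on quantities local to $A$ and its two children (through $\Phi(A_l),\Phi(A_r)$, which in turn depend only on the subtree rooted at that child), the posterior enjoys the same product-over-nodes structure as the prior, so the marginal posterior of $T$ is again a RUHWT with the claimed split probabilities.

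The main obstacle, and the step that requires care rather than routine calculation, is verifying that the posterior truly has the recursive RUHWT form rather than inducing long-range dependence across the tree. This hinges on showing that once the local factor $\lambda_d(A)M_{d,\ell}(A)\Phi(A_l)\Phi(A_r)\,\mathrm{d}B_{A,d}(\ell)$ is normalized by $\Phi(A)$, the remaining posterior density over the rest of the tree again factors into independent local terms at the children; this follows from the definition of $\Phi$ as a product-integral, but makes clear that the argument is essentially a Markov-tree property applied simultaneously at every internal node, which is the content of the theorem.
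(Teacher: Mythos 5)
Your proposal is correct and follows essentially the same route as the paper: identify $\Phi(A)$ with the marginal likelihood of the data in the subtree rooted at $A$, establish the recursion \eqref{eq:phi2} by a bottom-up induction that conditions on the split $(d,\ell)$ at $A$ and exploits the conditional independence of the two child subtrees, and then obtain \eqref{eq:posterior_dimension}--\eqref{eq:posterior_location} by a local application of Bayes' rule with $\Phi(A)$ as the normalizer. The only presentational difference is that the paper proves the more general latent-state version (Theorem E.1, adapting Theorem~2 of \citet{li2021learning}) and obtains this theorem as a special case, whereas you argue the no-latent-state case directly from the product factorization of $p(T,\bm{y}\mid\bm{\phi})$.
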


Theorem~\ref{thm:posterior-nolatent} provides a posterior distribution on the space of axis-aligned, binary-tree partitions; such a partition will establish a one-to-one correspondence between the leaves of the tree and the training locations $\{\bm{x}_i\}$ if the prior distributions on the splitting dimensions and locations are positive for admissible split dimensions and locations. 
A signal can then be constructed from the partition's associated UH coefficients, but if the coefficients are not regularized, then the constructed signal will be exactly the original observed signal.
We can denoise the signal by introducing latent state variables at each node that, for example, determine whether the node will be split or how the resulting wavelet coefficient will contribute to the reconstructed signal (this is analogous to our optimization approach's splitting safeguards, early stopping mechanism, and soft thresholding). 
Theorem~E.1 (stated and explained in online Appendix~E for space reasons) extends Theorem~\ref{thm:posterior-nolatent} to incorporate latent states.

Theorems~\ref{thm:posterior-nolatent} and E.1
generalize recursive partition posteriors in \citet{li2021learning}
by allowing data-driven split locations and UH systems on irregular
designs. 
The same coefficients $w_{d,\ell}(A)$ that determine the
greedy optimization rule now enter through the marginal likelihood
factors $M_{d,\ell}(A)$ and guide the posterior choice of splits.
In the optimization framework, regularization is applied through
MAD-based thresholds applied to coefficients on a fixed tree. In the
Bayesian framework, regularization is applied through 
depth-dependent split probabilities and the latent states. 

When the prior on
splits is diffuse and the likelihood is sharply peaked at the
largest $|w_{d,\ell}(A)|$ (we provide spike-and-slab and Laplace priors in implementation), the posterior split probabilities
concentrate near the maximizing pair of the optimization rule, so
the greedy algorithm appears as a limiting case. 
This parallel viewpoint
will be important when we construct ensembles and extend the methods
to triangulable manifolds in later sections.

\section{Wavelet Trees for Triangulable Manifold Domains}
\label{sec:Wavelet-Trees-general domain}

We now describe our second major step in generalizing UHWTs. The goal is to define
wavelet trees on spheres or more general triangulable manifolds 
while retaining the key features of the Euclidean construction:
orthogonality in $L^{2}(\mu_{n})$, exact reconstruction at the
sampled locations, and the ability to support both optimization-based
and Bayesian tree building. These properties are crucial for the
ensemble methods on manifolds that will be developed in later
sections.

A curved surface has no coordinate axes that provide natural directions along which to split. 
Our approach thus rests on the fact that any compact orientable two-dimensional
manifold can be approximated, and in fact partitioned, by a finite
simplicial complex of geodesic triangles, which exists on a \emph{triangulable manifold} 
\citep{edelsbrunner1994triangulating}. 
In most practical scenarios, the manifold of interest is triangulable. We therefore assume that the
domain $\mathcal{M}$ is a compact orientable two-dimensional
manifold endowed with an area measure $\mathcal{H}^{2}$,
and that we are given an initial triangulation
whose faces are geodesic triangles. On the sphere, we take the initial
triangulation to be a geodesic polyhedron, such as the icosahedron
\citep{schroder1995spherical}; on a generic surface, we use any
admissible simplicial complex. The same ideas apply to \emph{any manifold} 
that admits such a triangulation.

A geodesic triangle can be bisected
by connecting a point $\xi$ on an edge $e\in\{1,2,\cdots,D\}$
to the opposite vertex via a geodesic arc. 
This split's UH coefficient is defined by
\begin{align}\label{eq:UH_traingulable}
    w(A,e,\xi)
    &=
    \sqrt{\frac{\mu_n(A_{1})\,\mu_n(A_{2})}{\mu_n(A)}}\,
    \bigl(\bar{y}_{A_{1}}-\bar{y}_{A_{2}}\bigr),
    \qquad
    \bar{y}_{A_{j}}
    =
    \frac{1}{\mu_n(A_{j})}\int_{A_{j}} y\,\mathrm d\mu_{n},
\end{align}
where $A_{1}$ and $A_{2}$ are the two child triangles that result from the split,
and $\mu_{n}=n^{-1}\sum_{i=1}^{n}\delta_{\bm{x}_{i}}$ is the
empirical measure of the observed locations
$\bm{x}_{1},\ldots,\bm{x}_{n}\in\mathcal{M}$.
The associated UH wavelet function $\psi_{A,e,\xi}$ is defined as in
\eqref{eq:UHbasis_function} with $|B|$ interpreted as $\mu_n(B)$.
With this choice, the UH scaling and wavelet functions form an orthonormal
system in $L^{2}(\mu_{n})$ on the triangulated manifold, exactly as in the
rectangular case. We still use the geometric area $|A| = \int_A \mathrm d\mathcal H^{2}$
to control triangle shape regularity and to specify admissible splits, but
orthonormality is always defined with respect to $\mu_n$.
Thus each triple $(A,e,\xi)$
defines a UH atom on the manifold in exactly the same way that
$(A,d,\ell)$ defines a UH atom in the Euclidean setting. For visualization, we discuss $D=3$ hereafter, but emphasize that our construction works for any $D\in \mathbb{N}$.

Compared to a balanced Haar construction on a fixed spherical grid, this
edge–cut representation together with the unbalanced normalization have two
advantages on manifolds. First, the split point $\xi$ can move freely along an
edge, so triangles can be cut in a strongly unbalanced way when the response
varies sharply along a boundary while remaining intact when the response is nearly
constant. Second, because the normalization is defined with respect to
$\mu_n$, the associated wavelet atoms still form an orthonormal system in
$L^2(\mu_n)$, so each coefficient $w(A,e,\xi)$ measures the empirical energy
of the contrast between the two child triangles. In practice, this leads each
UH tree to concentrate depth where the regression function exhibits localized
structure on the manifold and, when aggregated in an ensemble, to yield sparse
representations and improved predictive performance on the spherical regression
problems in Section~\ref{sec:Experiments}.

The optimization and Bayesian tree building rules extend to this
setting. In the optimization approach, at a node $A$ we
choose the admissible edge-cut pair $(e,\xi)$ 
that maximizes $|w(A,e,\xi)|$. 
Splitting along $(e,\xi)$ decreases empirical squared
error by $|w(A,e,\xi)|^{2}$, so the greedy rule has the same interpretation
as in the rectangular case. In the Bayesian approach, the RUHWT
prior can now be defined on sequences of edge-cut pairs
$(e,\xi)$, with $\lambda_{e}(A)$ and $B_{A,e}$ replacing
$\lambda_{d}(A)$ and $B_{A,d}$ in
Theorems~\ref{thm:posterior-nolatent} and E.1. The
same coefficients $w(A,e,\xi)$ enter the marginal likelihood factors
$M_{A,e,\xi}$, so the optimization and Bayesian constructions share
the same empirical contrasts. 

Figure~\ref{fig:triangulation_comp-1} shows
four splitting rules for triangles. 
The rule called \texttt{balance} is a classical
mid-edge bisection: the longest edge of the parent triangle is
bisected at its geodesic midpoint, and the far vertex is joined to
this midpoint by a geodesic segment. The rule called \texttt{adapt}
is also a mid-edge bisection, but it chooses the edge
whose geodesic midpoint produces the largest magnitude UH coefficient
among all three edges. The rule
called \texttt{adapt\_vertex} is fully point adaptive. Each interior
data point in $A$ is projected orthogonally onto each edge, and the
pair consisting of an edge and a projection point that maximizes
$|w(A,e,\xi)|$ is selected; the geodesic segment connecting this
projection point to the opposite vertex defines the two children.
The rule called \texttt{balance4} is a canonical four-child split
that connects the three edge midpoints to form four triangles inside
$A$ \citep{schroder1995spherical}. 
The \texttt{adapt} and \texttt{adapt\_vertex}
rules favor anisotropic adaptation; repeated splits refine precisely
those edges where the empirical function varies most, while nearby
regions with little variation remain coarse. The \texttt{balance}
and \texttt{balance4} rules favor shape regularity. In all cases,
after the tree is grown we apply the same reconstruction-time
thresholding of UH coefficients described in \eqref{eq:UH_traingulable}, with thresholds calibrated
from a MAD estimate of noise on fine scale coefficients. This keeps
the partition from reflecting noise or irregular sampling artifacts
and ensures that deep refinements correspond to genuine local
variation.

\begin{figure}[h]
\centering
\includegraphics[width=0.25\textwidth]{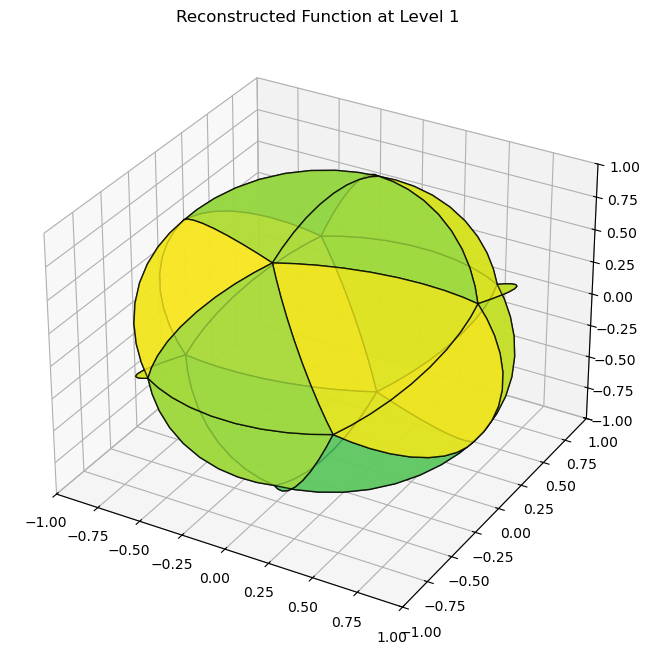}
\includegraphics[width=0.25\textwidth]{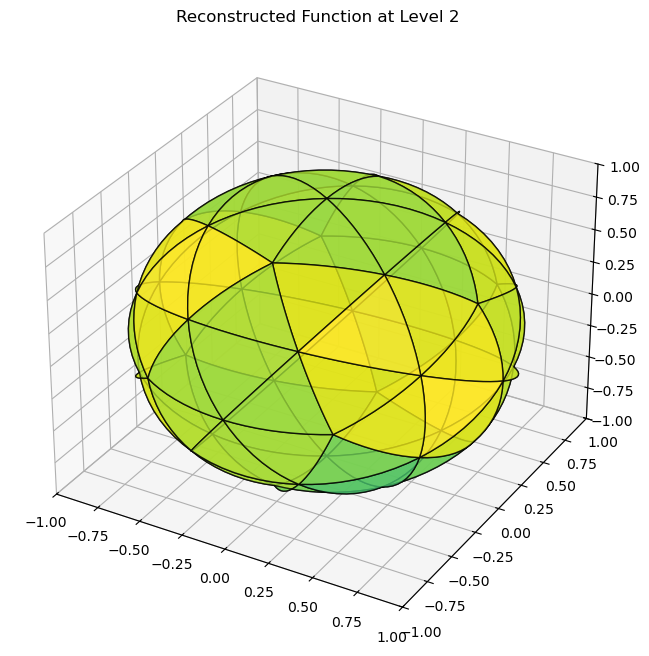}
\includegraphics[width=0.25\textwidth]{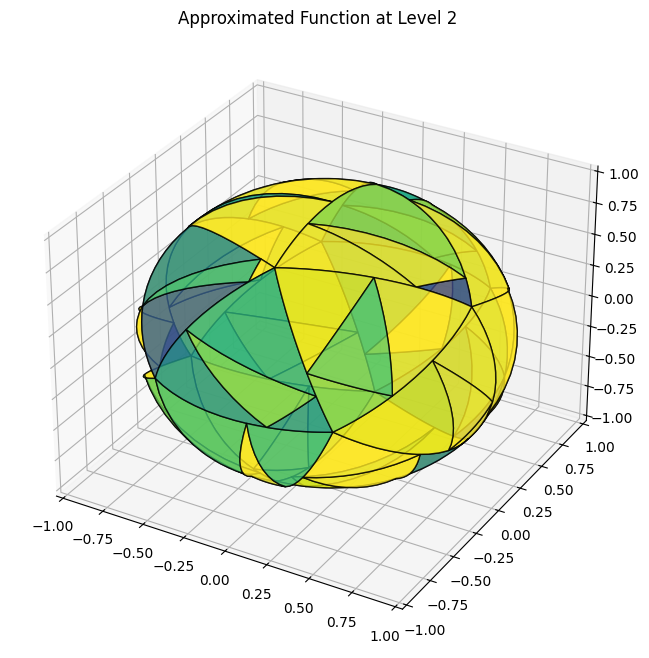}\\
\includegraphics[width=0.9\textwidth]{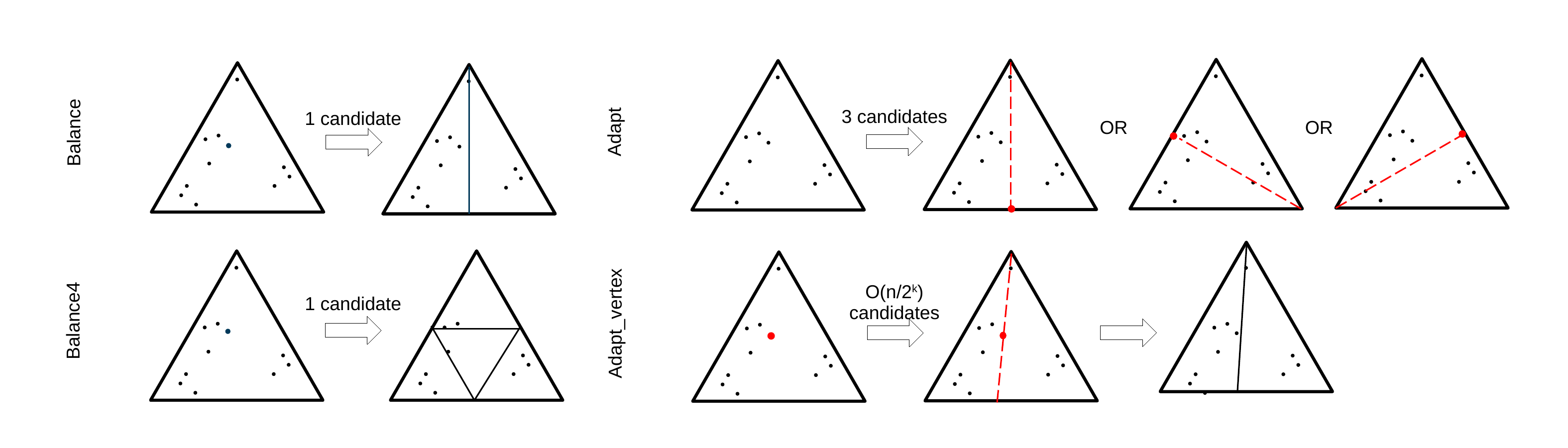}
\caption{Examples of triangulable manifolds 
and four approaches (\texttt{balance}, \texttt{balance4}, \texttt{adapt},
\texttt{adapt\_vertex}) to subdivide a triangle. Points represent
data; the red point indicates the pivot point, and dashed lines show
the cutting edge. The \texttt{balance} and \texttt{balance4} schemes
do not depend on data. The \texttt{adapt} scheme uses data to choose
which edge to bisect at its midpoint. The \texttt{adapt\_vertex}
scheme uses data to choose both the edge and the cut point.}
\label{fig:triangulation_comp-1}
\end{figure}

Table~1 in online Appendix~J 
summarizes the asymptotic training complexity for these four splitting
rules (online Appendix~L.3.2 has an empirical comparison). 
The main computational cost arises from determining which
sample points lie in each child triangle. 
A triangulation and barycentric membership test suffice for this.

To summarize, the triangulable manifold constructions extend classical tree methods to more general domains, and also extend the UH tree framework from (hyper)rectangular grid data to triangulable manifolds. 
Both the optimization and Bayesian constructions
extend naturally: in each case the same UH coefficients guide split
selection, and regularization is imposed either by MAD-based
thresholding at reconstruction or by biased priors and posterior
shrinkage in the RUHWT model. 
(Online Appendix~K compares our UHWT model on a triangulable manifold to other existing tree-based methods for manifold domains.)
These constructions will form the
basis for the ensemble methods developed in the next section.

\section{Additive Unbalanced Haar Wavelet Tree Ensembles}
\label{sec:Wavelet-Regression-ensemble}
Ensembles improve prediction by combining many weak learners. 
Random Forests typically grow deep trees, which
can be less compatible with sparse multiscale representations. 
In contrast, classical boosting and BART regularize by restricting the number of leaf nodes in each tree.
Unlike tree fits used in previous methods, UHWTs provide orthogonal expansions with wavelet coefficient-based shrinkage. We thus aim to construct both a boosting algorithm and a Bayesian method analogous to BART, using the UHWT as the weak learner. 

A single UH tree $T$ yields the orthogonal expansion
\begin{align}
g(\bm{x})
&=
\bar{y}_{\Omega}+\sum_{j\in J(T)}\tilde{w}_{j}\,\psi_{j}(\bm{x}).
\label{eq:single_tree_expansion}
\end{align}
Here $\bar{y}_{\Omega}\coloneqq n^{-1}\sum_{i=1}^{n}y_i$ is the global mean;
$\mathcal{P}(T)$ denotes the recursive partition induced by $T$; $J(T)$
indexes the internal nodes/splits of $T$
and hence the non-constant UH atoms; $\{\psi_j\}_{j\in J(T)}$ are those UH
atoms, normalized to be orthonormal in $L^2(\mu_n)$ as in
Section~\ref{sec:optimizationapproach}; and $\tilde w_j$ are (possibly
shrunk) empirical coefficients.

\subsection{Boosting for UHWTs}
\label{subsec:Boosting-the-wavelet}

In general, forward stagewise boosting \citep{Friedman01} builds an ensemble model $f_G(\cdot), G \in \mathbb{N}$ by iteratively adding ``weak'' learners $g_t(\cdot), t=1,\ldots,G$ to improve predictions.
Our weak learners will belong to the class $\mathcal{G}$ of UH tree fits built by the optimization approach of Section~\ref{sec:optimizationapproach} (using the UH coefficients
\eqref{eq:wavelet_coef} and the greedy split rule \eqref{eq:opt_approach}).
The fits are then further regularized by a pre-chosen learning rate $\eta\in(0,1]$.
Hence, our procedure is forward stagewise boosting specialized to UH trees;
the difference from CART-based boosting is that each $g_t$ comes with an
orthonormal UH expansion and admits coefficient shrinkage.

Let $\{(\bm{x}_i,y_i)\}_{i=1}^n$ be training data on $\Omega$ with empirical
measure $\mu_n$, and let $y$ denote the observed signal viewed as a function
on the sample. We use the squared-error loss $\frac{1}{2n}\sum_{i=1}^n(y_i-f(\bm{x}_i))^2 = \frac{1}{2}\,\|y-f\|_{L^2(\mu_n)}^2$
and initialize $f_0(\bm{x})\equiv \bar y_\Omega$.
At each stage $t=1,\ldots,G$, 
our ensemble model $f_t$ will be updated as
\begin{align}
f_t(\bm{x})
&=
f_{t-1}(\bm{x})+\eta\,g_t(\bm{x}),
\label{eq:boosting_update}
\end{align}
where the weak learner $g_t\in\mathcal{G}$ is fit to residuals
$r_{t-1,i} \coloneqq y_i-f_{t-1}(\bm{x}_i), i=1,\ldots,n,$
by (approximately) solving
\begin{align}
g_t
&\in
\arg\min_{g\in\mathcal{G}}
\frac{1}{2n}\sum_{i=1}^n\bigl(r_{t-1,i}-g(\bm{x}_i)\bigr)^2.
\label{eq:stagewise_objective}
\end{align}
A UHWT weak learner built on the residuals has the form
\begin{align}
g_t(\bm{x})
&=
\sum_{j\in J(T_t)}\tilde c^{(t)}_{j}\,\psi^{(t)}_{j}(\bm{x}),
\quad
\tilde c^{(t)}_{j}
=
\operatorname{sign}\Bigl(c^{(t)}_{j}\Bigr)\Bigl(\left|c^{(t)}_{j}\right|-\tau_t\Bigr)_+, 
\quad
c^{(t)}_{j}
=
\langle r_{t-1},\psi^{(t)}_{j}\rangle_{L^2(\mu_n)}.
\label{eq:g_t_residual}
\end{align}
Here $T_t$ is the fitted UH tree at stage $t$; $J(T_t)$ indexes its internal
nodes/splits; $\psi_j^{(t)}$ are the corresponding UH atoms (orthonormal in
$L^2(\mu_n)$); $\langle\cdot,\cdot\rangle_{L^2(\mu_n)}$ is the empirical inner
product; and $(u)_+\coloneqq \max\{u,0\}$.
The threshold $\tau_t$ is calibrated from a MAD estimate of noise on fine-scale
residual coefficients (as in Section~\ref{sec:optimizationapproach}).

To connect \eqref{eq:stagewise_objective} to UH orthogonality, fix a candidate
tree $T$ and write $g$ supported on $T$ as
$g(\bm{x}) = \sum_{j\in J(T)}\theta_j\,\psi_j(\bm{x})$.
Substituting this into \eqref{eq:stagewise_objective}, orthonormality in $L^2(\mu_n)$ yields
\begin{align}
\frac{1}{2}\|r_{t-1}-g\|_{L^2(\mu_n)}^2
&=
\frac{1}{2}\|r_{t-1}\|_{L^2(\mu_n)}^2
-\sum_{j\in J(T)}\theta_j\,\langle r_{t-1},\psi_j\rangle_{L^2(\mu_n)}
+\frac{1}{2}\sum_{j\in J(T)}\theta_j^2.
\label{eq:orthogonal_expansion_residual}
\end{align}
For fixed $T$, the minimizer over $\theta$ is
$\theta_j^\ast = \langle r_{t-1},\psi_j\rangle_{L^2(\mu_n)}$ for all $j\in J(T)$,
and the associated risk reduction has the same form as in orthogonal wavelet
shrinkage \citep{donoho1995wavelet}.

Using the UHWT fitting (See online Appendix~G.4), on the sphere we draw $R_t\in\mathrm{SO}(3)$ from Haar measure and build $T_t$ (hence $\psi_j^{(t)}$)
on the rotated locations $R_t\bm{x}_i$. The residuals $r_{t-1,i}$ remain evaluated
at $\bm{x}_i$, but the learner is pulled back by composition:
\begin{align}
g_t(\bm{x})
&=
\sum_{j\in J(T_t)}\tilde c^{(t)}_j\,\psi^{(t)}_j(R_t\bm{x}),
\qquad
f_t(\bm{x})
=
f_{t-1}(\bm{x})+\eta\,g_t(\bm{x}).
\label{eq:boosting_sphere}
\end{align}
On a general triangulable manifold $\mathcal{M}$, we replace $R_t$ in \eqref{eq:boosting_sphere} by a random
intrinsic reparameterization $\Phi_t$.
On spheres, we provide two options: Euclidean PCA and sPCA \citep{hrluo_2021c}, as shown later in Section \ref{sec:Experiments}.

\subsection{Bayesian backfitting sampling for RUHWT ensembles}
\label{subsec:Backfitting-the-wavelet}

We now describe a Bayesian ensemble that builds on the RUHWT model of
Section~\ref{sec:bayesianapproach} and uses Bayesian backfitting
\citep{hastie2000bayesian,chipman2010bart}. 
Let $m$ denote the number of component trees in the ensemble. We model
\begin{align}
f_m(\bm{x})
&=
\mu+\sum_{t=1}^{m} g_t(\bm{x};\theta_t),
\label{eq:additive_model}
\end{align}
where $\mu$ is a global intercept and each $\theta_t$ collects the $t$th tree
structure, its UH coefficients, and any local scales under the RUHWT prior.
Conditional on $f_m$,
\begin{align}
y_i
&=
f_m(\bm{x}_i)+\epsilon_i,
\qquad
\epsilon_i\sim N(0,\sigma^2),
\quad i=1,\ldots,n.
\label{eq:observation_model}
\end{align}
We place a prior $p(\mu,\sigma^2)$ on global parameters and independent RUHWT
priors on $\theta_1,\ldots,\theta_m$ as in Section~\ref{sec:bayesianapproach}.
A Gibbs sampler cycles through components and conditionally updates $\theta_t$ given the others
by fitting a RUHWT to the current partial residual.

Given a pre-defined number of sweeps, at component $t\in\{1,\ldots,m\}$ and sweep $s$ we define
\begin{align}
r^{(s)}_{t,i}
&=
y_i-\mu^{(s)}-\sum_{k\neq t} g_k\big(\bm{x}_i;\theta_k^{(s)}\big),
\qquad i=1,\ldots,n,
\label{eq:backfit_residual}
\end{align}
and update $\theta_t$ from its conditional posterior
\begin{align}
p\big(\theta_t\mid r_t^{(s)},\sigma^2\big)
&\propto
p(\theta_t)\,
\prod_{i=1}^n
\exp\biggl\{-\frac{1}{2\sigma^2}\Bigl(r^{(s)}_{t,i}-g_t(\bm{x}_i;\theta_t)\Bigr)^2\biggr\}.
\label{eq:conditional_posterior_theta}
\end{align}
We use Metropolis-within-Gibbs proposals that locally modify $\theta_t$ and
accept/reject via the usual Metropolis ratio. (The ``latent-state Markov tree''
regularization referenced in Theorem~E.1 in online Appendix~E is the same as in
Section~\ref{sec:bayesianapproach} and is distinct from the UH trees $T_t$.)

Orthogonality yields simple coefficient updates in a fixed tree. For example,
if the coefficient on atom $\psi_j^{(t)}$ has prior $w_j^{(t)}\sim N(0,\tau_t^2)$,
then under \eqref{eq:observation_model} its conditional posterior satisfies
\begin{align}
\operatorname{Var}\Bigl(w_j^{(t)}\mid T_t,r_t^{(s)},\sigma^2\Bigr)
&=
\biggl(\frac{1}{\tau_t^2}+\frac{1}{\sigma^2}\biggr)^{-1},
\\
\mathbb{E}\Bigl[w_j^{(t)}\mid T_t,r_t^{(s)},\sigma^2\Bigr]
&=
\operatorname{Var}\Bigl(w_j^{(t)}\mid T_t,r_t^{(s)},\sigma^2\Bigr)\,
\frac{\langle r_t^{(s)},\psi_j^{(t)}\rangle_{L^2(\mu_n)}}{\sigma^2}.
\label{eq:posterior_coef_normal}
\end{align}
More flexible priors (e.g., spike-and-slab or global--local shrinkage) can be
handled via the latent-state construction in Theorem~E.1 in online Appendix~E.

\subsection{Regularization}
\label{subsec:Regularization-comparison}

Regularization enters (i) during tree growth, (ii) during reconstruction from a
fixed tree, and (iii) at the ensembling stage.
When growing a tree, the optimization scheme regularizes through split
selection and stopping rules \eqref{eq:opt_approach}; the Bayesian scheme
regularizes through RUHWT priors and the latent states of
Theorem~E.1 (Section~\ref{sec:bayesianapproach}).
During reconstruction, regularization acts directly on UH coefficients.
In the optimization framework, we threshold coefficients as in
\eqref{eq:soft-threshold} before reconstructing (cf.\ \eqref{eq:recon_mean});
in the Bayesian framework, posterior means/draws of coefficients are shrunk by
their priors, the likelihood, and latent states.

At the ensembling stage, boosting reduces 
the loss $\frac{1}{n}\sum_{i=1}^n(y_i-f(\bm{x}_i))^2$
via updates
\eqref{eq:boosting_update}, where each $g_t$ in \eqref{eq:g_t_residual} is a
residual-fitted UH tree with coefficient shrinkage, and $\eta$ limits each
tree's contribution. Bayesian backfitting uses the additive model
\eqref{eq:additive_model}--\eqref{eq:posterior_coef_normal}, updating each component
on the partial residual \eqref{eq:backfit_residual}; as residuals shrink, later
posterior draws concentrate near small trees with small coefficients, so the
ensemble behaves like a structured shrinkage prior over many UH coefficients.

Classical tree ensembles regularize mainly via tree size (e.g., pruning) or leaf
priors \citep{hrluo_2022e,hastie1990shrinking,breiman1984classification,zhang2025minimax}. In contrast, UH ensembles always operate in orthogonal UH systems induced
by the learned partitions and regularize explicitly at the split, coefficient,
and ensemble stage using the same wavelet contrasts that drive
\eqref{eq:opt_approach}, \eqref{eq:recon_mean}, and \eqref{eq:additive_model}.

\section{Theoretical Results}
\label{sec:Theoretical-Results}

The recursive constructions on rectangular and triangulable domains
use the same principle: 
starting with a base partition $\mathcal{P}_0$ 
(typically with $|\mathcal{P}_0|=1$ cell for a rectangular domain or $|\mathcal{P}_0|=20$ cells for a spherical domain using an icosahedron base partition),
each split of a cell creates two
children and introduces exactly one new UH ``detail'' supported on the
parent that is orthonormal in $L^{2}(\mu)$, where $\mu=\mu_n$ is the
empirical measure on the sample points. After all splits, we obtain a
partition $\mathcal{P}$ with $|\mathcal{P}|$ cells and $M=|\mathcal{P}|-|\mathcal{P}_0|$
non-constant UH details.

Let $\mathcal{P}=\{A_{1},\dots,A_{|\mathcal{P}|}\}$ be a partition obtained
by admissible splits. Each split
$s\colon A\to(A_{1},A_{2})$ defines an UH atom
\[
\psi_{s}
=
\sqrt{\frac{\mu(A_{1})\mu(A_{2})}{\mu(A)}}\left(\frac{\mathbf{1}_{A_{1}}}{\mu(A_{1})}-\frac{\mathbf{1}_{A_{2}}}{\mu(A_{2})}\right),
\quad
\langle\psi_{s},1\rangle_{L^{2}(\mu)}=0,\quad
\|\psi_{s}\|_{L^{2}(\mu)}=1.
\]
Collecting all such atoms for $\mathcal{P}$ gives
$\{\psi_{j}\}_{j=1}^{M}$, which, together
with the constant, span the piecewise-constant functions on
$\mathcal{P}$. 
For each $j=1,\ldots,M$, let $w_{j}=\langle Y,\psi_{j}\rangle_{L^{2}(\mu)}$.
For any $S\subset\{1,\dots,M\}$, let $f_{S}$ be the $L^{2}(\mu)$-projection of $f$ onto
$\mathrm{span}\{\psi_{j}\colon j\in S\}$, which means that at most $|S|$ out of all possible wavelet basis functions have non-zero coefficients.

\begin{theorem}[Oracle bounds on a fixed partition]
\label{thm:fixed-partition}
Let $(\mathcal{D},\nu)$ be either a rectangular grid with its
reference measure or a triangulable domain with the empirical measure
on the sample points. Let the recursive rule of this paper produce a
partition $\mathcal{P}$ and corresponding orthonormal UH details
$\{\psi_{j}\}_{j=1}^{M}$ as above. 
Fix $\mathcal{P}$ and $0<\delta<1$, and set
\begin{align}\label{eq:tau}
\tau = \sigma\sqrt{\frac{2}{n}\log\frac{2M}{\delta}}.
\end{align}
The soft-thresholded UH estimator
\[
\hat{f}_{\mathcal{P},\tau}^{\mathrm{UH}}(\bm{x})
=
\bar{Y}
+
\sum_{j=1}^{M}
\operatorname{sgn}(w_{j})\bigl(|w_{j}|-\tau\bigr)_{+}\psi_{j}(\bm{x}),
\qquad
\bar{Y}=\int Y\,d\mu,
\qquad
w_j = \langle Y,\psi_{j}\rangle_{L^{2}(\mu)},
\]
satisfies, with probability at least $1-\delta$,
\begin{align}
\|\hat{f}_{\mathcal{P},\tau}^{\mathrm{UH}}-f\|_{L^{2}(\mu)}^{2}
\;\le\;
4
\min_{S\subset\{1,\dots,M\}}
\Bigl\{
\|f-f_{S}\|_{L^{2}(\mu)}^{2}
+
|S|\tau^2
\Bigr\}.
\label{eq:UH-oracle-clean}
\end{align}

On the same partition, the leafwise (CART-style) estimator
\[
\hat{f}_{\mathcal{P}}^{\mathrm{leaf}}(\bm{x})
=
\frac{1}{N_{A}}
\sum_{i: \bm{X}_{i}\in A}Y_{i},
\qquad \bm{x}\in A\in\mathcal{P},
\]
satisfies, with probability at least $1-\delta$,
\begin{align}
\|\hat{f}_{\mathcal{P}}^{\mathrm{leaf}}-f\|_{L^{2}(\mu)}^{2}
\;\le\;
\|f-f_{\mathcal{P}}\|_{L^{2}(\mu)}^{2}
+
4\,\frac{\sigma^{2}}{m_{\min}(\mathcal{P})}
\log\biggl(\frac{2|\mathcal{P}|}{\delta}\biggr),
\label{eq:leaf-clean}
\end{align}
where $f_{\mathcal{P}}$ is the piecewise-constant projection of $f$
on $\mathcal{P}$ and $m_{\min}(\mathcal{P})=\min_{A\in\mathcal{P}}N_{A}$.
If $f$ is Hölder-$\alpha$ on each $A$ with constant $L_{f}$, then
$\|f-f_{\mathcal{P}}\|_{L^{2}(\mu)}^{2}
\le
L_{f}^{2}\sum_{A\in\mathcal{P}}h_{A}^{2\alpha}\mu(A)$.
\end{theorem}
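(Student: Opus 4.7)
The plan is to treat the two inequalities separately, though both begin from the coefficient decomposition in the orthonormal UH system $\{1,\psi_1,\dots,\psi_M\}$ induced by $\mathcal{P}$. Writing $Y = f + \epsilon$, each coefficient decomposes as $w_j = f_j + z_j$ with $f_j = \langle f,\psi_j\rangle_{L^2(\mu)}$ and $z_j = n^{-1}\sum_i \epsilon_i\psi_j(\bm{X}_i)$. Orthonormality of $\psi_j$ in $L^2(\mu_n)$ together with the sub-Gaussian assumption on $\epsilon_i$ (proxy $\sigma^2$) makes each $z_j$ sub-Gaussian with proxy $\sigma^2/n$, so a Gaussian tail plus a union bound over the $M$ details yields a ``good event'' $\mathcal{E}_\delta = \{\max_j |z_j|\le\tau\}$ of probability at least $1-\delta$ for the $\tau$ fixed in \eqref{eq:tau}.

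For the UH oracle bound \eqref{eq:UH-oracle-clean}, I would work on $\mathcal{E}_\delta$ and apply Parseval to write $\|\hat f^{\mathrm{UH}}_{\mathcal{P},\tau} - f\|_{L^2(\mu)}^2 = (\bar Y - \bar f)^2 + \sum_{j=1}^M(\hat w_j - f_j)^2 + \|f - f_{\mathcal{P}}\|_{L^2(\mu)}^2$, where $\hat w_j$ denotes the soft-thresholded coefficient. The pointwise workhorse is a classical soft-thresholding lemma: on $\mathcal{E}_\delta$, a short case split on $|w_j|\le\tau$ versus $|w_j|>\tau$ gives $(\hat w_j - f_j)^2 \le 4\min(f_j^2,\tau^2)$. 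In the first case, $|f_j|\le|w_j|+|z_j|\le 2\tau$ so $f_j^2\le 4\tau^2$; in the second case, $\hat w_j - f_j = z_j - \tau\operatorname{sgn}(w_j)$ is bounded by $2\tau$ in magnitude, and if additionally $|f_j|<\tau$ then $\operatorname{sgn}(w_j) = \operatorname{sgn}(z_j)$ forces the residual to collapse to $\operatorname{sgn}(z_j)(|z_j|-\tau)$, whose magnitude is dominated by $|f_j|$. Summing and splitting any index set $S$ against its complement gives $\sum_j(\hat w_j - f_j)^2 \le 4|S|\tau^2 + 4\sum_{j\notin S}f_j^2$; using the Pythagorean identity $\|f - f_S\|_{L^2(\mu)}^2 = \|f - f_{\mathcal{P}}\|_{L^2(\mu)}^2 + \sum_{j\notin S}f_j^2$ absorbs the residual $\|f - f_{\mathcal{P}}\|^2$ and the unthresholded intercept error $(\bar Y - \bar f)^2$ into the same overall constant, and optimizing over $S$ delivers \eqref{eq:UH-oracle-clean}.

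For the leafwise bound \eqref{eq:leaf-clean}, I would start from the Pythagorean decomposition in $L^2(\mu_n)$: since $\hat f^{\mathrm{leaf}}_{\mathcal{P}} - f_{\mathcal{P}}$ lies in the piecewise-constant subspace on $\mathcal{P}$ while $f - f_{\mathcal{P}}$ is its orthogonal complement, $\|\hat f^{\mathrm{leaf}}_{\mathcal{P}} - f\|_{L^2(\mu_n)}^2 = \|\hat f^{\mathrm{leaf}}_{\mathcal{P}} - f_{\mathcal{P}}\|_{L^2(\mu_n)}^2 + \|f - f_{\mathcal{P}}\|_{L^2(\mu_n)}^2$. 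On each leaf $A$, the deviation $\bar Y_A - (f_{\mathcal{P}})_A = N_A^{-1}\sum_{\bm X_i\in A}\epsilon_i$ is sub-Gaussian with proxy $\sigma^2/N_A$; a Gaussian tail plus a union bound across the $|\mathcal{P}|$ leaves gives, simultaneously for all $A$, $|\bar Y_A - (f_{\mathcal{P}})_A|^2 \le (2\sigma^2/N_A)\log(2|\mathcal{P}|/\delta)$ with probability at least $1-\delta$. Weighting by $\mu_n(A) = N_A/n$, summing, and using $n \ge |\mathcal{P}|\,m_{\min}(\mathcal{P})$ produces the variance term of \eqref{eq:leaf-clean} up to the stated constant. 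The Hölder rider then follows from a routine cell-by-cell Taylor estimate bounding the pointwise squared bias by $L_f^2 h_A^{2\alpha}$ and integrating against $\mu(A)$.

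The main obstacle is not any single estimate but the organization of the soft-thresholding lemma: establishing the pointwise inequality $(\hat w_j - f_j)^2\le 4\min(f_j^2,\tau^2)$ on $\mathcal{E}_\delta$ hinges on the subtle sub-case $|w_j|>\tau$ with $|f_j|<\tau$, where one must exploit that $|z_j|$ dominates $|f_j|$ and hence $\operatorname{sgn}(w_j) = \operatorname{sgn}(z_j)$ to collapse the residual magnitude to $\tau - |z_j| < |f_j|$. The other steps---sub-Gaussian tails, union bounds, the Pythagorean decomposition, and the Hölder approximation---are routine and follow the standard template for piecewise-constant regression.
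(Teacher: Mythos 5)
Your proof is correct and takes essentially the same route as the paper's: the same sub-Gaussian union bound defining the event $\{\max_j |z_j|\le\tau\}$, the same deterministic soft-thresholding lemma with constant $4$ via the case split on $|w_j|\le\tau$ versus $|w_j|>\tau$, the same Parseval decomposition for the UH bound, and the same per-leaf union bound plus Pythagorean decomposition for the leafwise estimator. The only (harmless) variation is in the sub-case $|w_j|>\tau$, $|f_j|<\tau$, where you use the sign identity $\operatorname{sgn}(w_j)=\operatorname{sgn}(z_j)$ to get $|\hat w_j-f_j|\le|f_j|$, whereas the paper's Lemma reaches the same $4\min\{f_j^2,\tau^2\}$ bound by the cruder triangle-inequality estimate $|\hat w_j-f_j|\le 2|f_j|$.
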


Inequality \eqref{eq:UH-oracle-clean} shows that, on any fixed tree
partition, UH soft thresholding automatically adapts to the sparsity of
$f$ in the UH dictionary: if $f$ is well approximated by a small
set $S$ of coefficients, the risk scales like $|S|/n$ (up to
logarithms). In contrast, \eqref{eq:leaf-clean} shows that the
stochastic term for leafwise means depends on the smallest leaf size,
and effectively on the total number of leaves. This supports our use
of UH thresholding inside each tree in Sections~\ref{sec:Wavelet-Regression}
and \ref{sec:Wavelet-Regression-ensemble}: once a tree is grown, we
reconstruct via \eqref{eq:recon_mean} but only with coefficients
that pass a threshold, which is exactly the regime where
\eqref{eq:UH-oracle-clean} is favorable.

Finally, we state a simplified consequence of
Theorem~\ref{thm:fixed-partition} that highlights the benefits of UH
thresholding when the true signal is sparse in the UH dictionary.

\begin{corollary}[Sparse detail vectors]
\label{cor:sparse}
Assume the setting of Theorem~\ref{thm:fixed-partition} on a fixed
partition $\mathcal{P}$, and that $f$ is exactly supported
on a set $S^{\star}$ of UH atoms with $s=|S^{\star}|\ll M$, so
$f=f_{S^{\star}}$. Assume also that every cell contains at least a constant fraction $c \in (0,1/2)$ of
$n/|\mathcal{P}|$ points. Then, for the UH estimator
$\widehat{f}_{\mathcal{P},\tau}^{\mathrm{UH}}$ with
$\tau$ as in \eqref{eq:tau} and the leafwise estimator
$\widehat{f}_{\mathcal{P}}^{\mathrm{leaf}}$ in
\eqref{eq:leaf-clean}, 
with probability at least $1-\delta$,
\[
\|\widehat{f}_{\mathcal{P},\tau}^{\mathrm{UH}}-f\|_{L^{2}(\mu)}^{2}
\;\le\;
8\,s\,\frac{\sigma^{2}}{n}\log\frac{2M}{\delta},
\qquad
\|\widehat{f}_{\mathcal{P}}^{\mathrm{leaf}}-f\|_{L^{2}(\mu)}^{2}
\;\le\;
\frac{4}{c}\,|\mathcal{P}|\,\frac{\sigma^{2}}{n}
\log\frac{2|\mathcal{P}|}{\delta}.
\]
\end{corollary}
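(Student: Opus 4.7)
The plan is to obtain both bounds as direct specializations of Theorem~\ref{thm:fixed-partition} on the fixed partition $\mathcal{P}$. Nothing new needs to be proved about concentration or about the orthonormal system; the corollary is essentially a bookkeeping exercise once the sparsity assumption $f=f_{S^{\star}}$ and the balanced-cell assumption $m_{\min}(\mathcal{P})\ge c\,n/|\mathcal{P}|$ are plugged into the two inequalities \eqref{eq:UH-oracle-clean} and \eqref{eq:leaf-clean}.

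For the UH bound, I would take the oracle inequality \eqref{eq:UH-oracle-clean}, which holds with probability at least $1-\delta$ for any choice of $S\subset\{1,\dots,M\}$, and evaluate the minimum at the specific choice $S=S^{\star}$. By assumption $f=f_{S^{\star}}$, so the approximation term $\|f-f_{S^{\star}}\|_{L^{2}(\mu)}^{2}$ vanishes and only the stochastic term $|S^{\star}|\tau^{2}=s\tau^{2}$ survives. Substituting $\tau^{2}=(2\sigma^{2}/n)\log(2M/\delta)$ from \eqref{eq:tau} and absorbing the factor of $4$ gives the stated bound $8s(\sigma^{2}/n)\log(2M/\delta)$.

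For the leafwise bound, the key observation is that, because every UH atom $\psi_{j}$ is piecewise constant on $\mathcal{P}$, the sparse signal $f=f_{S^{\star}}$ lies in the space of piecewise-constant functions on $\mathcal{P}$; hence its $L^{2}(\mu)$-projection onto that space satisfies $f_{\mathcal{P}}=f$, so $\|f-f_{\mathcal{P}}\|_{L^{2}(\mu)}^{2}=0$. Plugging this into \eqref{eq:leaf-clean} and using the assumed lower bound $m_{\min}(\mathcal{P})\ge c\,n/|\mathcal{P}|$ to control $\sigma^{2}/m_{\min}(\mathcal{P})$ by $|\mathcal{P}|\sigma^{2}/(cn)$ produces the stated bound $(4/c)\,|\mathcal{P}|\,(\sigma^{2}/n)\log(2|\mathcal{P}|/\delta)$.

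The only subtlety, and the step I would double-check, is the claim that $f_{S^{\star}}$ lies in the piecewise-constant class on $\mathcal{P}$: this uses the structural fact that each UH atom is constructed on some cell $A$ of the recursive partition and is constant on each of its two children, hence constant on every subsequent cell of $\mathcal{P}$ contained in $A$. Once this is noted, the corollary follows without further computation, and both probability statements inherit the $1-\delta$ guarantee from Theorem~\ref{thm:fixed-partition} (via a single application of the theorem; the two bounds are controlled on the same high-probability event used to derive the respective inequalities).
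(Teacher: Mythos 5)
Your proposal is correct and follows essentially the same route as the paper's own proof: specialize the oracle inequality \eqref{eq:UH-oracle-clean} at $S=S^{\star}$ so the approximation term vanishes and substitute $\tau^{2}$ from \eqref{eq:tau}, then for the leafwise bound observe that $f=f_{S^{\star}}$ is piecewise constant on $\mathcal{P}$ so $f_{\mathcal{P}}=f$ and apply the balance condition $m_{\min}(\mathcal{P})\ge c\,n/|\mathcal{P}|$. Your explicit remark that each UH atom is constant on every cell of $\mathcal{P}$ contained in its support is a welcome justification of a step the paper states without comment.
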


This corollary formalizes the advantage of reconstruction-time UH
thresholding in scenarios where the signal is concentrated on a few
nodes, such as localized bumps or spikes (including functions like
Easom-type examples mentioned in
Section~\ref{sec:Wavelet-Regression}). On a fixed tree,
$\widehat{f}_{\mathcal{P},\tau}^{\mathrm{UH}}$ pays variance
proportional to the number of large coefficients $s$, whereas
leafwise means pay variance proportional to the total number of
leaves $|\mathcal{P}|$. In our ensemble constructions, each boosting
step or backfitting component performs exactly this kind of
coefficient selection on its own residuals, so the same sparse-versus-dense
gap appears componentwise. Taken together,
Theorem~\ref{thm:fixed-partition} and Corollary~\ref{cor:sparse} provide the
theoretical backing for both pillars: UHWTs offer a
sparsity-adaptive, orthogonal representation on adaptive partitions,
and this structure survives when we move from single trees to additive
ensembles and from Euclidean grids to triangulable manifolds.

\section{Numerical Experiments\label{sec:Experiments} }

This section will illustrate our methodology using simulated data on the image and sphere domains, and real data on the sphere domain. 
We will relegate our numerical experiments
on tensor data to online Appendix~L.2, and our
real-data image and additional spherical simulations to respective
online Appendix~L.1 and online Appendix~L.3.

\subsection{BH versus UH for boosted ensembles on images}

\label{subsec:numerical-image}

\begin{figure}[h]
\centering \includegraphics[width=1\textwidth]{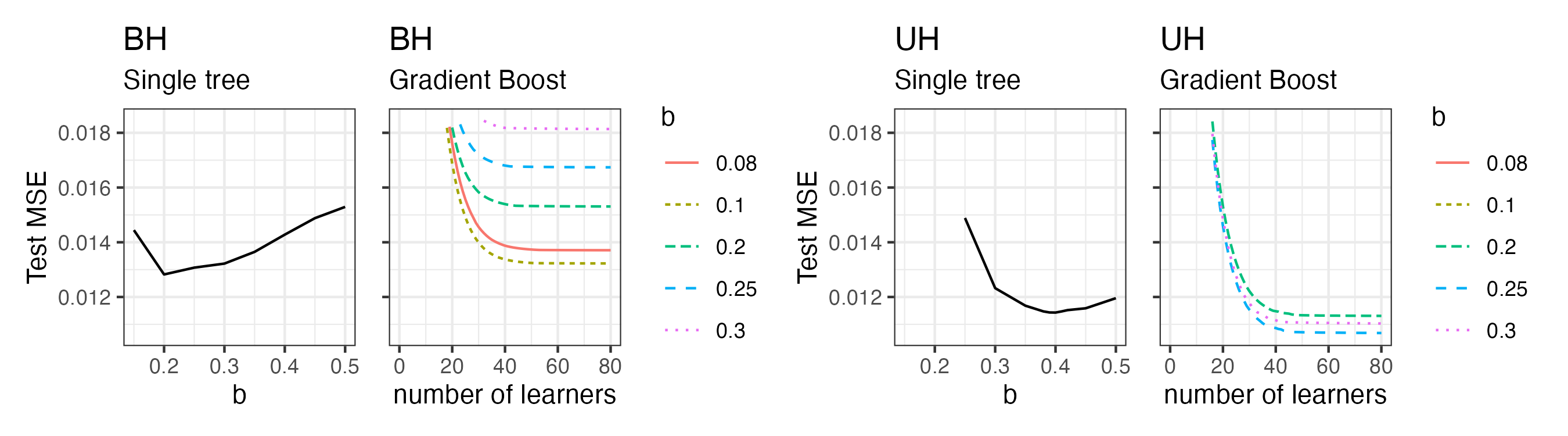}
\caption{Metrics of methods fit to a $512\times512$ pixel astronaut image
with additive Gaussian noise whose standard deviation equals the noiseless
image's standard deviation. For each boosted ensemble, the learning
rate is 0.1. The hyperparameter $b$ is as defined at the beginning
of Sec~\ref{subsec:numerical-image}. %
For the fourth panel, the test MSEs for $b\in\{0.08,0.1\}$ were larger
than $0.018$ and hence are not shown.}
\label{fig:astronaut_numerical} 
\end{figure}

This experiment illustrates the need for weak learners in a boosted
ensemble to be able to split away from the midpoint. For each tree,
we will stop splitting any node whose wavelet coefficient \eqref{eq:wavelet_coef}
is smaller than the threshold $\tau=b\,\hat{\sigma}$
\citep{donoho1995noising} for a user-defined hyperparameter $b\geq0$.
For both BH and UH trees and various values of $b$, we fit a 
single-tree learner and boosting ensembles to a
$512\times512$ pixel astronaut image with additive Gaussian noise
whose standard deviation equals the noiseless image's standard deviation.

Figure~\ref{fig:astronaut_numerical} shows the test MSE for the
fitted models. The left two panels show that boosting using BH weak
learners does not seem to reduce test MSE compared to a single BH
tree: the smallest BH single-tree test MSE is $\approx0.0129$
(at $b=0.2$), which is smaller than any of the boosting test MSEs
with BH weak learners (we computed these test
MSEs also for $b=0.09$ and $b=0.11$, but omitted their MSE curves
to avoid visual clutter). This supports the notion that boosting will not be effective if the weak learners always split at the midpoint. On the other hand, the
third panel shows that the smallest UH single-tree test MSE is approximately
$0.0114$ (at $b\approx0.4$), which is noticeably larger than some
of the shown (fourth panel) boosting test MSEs with UH weak learners,
namely with $b\in\{0.25,0.3\}$; this implies that boosting can improve
upon single-tree learners if learners are allowed to split away from
the midpoint. Furthermore, the optimal value of $b$ for a single-tree
learner is larger than that for a boosted ensemble with learning rate
0.1; the learning rate allows the trees to grow deeper without as
much risk of overfitting as when using a single-tree learner.

\subsection{Numerical results for sphere: Beyond Matérn kernels}

\label{sec:sphere-numerical}

Here we illustrate our spherical methodology proposed in Section~\ref{subsec:Boosting-the-wavelet}
on simulated data. The training data consist of $n=300$
i.i.d.\ data points $(\bm{x}_{i},y_{i})$, $i=1,\dots,n$, where the locations
$\bm{x}_{i}\in\mathbb{S}_{2}$ are drawn uniformly on a unit sphere, and
the observed $y_{i}$ are noisy versions of the unseen signal defined
in the caption of Figure~\ref{fig:sphere-mse}. Predictive performance
is assessed by the MSE of the trained model on $15300$
test sample points.

Although \citet{blaser2016random} mentioned the idea 
of incorporating random rotations into a boosted ensemble, 
we are not aware of any public implementation.
We thus implement a \textit{Random-Rotation Boosting} approach whose base learner will be our sphere wavelet, as described in Section~\ref{subsec:Boosting-the-wavelet}.
For comparison, we will also use a Fréchet tree \citep{capitaine2024frechet}
as a base learner, which we believe has never been publicly implemented
before; we will call this approach \textit{Random-Rotation Fréchet Boosting}.

We also compare our approach against existing tree-ensemble methods
that allow inputs that lie on a sphere, namely Fréchet Forest \citep{capitaine2024frechet},
Random-Rotation Random Forest (called Random-Rotation Ensembles in
\citet{blaser2016random}), and spherical PCA \citep{hrluo_2021c} with Rotation-Forest
\citep{rodriguez2006rotation}. For Fréchet Forest, we used the \texttt{FrechForest}
R package implementation. 
We also compare against three state-of-the-art nonparametric residual deep
Gaussian process (GP) models \citep{wyrwal2025residual}, which can
model manifold- and scalar-valued functions with inputs supported
on a manifold. \citet{wyrwal2025residual} can be considered as a
deep ensemble extension of \citet{Gramacy2005} to spherical domains,
with more sophisticated covariance and possible manifold uncertainty quantification \citep{luo2024multiple}.

\afterpage{
\begin{figure}[h!]
\centering %
\includegraphics[width=1\textwidth]{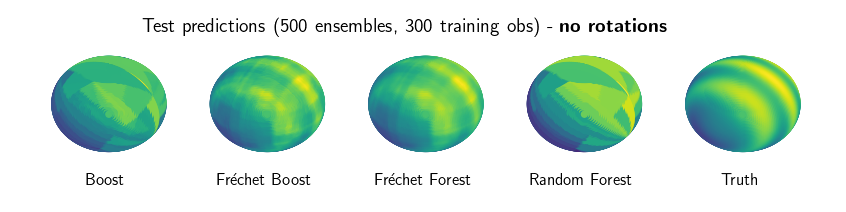}
\includegraphics[width=1\textwidth]{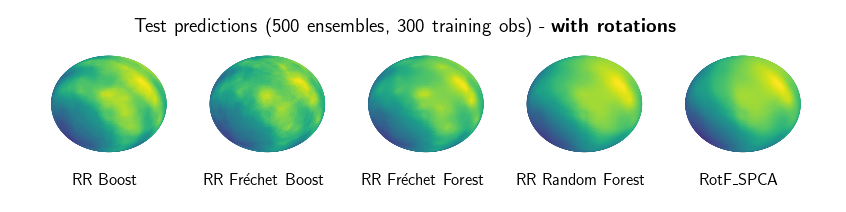}
\includegraphics[width=1\textwidth]{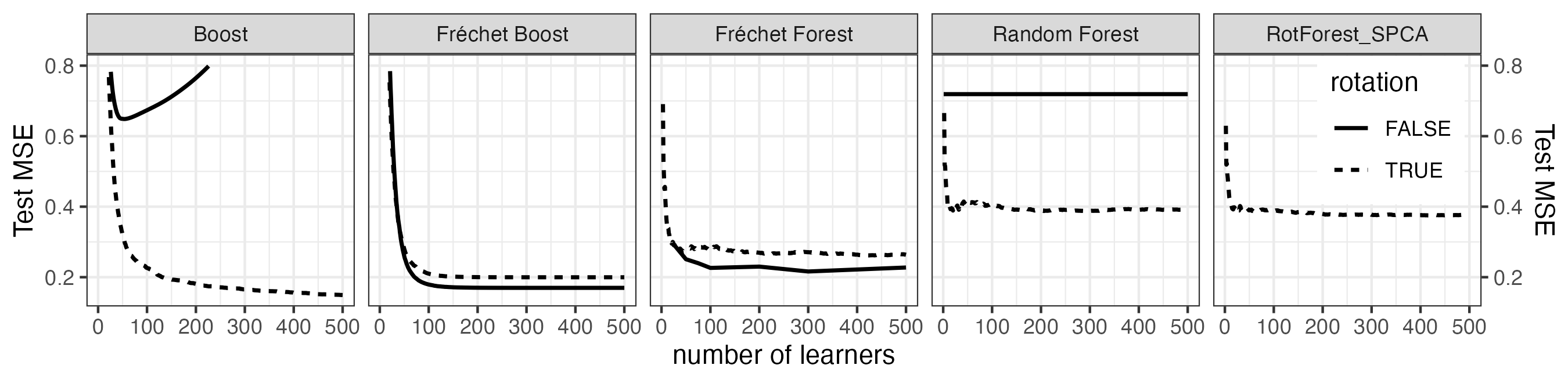}
\begin{tabular}{rlll}
\toprule 
 & 500 ensembles & 1000 ensembles & 1500 ensembles \tabularnewline
\midrule 
Test MSE for RR Boost  & $0.134$  & $0.129$  & $\mathbf{0.128}$ \tabularnewline 
\bottomrule
\end{tabular}
\caption{Empirical comparison of various tree ensemble models (\textquotedblleft RR\textquotedblright{}
refers to the method version that randomly rotates covariates before
fitting any base learner) trained on $n=300$ points $(\bm{x},y)$, where
each location $\bm{x}=(x_{1},x_{2},x_{3})$ was drawn uniformly on the
unit sphere, and the observed response $y$ at $\bm{x}$
is the signal $f(\bm{x})=2\tanh(x_{1})+\cos(10x_{2})+2x_{3}$ plus i.i.d.\ Gaussian
noise with zero mean and standard deviation equal to $0.1$ of the
standard deviation of $f$. 
Trees are allowed to grow to full depth. 
Boosted methods use learning rate $0.05$. 
Top two rows: predictions of the ensemble methods. 
The true signal is shown in the top-right sphere. 
Third row: MSE on 15300 test points.}
\label{fig:sphere-mse} 
\end{figure}

\begin{table}[h]
\centering %
\begin{tabular}{rllll}
\toprule 
Test MSE & 1 layer & 2 layers & 3 layers & 4 layers \tabularnewline
\midrule 
{\tiny +hodge+spherical\_harmonic\_features}  & 0.467 (0)  & 0.466 (0.000)  & 0.466 (0.000)  & 0.466 (0.000) \tabularnewline
{\tiny +inducing\_points}  & 0.466 (0.001)  & 0.208 (0.091)  & 0.209 (0.093)  & 0.207 (0.093) \tabularnewline
{\tiny +spherical\_harmonic\_features}  & 0.467 (0)  & \textbf{0.182 (0.010)}  & 0.192 (0.027)  & 0.199 (0.017) \tabularnewline
\bottomrule
\end{tabular}\caption{Mean and standard deviation of test MSE for three residual deep GP
methods over 10 starting seeds (with variance provided in brackets), with different combinations of parameters as in their codebase. Data is described in Figure~\ref{fig:sphere-mse}. Smallest MSE is bolded.}
\vskip 0.1cm
\label{tab:deepGP} 
\end{table}
}

Figure~\ref{fig:sphere-mse} show the tree-ensemble predictions.
The rotations smooth out the artificial boundaries or striations found in the predictions without rotations.
Interestingly,
the rotations increase the test MSE for the ensembles whose base learner is a Fréchet tree,
but decrease the test MSE for the ensembles whose base learner is
our wavelet tree using \texttt{adapt}. The rotations seem to
increase the bias and variance of each base learner for the sake of
increasing diversity between learners; we explore this phenomenon
in online Appendix~L.3.1.
Our Random-Rotation Boosting approach produces the smallest test MSE
among the tested methods, including the residual deep GP models whose
MSEs are shown in Table~\ref{tab:deepGP}; this was also observed for 
different signals, as described in online Appendices~L.3.3 and L.3.4, 
as well as in the real-data example in Section~\ref{sec:GISS}.
Furthermore, a spherical wavelet
tree can be applied to generic topological triangulations and hence
can be used for a larger class of input domains than a Fréchet
tree can, which requires inputs to lie on a metric space.

Finally, Table~\ref{tab:sphere-softthreshold-2} shows the predictive improvement attained by incorporating a modest amount of soft thresholding into the Random-Rotation Boosting wavelets. 
Here we use the same functions, but we increase the sample size from 300 to 1000 to allow the trees to grow deep enough for soft thresholding to nontrivially affect the final prediction. We also increase the noise standard deviation to 0.3 of the standard deviation of $f$.

\begin{table}[h]
    \centering
    \begin{tabular}{rlllll}
    \toprule 
    Function & soft 0.0 & soft 0.1 & soft 0.2 & soft 0.3 & soft 0.4 \tabularnewline
    \midrule 
    {\tiny from Figure~\ref{fig:sphere-mse}} & $0.0227$ & $0.0218$ & $\mathbf{0.0217}$ & $0.0224$ & $0.0239$ \tabularnewline 
    {\tiny from online Appendix~L.3.3} & $0.1033$ & $0.1006$ & $0.0989$ & $\mathbf{0.0981}$ & $0.0983$ \tabularnewline 
    {\tiny from online Appendix~L.3.4} & $1.39 \times 10^{-3}$ & $\mathbf{1.37 \times 10^{-3}}$ & $1.60 \times 10^{-3}$ & $2.08 \times 10^{-3}$ & $2.68 \times 10^{-3}$ \tabularnewline 
    \bottomrule
    \end{tabular}
    \caption{Test MSE for RR Boost trained on data from functions described in the stated figures, except that $n=1000$ and the noise standard deviation is equal to $0.3$ of that of $f$.}
    \label{tab:sphere-softthreshold-2}
\end{table}

\subsection{Real data: GISS}

\label{sec:GISS}

Here we apply these spherical methods to real data that lie on the
Earth. This example uses data from GISS Surface Temperature Analysis
\citep{GISTEMPv4,lenssen2024nasa}, which is an estimate
of global surface temperature change at various longitude-latitude
coordinates of the Earth. 
Using data from the year 2023, we aim to predict the temperature change on a testing 
set (7766 observations) by training on 1941 other observations. 
Figure~\ref{fig:sphere-mse-noaa} shows the ensembles' predictions 
and MSEs on the test data. 
Similarly to the simulated examples in Section~\ref{sec:sphere-numerical} in the main text and online Appendix~L.3,
Figure~\ref{fig:sphere-mse-noaa} shows that the smallest test MSE 
is produced by the Random-Rotation Boosted ensemble and appears to further decrease if the number of learners increases, whereas the other methods appear to stabilize in their predictive performance.
We also see that the prediction error appears
to be largest at land-ocean borders, particularly near central America
and northeastern Europe.

\begin{figure}[h!]
\centering \includegraphics[width=0.95\textwidth]{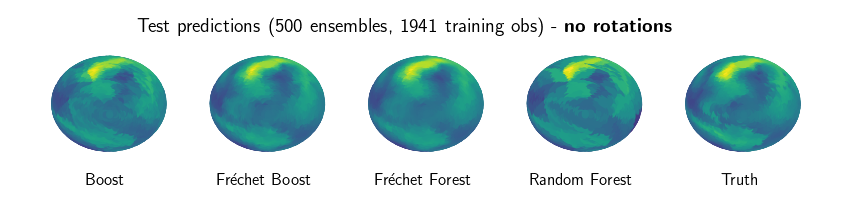}
\includegraphics[width=0.95\textwidth]{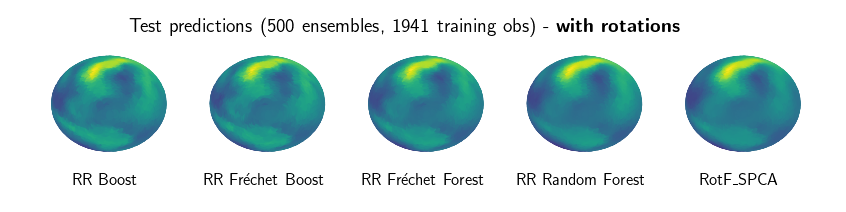}
\includegraphics[width=1\textwidth]{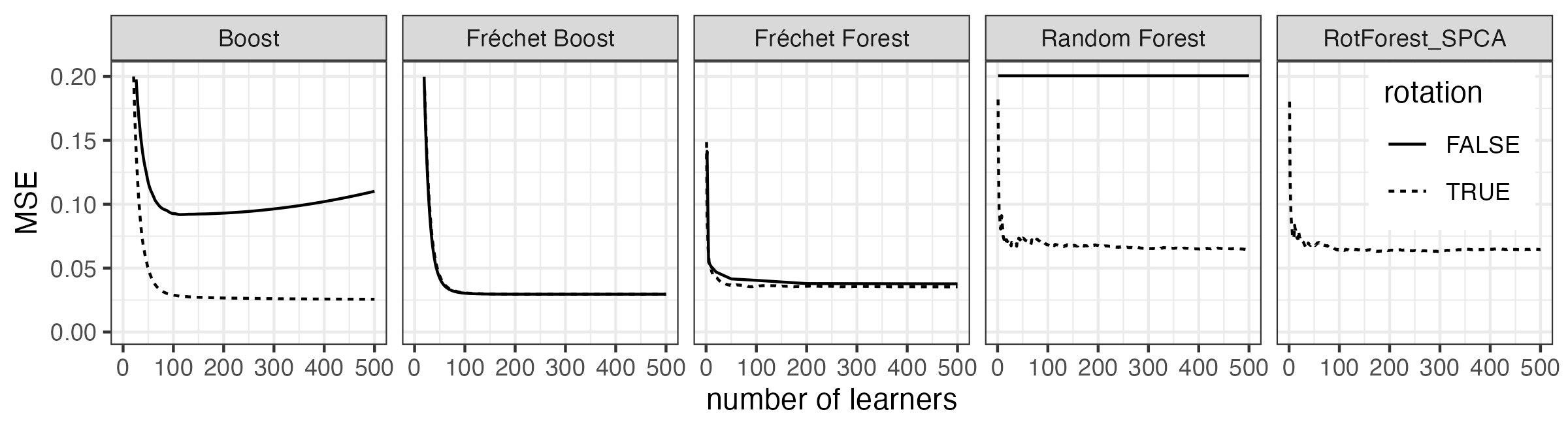}
\caption{Empirical comparison of tree ensemble models (\textquotedblleft RR\textquotedblright{}
refers to the method version that randomly rotates covariates before
fitting a base learner) trained on $n=1941$ points $(\bm{x},y)$ from the GISS dataset.
Trees are allowed to grow to full depth. 
Boosted methods use learning rate $0.05$. 
Top two rows: predictions of the ensemble methods. 
The 7766 test points are shown in the top-right sphere.
Third row: MSE on 7766 test points.}
\label{fig:sphere-mse-noaa} 
\end{figure}

\section{Conclusion}

\label{sec:conclusion}

This work develops a wavelet regression framework built on two
pillars. The first is an adaptive UHWT 
representation for regression whose transform is constructed directly
in $L^{2}(\mu_{n})$, so each admissible split is scored by its
empirical UH coefficient and the selected atoms remain orthonormal
(Section~\ref{sec:Wavelet-Regression}). 
The second pillar extends these constructions from regular grids to
higher-order tensors and triangulable manifolds
(Section~\ref{sec:Wavelet-Trees-general domain}). 
On spheres and general triangulated surfaces, we
replace axis-aligned splits by geodesic triangle refinements that
preserve local orthogonality in $L^{2}(\mu_{n})$. Random rotations
on $\mathbb{S}^{2}$ and intrinsic reparameterizations on general
manifolds generate diverse weak learners.

These two pillars come together in the ensemble constructions of
Section~\ref{sec:Wavelet-Regression-ensemble}. In the boosting
procedure, each weak learner is an UH tree iteratively fitted to residuals
and regularized by early stopping, a learning rate, 
and reconstruction-time soft thresholding. 
In the Bayesian backfitting procedure, additive UH trees with
RUHWT priors provide a fully probabilistic ensemble and uncertainty
quantification for triangulable manifolds \citep{luo2024multiple}, where each component sees only a partial residual and is shrunk by both its prior and its role in the sum.

Theoretical results in Section~\ref{sec:Theoretical-Results}
support these design choices. Oracle inequalities for fixed
partitions show that UH thresholding adapts to sparsity in the UH
coefficients, with risk controlled by the number of large coefficients
rather than the total number of leaves. 
A unified oracle bound covers both grids and triangulable
domains once orthonormality with respect to the underlying measure is
ensured. Together with the empirical results in
Section~\ref{sec:Experiments} (and in the Supplementary Materials), these findings suggest that UH wavelet
trees and their ensembles offer a practical and theoretically grounded
tool for multiscale regression on complex domains, complementing
classical grid-based wavelets and traditional tree ensembles.

\bibliography{multidim.bib}

\newpage

\begin{center}

{\Large\bf SUPPLEMENTARY MATERIAL}

\end{center}

\appendix

References from the main text will be preceded by ``MT.''


\section{Related Algorithms}
\label{sec:Related-Algorithms}
\begin{algorithm}[h!]
{\scriptsize\rule[0.5ex]{1\columnwidth}{1pt}}\par
\begin{raggedright}
\textbf{Input:} Data $\{(\bm{x}_{i},y_i)\}_{i=1}^{n}$ on domain $\Omega\subset\mathbb{R}^{D}$.
Hyperparameters: maximum depth $L_{\max}$, minimum leaf size $n_{\min}\in\mathbb{N}$,
and contrast threshold $\tau$ (e.g.\ $\tau=\tau_{n,\delta}$).\\
\textbf{Output:} A UH tree $T$ (i.e., a recursive partition $\mathcal{P}(T)$) whose internal
nodes store a split $(\hat d,\hat\ell)$ and UH detail coefficients; reconstruction uses
(MT-5).
\end{raggedright}

\noindent\textbf{Notation.} For any node/cell $A\subset\Omega$, let
$I_A\coloneqq\{i:\bm{x}_i\in A\}$ and $N_A\coloneqq|I_A|$. For a split candidate
$(d,\ell)$ (coordinate $d\in\{1,\dots,D\}$ and threshold $\ell\in\mathbb{R}$), let
$A_l(d,\ell)$ and $A_r(d,\ell)$ be the two children induced by the split (as in
Section~MT-2.2), and let $w_{d,\ell}(A)$ be the UH contrast
defined in (3).

\noindent\textbf{Procedure.}
\begin{enumerate}
\item \emph{Initialize.} Set the root node $A_0\leftarrow\Omega$ and $T\leftarrow\{A_0\}$.
\item \emph{While} there exists a leaf $A\in T$ with $\mathrm{depth}(A)<L_{\max}$, process $A$ (e.g.\ breadth-first):
\begin{enumerate}
\item \emph{Admissible splits.} For each $d\in\{1,\dots,D\}$, form the set of admissible
thresholds $\Lambda_A(d)$ consisting of values of $\{x_{i,d}: i\in I_A\}$ whose split
produces children with $N_{A_l(d,\ell)}\ge n_{\min}$ and $N_{A_r(d,\ell)}\ge n_{\min}$.
\item \emph{Score candidates.} Compute $w_{d,\ell}(A)$ from (MT-3) for all admissible $(d,\ell)$.
\item \emph{Split or stop.} If no admissible $(d,\ell)$ exists or $\max_{(d,\ell)}|w_{d,\ell}(A)|<\tau$, mark $A$ as a leaf.
Otherwise choose and store
\begin{align}
(\hat d,\hat\ell)\in\arg\max_{(d,\ell)}|w_{d,\ell}(A)|,
\end{align}
split $A$ into $A_l(\hat d,\hat\ell)$ and $A_r(\hat d,\hat\ell)$, and add both children to $T$.
\end{enumerate}
\item \textbf{Return} $T$.
\end{enumerate}

\noindent\textbf{Prediction.} Reconstruct $\widehat f$ from the UH coefficients on $T$ using (MT-5).
{\scriptsize\rule[0.5ex]{1\columnwidth}{1pt}}\par
\caption{Optimization approach for fitting a single UHWT (aligned with (MT-3), (MT-4), and (MT-5)).}
\label{alg:uhwt-opt}
\end{algorithm}

\begin{algorithm}[h!]
\caption{\textsc{SpherePartitionerUH}$(X,r;\Theta)$: UH tree on $\mathbb{S}^{2}$ data (notation aligned with (MT-3) and (MT-18)).}
\label{alg:sphere-uh}

\scriptsize

{\scriptsize\rule[0.5ex]{1\columnwidth}{1pt}}\par
\begin{raggedright}
\textbf{Input:} Locations $X\in\mathbb{R}^{n\times 3}$ with rows on $\mathbb{S}^{2}$;
responses $r\in\mathbb{R}^{n}$ (e.g., residuals or observed responses).\\
\phantom{\textbf{Input:}} Hyperparameters $\Theta=(L_{\max},n_{\min},\tau,\mathtt{rule})$ with
$L_{\max}$ maximum depth, $n_{\min}$ minimum leaf size, $\tau\ge 0$ contrast threshold,
and $\mathtt{rule}\in\{\mathtt{balance},\mathtt{balance4},\mathtt{adapt},\mathtt{adapt\_vertex}\}$. Initial icosahedron.\\
\textbf{Output:} For all $j=1,\ldots,20$, a UH tree $T^{(j)}$ (i.e., a partition $\mathcal{P}(T^{(j)})$) on each face where $T^{(j)}$'s
internal nodes store split decisions $(\hat d,\hat\ell)^{(j)}$ and UH detail coefficients
(from (MT-3)).
\end{raggedright}

\noindent\textbf{Notation.} For a node/cell $A$, let $I_A\subset\{1,\dots,n\}$ be its index set,
$N_A\coloneqq|I_A|$, and $\bar r_A\coloneqq N_A^{-1}\sum_{i\in I_A} r_i$. For a candidate
$(d,\ell)$ with $d\in\{1,2,3\}$ and $\ell\in\mathbb{R}$, define children index sets
\begin{align}
I_{A_l(d,\ell)}
&\coloneqq \{\,i\in I_A:\ X_{i,d}\le \ell\,\},\qquad
I_{A_r(d,\ell)} \coloneqq I_A\setminus I_{A_l(d,\ell)},
\end{align}
and write $N_{A_\bullet}\coloneqq |I_{A_\bullet}|$ and $\bar r_{A_\bullet}$ for the corresponding
sizes and means. The UH contrast for $(A;d,\ell)$ is
\begin{align}
w_{d,\ell}(A)
&\coloneqq \sqrt{\frac{N_{A_l(d,\ell)}\,N_{A_r(d,\ell)}}{N_A}}
\bigl(\bar r_{A_l(d,\ell)}-\bar r_{A_r(d,\ell)}\bigr),
\end{align}
which matches (MT-3).

\noindent\textbf{Procedure.} For $j=1,\ldots,20$:
\begin{enumerate}
\item \emph{Initialize.} Denote by $A_0^{(j)}$ the $j$th face of the icosahedron. Set $T^{(j)}\leftarrow\{A_0^{(j)}\}$ and $\mathrm{depth}(A_0^{(j)})=0$. %
\item \emph{While} there exists a leaf $A\in T^{(j)}$ with $\mathrm{depth}(A)<L_{\max}$, process $A$:
\begin{enumerate}
\item \emph{Admissible thresholds.} For each $d\in\{1,2,3\}$, define $\Lambda_A(d)$ by
\begin{align}
\Lambda_A(d)\coloneqq
\begin{cases}
\{\text{empirical median of }\{X_{i,d}\}_{i\in I_A}\}, & \mathtt{rule}=\mathtt{balance},\\
\{\text{empirical quartiles of }\{X_{i,d}\}_{i\in I_A}\}, & \mathtt{rule}=\mathtt{balance4},\\
\{\text{midpoints of consecutive distinct values in }\{X_{i,d}\}_{i\in I_A}\}, & \mathtt{rule}=\mathtt{adapt},\\
\{\text{distinct values in }\{X_{i,d}\}_{i\in I_A}\}, & \mathtt{rule}=\mathtt{adapt\_vertex}.
\end{cases}
\end{align}
Discard any $\ell\in\Lambda_A(d)$ violating $N_{A_l(d,\ell)}\ge n_{\min}$ and $N_{A_r(d,\ell)}\ge n_{\min}$.
\item \emph{Score candidates.} Compute $w_{d,\ell}(A)$ for all admissible $(d,\ell)$.
\item \emph{Split or stop.} If no admissible $(d,\ell)$ exists or $\max_{(d,\ell)}|w_{d,\ell}(A)|<\tau$, mark $A$ as a leaf.
Otherwise choose and store
\begin{align}
(\hat d,\hat\ell)\in\arg\max_{(d,\ell)}|w_{d,\ell}(A)|,
\end{align}
split $A$ into $A_l(\hat d,\hat\ell)$ and $A_r(\hat d,\hat\ell)$, and add both children to $T^{(j)}$.
\end{enumerate}
\item \textbf{Return} $T^{(j)}$.
\end{enumerate}

\noindent\textbf{Prediction.} For $\bm{x}\in A_0^{(j)}$, use UH synthesis on $T^{(j)}$, as in (MT-5). (When used inside boosting,
$r$ is the stagewise residual and the resulting fit corresponds to the weak learner in (MT-16)--(MT-18).)
{\scriptsize\rule[0.5ex]{1\columnwidth}{1pt}}\par

\end{algorithm}

\textbf{Implementation notes for Algorithm~\ref{alg:sphere-uh}.}
Algorithm~\ref{alg:sphere-uh} is the greedy optimization procedure of Algorithm~\ref{alg:uhwt-opt}
specialized to $\Omega=\mathbb{S}^{2}$ and evaluated in rotated ambient coordinates. Candidate splits
remain axis-aligned in the rotated coordinate system (i.e., $X_{i,d}\le \ell$), producing spherical caps/belts
$A_l(d,\ell)$ and $A_r(d,\ell)$ within the current cell. The only implementation choice beyond the abstract
description is the candidate set $\Lambda_A(d)$ (controlled by \texttt{rule}); the split score and stopping rule
use the same UH contrasts and threshold test $\max_{(d,\ell)}|w_{d,\ell}(A)|<\tau$ as in the main text.

\begin{algorithm}[h!]
\caption{Bayesian MCMC sampler for a single RUHWT (image-specialized; notation aligned with Section~MT-2.3 and with boosting learning rate $\eta$).}
\label{alg:bayesian-rudp}

\scriptsize

{\scriptsize\rule[0.5ex]{1\columnwidth}{1pt}}\par
\begin{raggedright}
\textbf{Input:} Image array $Y\in\mathbb{R}^{n_r\times n_c}$ on $\Omega\subset\mathbb{R}^2$;
maximum depth $L_{\max}$; iterations $K$; depth-based split prior
$p_{\mathrm{split}}(d)=0.99\cdot 0.499^{\,d}$; noise level $\sigma$ (MAD estimate if not supplied).\\
\textbf{Output:} A UH tree $T$ (last accepted state) whose internal nodes store split $(d,\ell)$ and UH details.
\end{raggedright}

\noindent\textbf{Notation.} For a node $A$, let $\mathrm{depth}(A)$ be its depth. When $A$ is internal with stored
$(d,\ell)$, write $w_{d,\ell}(A)$ for its UH contrast (as in (MT-3), specialized to the image grid).

\noindent\textbf{Initialization.} Set root $A_0=\Omega$, $\mathrm{depth}(A_0)=0$, and initialize $T\leftarrow\{A_0\}$.

\noindent\textbf{Local scores on a subtree rooted at $A$.}
\begin{itemize}
\item \emph{Local prior:}
\begin{align}
\log\pi_{\mathrm{prior}}(A)
\;=\;
\sum_{U\in\mathrm{Desc}(A)\cup\{A\}}
\begin{cases}
\log\!\bigl(1-p_{\mathrm{split}}(\mathrm{depth}(U))\bigr), & U\ \text{leaf},\\
\log\!\bigl(p_{\mathrm{split}}(\mathrm{depth}(U))\bigr), & U\ \text{internal}.
\end{cases}
\end{align}
\item \emph{Local likelihood (Gaussian):}
\begin{align}
\log\mathcal{L}(A)
\;=\;
\sum_{\substack{U\in\mathrm{Descendants of }(A)\cup\{A\}\\ U\ \mathrm{internal}}}\log\phi_\sigma\!\bigl(w_{d_U,\ell_U}(U)\bigr)
\;-\;
\sum_{\substack{U\ \mathrm{leaf}}}\frac{\mathrm{SSE}(U)}{2\sigma^2},
\end{align}
where $\phi_\sigma$ is the $N(0,\sigma^2)$ density and $\mathrm{SSE}(U)$ is the within-leaf sum of squares about the leaf mean.
\item \emph{Local log-posterior:} $\log\Pi(A)\coloneqq\log\pi_{\mathrm{prior}}(A)+\log\mathcal{L}(A)$.
\end{itemize}

\noindent\textbf{One MCMC step (repeat for $t=1,\dots,K$).}
\begin{enumerate}
\item Choose move type $\texttt{move}\in\{\textsf{GROW},\textsf{PRUNE}\}$ uniformly, for candidate sets:
\begin{align}
\mathcal{C}_{\textsf{GROW}}(T)=\{U:\ U\text{ leaf and }\mathrm{depth}(U)<L_{\max}\},\\
\mathcal{C}_{\textsf{PRUNE}}(T)=\{U:\ U\text{ internal and both children are leaves}\}.
\end{align}
If $\mathcal{C}_\texttt{move}(T)=\varnothing$, switch \texttt{move} to the other move.
\item Pick target node $A$ uniformly from $\mathcal{C}_\texttt{move}(T)$ and compute $L_{\mathrm{cur}}=\log\Pi(A)$.
\item Propose a local change at $A$ (We implemented \textsf{SWAP} option as well):
\begin{itemize}
\item If $\texttt{move}=\textsf{GROW}$: propose a split $(d,\ell)$ for $A$ (image-row or image-column direction),
create two children, and assign their depths $\mathrm{depth}(A)+1$.
\item If $\texttt{move}=\textsf{PRUNE}$: remove both children of $A$ and mark $A$ as a leaf (clear $(d,\ell)$).
\end{itemize}
\item Compute $L_{\mathrm{prop}}=\log\Pi(A)$ after the change and the Hastings correction:
\begin{align}
\Delta_{\mathrm{prop}}
=
\begin{cases}
\log\!\dfrac{|\mathcal{C}_{\textsf{GROW}}(T)|}{|\mathcal{C}_{\textsf{PRUNE}}(T')|}, & \texttt{move}=\textsf{GROW},\\[0.2cm]
\log\!\dfrac{|\mathcal{C}_{\textsf{PRUNE}}(T)|}{|\mathcal{C}_{\textsf{GROW}}(T')|}, & \texttt{move}=\textsf{PRUNE},
\end{cases}
\end{align}
where $T'$ is the proposed tree (used only to count candidates).
\item Metropolis--Hastings accept with probability
$\min\{1,\exp(\Lambda)\}$, $\Lambda=(L_{\mathrm{prop}}-L_{\mathrm{cur}})+\Delta_{\mathrm{prop}}$.
\end{enumerate}

\noindent\textbf{Return.} The final accepted $T$ (and optionally the stored trace of trees).
{\scriptsize\rule[0.5ex]{1\columnwidth}{1pt}}\par

\end{algorithm}

\begin{algorithm}[h!]
{\scriptsize\rule[0.5ex]{1\columnwidth}{1pt}}\par
\begin{raggedright}
\textbf{Input:} %
Training data $\{(y_i, \bm{x}_i)\}_{i=1}^n$ with $\bm{x}_i\in\mathbb{S}^2$ and $y_i \in \mathbb{R}$; 
ensemble size $m$; base-tree parameters $\Theta$ for Algorithm~\ref{alg:sphere-uh}. Initial icosahedron.\\
\phantom{\textbf{Input:}} Rotation generator $\mathtt{HaarSO}(3)$ (Gaussian--QR with sign fix; see below).\\
\textbf{Output:} Rotations $\{R_t\}_{t=1}^m\subset\mathrm{SO}(3)$ and UH trees $\{(T_t^{(1)}, \ldots, T_t^{(20)})\}_{t=1}^{m}$, where for each $j=1,\ldots,20$ the superscript $(j)$ denotes the $j$th face of the icosahedron.
\end{raggedright}

\noindent\textbf{Procedure.}
\begin{enumerate}
    \item For $t=1,\dots,m$:
    \begin{enumerate}
    \item Draw $R_t\sim \mathtt{HaarSO}(3)$.
    \item Fit $(T_t^{(1)}, \ldots, T_t^{(20)}) \leftarrow \textsc{SpherePartitionerUH}\big(\{R_m\bm{x}_i\}_{i=1}^n,\,(y_i)_{i=1}^n;\Theta\big)$.
    \end{enumerate}
    \item \textbf{Return} $\{R_t\}_{t=1}^m$ and $\{(T_t^{(1)}, \ldots, T_t^{(20)})\}_{t=1}^m$.
\end{enumerate}
\noindent\textbf{Prediction.} For $\bm{x}\in\mathbb{S}^2$,
\begin{align}
\widehat f(\bm{x})
\;=\;
\frac{1}{m}\sum_{t=1}^m \widehat f_{T_t^{(j)}}(R_t\bm{x}),
\end{align}
where $j=j_{R_t}$ is the index of the icosahedron's face containing $R_t\bm{x}$, and $\widehat f_{T_t^{(j)}}$ denotes the UH reconstruction on the tree $T_t^{(j)}$  (cf.\ (MT-5)).
{\scriptsize\rule[0.5ex]{1\columnwidth}{1pt}}\par
\caption{RRE: Random Rotation Ensemble with UH base trees on $\mathbb{S}^{2}$ (notation aligned with Section 4).}
\label{alg:rre}
\end{algorithm}

\begin{algorithm}[h!]
{\scriptsize\rule[0.5ex]{1\columnwidth}{1pt}}\par
\begin{raggedright}
\textbf{Input:} $(\{\bm{x}_i\}_{i=1}^n,y)$ with $\bm{x}_i\in\mathbb{S}^2$;
stages $G$; learning rate $\eta\in(0,1]$; base parameters $\Theta$;
rotation generator $\mathtt{HaarSO}(3)$.\\
\textbf{Output:} Predictor $\widehat f_G$ of the form (MT-18).
\end{raggedright}

\noindent\textbf{Procedure.}
\begin{enumerate}
\item Initialize $f_0(\bm{x})\equiv \bar y_\Omega$ and residuals $r_{0,i}\coloneqq y_i-f_0(\bm{x}_i)$.
\item For $t=1,\dots,G$:
\begin{enumerate}
\item Draw $R_t\sim \mathtt{HaarSO}(3)$.
\item Fit a UHWT weak learner on rotated inputs:
\begin{align}
T_t \leftarrow \textsc{SpherePartitionerUH}\big(\{R_t\bm{x}_i\}_{i=1}^n,\,\{r_{t-1,i}\}_{i=1}^n;\Theta\big),
\end{align}
and let $g_t(\bm{x})$ be the corresponding UH reconstruction on $T_t$ evaluated at $R_t\bm{x}$ (cf.\ (MT-16)--(MT-18)).
\item Update
\begin{align}
f_t(\bm{x}) \leftarrow f_{t-1}(\bm{x}) + \eta\, g_t(\bm{x}),\qquad
r_{t,i}\leftarrow y_i-f_t(\bm{x}_i).
\end{align}
\end{enumerate}
\item \textbf{Return} $f_G$.
\end{enumerate}
{\scriptsize\rule[0.5ex]{1\columnwidth}{1pt}}\par
\caption{RRB: Random Rotation Boosting with UH weak learners on $\mathbb{S}^{2}$ (matched to (MT-15)--(MT-18)).}
\label{alg:rrb}
\end{algorithm}

\paragraph{Haar-uniform rotations on $\mathrm{SO}(3)$.}
To draw $R\sim\mathrm{Haar}(\mathrm{SO}(3))$, sample $G\in\mathbb{R}^{3\times 3}$ with i.i.d.\ $\mathcal{N}(0,1)$
entries, compute the QR factorization $G=Q\,R_{\mathrm{up}}$ with $Q^\top Q=I_3$, and enforce $Q\in\mathrm{SO}(3)$ by a sign fix:
\begin{align}
\widetilde Q \;=\; Q\cdot \mathrm{diag}\bigl(1,1,\mathrm{sign}(\det Q)\bigr),\qquad \det(\widetilde Q)=+1.
\end{align}
Then $\widetilde Q$ is Haar on $\mathrm{SO}(3)$.
 
If a base tree is fit on rotated locations $\{R\bm{x}_i\}$, then test points must be mapped consistently to $R\bm{x}$
before evaluating the UH reconstruction on that tree, matching (MT-18).

\section{Orthonormal UH Basis}
\label{sec:UH_orthonormal}
Let $\mu_n = \frac{1}{n}\sum_{i=1}^n \delta_{\bm x_i}$ be the empirical measure,
and let $\mathcal{T}^\circ$ be a finite binary partition of $\Omega$ with root cell $A_0$.
For each internal node $A \in \mathcal{T}^\circ$, let $A_L$ and $A_R$ denote its left and right children, and write
$\mu_n(A) = \int_A \mathrm d\mu_n = \frac{1}{n} \sum_{i=1}^n \mathbf{1}\{\bm x_i \in A\}$ for the empirical mass of $A$.

Define the scaling function and wavelet functions by
\begin{align}
\phi_{A_0}(\bm x)
&:= \frac{\mathbf{1}\{\bm x \in A_0\}}{\sqrt{\mu_n(A_0)}},
\label{eq:uh-scaling-def}
\\
\psi_A(\bm x)
&:= \sqrt{\frac{\mu_n(A_L)\,\mu_n(A_R)}{\mu_n(A)}}
\left(
  \frac{\mathbf{1}\{\bm x \in A_L\}}{\mu_n(A_L)}
  - \frac{\mathbf{1}\{\bm x \in A_R\}}{\mu_n(A_R)}
\right),
\qquad A \in \mathcal{T}^\circ.
\label{eq:uh-wavelet-def}
\end{align}
We equip $L^2(\mu_n)$ with the inner product
$\langle f,g\rangle_n := \int f(\bm x) g(\bm x)\, \mathrm d\mu_n(\bm x)$.
\begin{lemma}[Orthonormality of the UH system]
\label{lem:uh-orthonormal}
The collection
\[
\mathcal{U} := \{\phi_{A_0}\} \cup \{\psi_A : A \in \mathcal{T}^\circ\}
\]
is an orthonormal system in $L^2(\mu_n)$.
\end{lemma}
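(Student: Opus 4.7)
The plan is to verify the four conditions that define orthonormality in $L^2(\mu_n)$: (i) $\|\phi_{A_0}\|_n = 1$; (ii) $\|\psi_A\|_n = 1$ for every internal node $A$; (iii) $\langle \phi_{A_0}, \psi_A\rangle_n = 0$; and (iv) $\langle \psi_A, \psi_B\rangle_n = 0$ for distinct internal $A,B$. The combinatorial input driving (iv) is the nesting property of binary partition trees: for any two cells in $\mathcal{T}^\circ$, either they are disjoint or one strictly contains the other. Throughout, I work under the non-degeneracy assumption $\mu_n(A_L),\mu_n(A_R) > 0$, without which $\psi_A$ is not even well-defined; this is the ``no empty child'' requirement already built into the construction.

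For (i), I would compute directly $\|\phi_{A_0}\|_n^2 = \mu_n(A_0)^{-1}\int \mathbf{1}\{\bm x \in A_0\}\,\mathrm d\mu_n = 1$. For (ii), since $A_L$ and $A_R$ are disjoint, the cross term in $\psi_A^2$ vanishes and only the two squared-indicator contributions survive, giving
\begin{align*}
\|\psi_A\|_n^2
&= \frac{\mu_n(A_L)\mu_n(A_R)}{\mu_n(A)}\left(\frac{1}{\mu_n(A_L)} + \frac{1}{\mu_n(A_R)}\right) = \frac{\mu_n(A_L)+\mu_n(A_R)}{\mu_n(A)} = 1,
\end{align*}
using $\mu_n(A) = \mu_n(A_L) + \mu_n(A_R)$. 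The key auxiliary fact to record next is the mean-zero property $\int \psi_A\,\mathrm d\mu_n = 0$, which follows because the two children contribute $\mu_n(A_L)/\mu_n(A_L) - \mu_n(A_R)/\mu_n(A_R) = 0$ after absorbing the scaling prefactor. Condition (iii) is then immediate: $\psi_A$ is supported in $A \subset A_0$ and $\phi_{A_0}$ is constant on $A_0$, so $\langle \phi_{A_0},\psi_A\rangle_n = \mu_n(A_0)^{-1/2}\int \psi_A\,\mathrm d\mu_n = 0$.

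The one step requiring a small amount of care is (iv), which I would handle by case analysis using the binary-tree nesting. If $A \cap B = \varnothing$, then the supports of $\psi_A$ and $\psi_B$ are disjoint and $\langle \psi_A,\psi_B\rangle_n = 0$ trivially. Otherwise one strictly contains the other; without loss of generality $B \subsetneq A$, and by binarity $B$ lies entirely in exactly one child of $A$, say $B \subset A_L$. On $A_L$ the function $\psi_A$ is constant, so I can pull that constant out of the integral and reduce to $\int_{A_L}\psi_B\,\mathrm d\mu_n = \int \psi_B\,\mathrm d\mu_n = 0$ by the mean-zero property established above. The only real obstacle is ensuring the case analysis is exhaustive, and this is precisely the nesting dichotomy for binary partition trees; no additional calculation is needed.
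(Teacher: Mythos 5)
Your proposal is correct and follows essentially the same route as the paper's proof: direct norm computations, the zero-mean property of each $\psi_A$, and the disjoint-versus-nested case analysis in which a nested $B$ falls inside a single child of $A$ where $\psi_A$ is constant. The only addition is your explicit non-degeneracy remark ($\mu_n(A_L),\mu_n(A_R)>0$), which the paper leaves implicit in the construction.
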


\begin{proof}
We first check that each element has unit $L^2(\mu_n)$ norm and then verify pairwise
orthogonality.

\emph{Norms.}
For the scaling function,
\[
\|\phi_{A_0}\|_n^2
= \int \phi_{A_0}^2(\bm x)\,\mathrm d\mu_n(\bm x)
= \frac{1}{\mu_n(A_0)}\int \mathbf{1}\{\bm x \in A_0\}\,\mathrm d\mu_n(\bm x)
= \frac{\mu_n(A_0)}{\mu_n(A_0)} = 1.
\]
Fix an internal node $A \in \mathcal{T}^\circ$ with children $A_L,A_R$.
By definition \eqref{eq:uh-wavelet-def}, $\psi_A$ is supported on $A_L \cup A_R$
and takes the constant values
\[
c_L := \sqrt{\frac{\mu_n(A_L)\,\mu_n(A_R)}{\mu_n(A)}}\,\frac{1}{\mu_n(A_L)}
\quad\text{on }A_L,
\qquad
c_R := -\sqrt{\frac{\mu_n(A_L)\,\mu_n(A_R)}{\mu_n(A)}}\,\frac{1}{\mu_n(A_R)}
\quad\text{on }A_R.
\]
Hence
\begin{align*}
\|\psi_A\|_n^2
&= \int \psi_A^2(\bm x)\,\mathrm d\mu_n(\bm x) \\
&= \int_{A_L} c_L^2\,\mathrm d\mu_n(\bm x)
   + \int_{A_R} c_R^2\,\mathrm d\mu_n(\bm x) \\
&= c_L^2\,\mu_n(A_L) + c_R^2\,\mu_n(A_R) \\
&= \frac{\mu_n(A_L)\,\mu_n(A_R)}{\mu_n(A)}
    \left(\frac{1}{\mu_n(A_L)} + \frac{1}{\mu_n(A_R)}\right) \\
&= \frac{\mu_n(A_R)}{\mu_n(A)} + \frac{\mu_n(A_L)}{\mu_n(A)} \\
&= 1,
\end{align*}
since $\mu_n(A) = \mu_n(A_L)+\mu_n(A_R)$.

\emph{Orthogonality with the scaling function.}
Each wavelet $\psi_A$ has zero $\mu_n$-mean:
\begin{equation} \label{eq:zero-mean}
   \int \psi_A(\bm x)\,\mathrm d\mu_n(\bm x)
   = c_L\,\mu_n(A_L) + c_R\,\mu_n(A_R) 
   = 0. 
\end{equation}
Therefore
\[
\langle \phi_{A_0},\psi_A\rangle_n
= \frac{1}{\sqrt{\mu_n(A_0)}}
  \int_{A_0} \psi_A(\bm x)\,\mathrm d\mu_n(\bm x)
= 0.
\]

\emph{Orthogonality between wavelets.}
Let $A,B \in \mathcal{T}^\circ$, $A \neq B$.
Because $\mathcal{T}^\circ$ is a binary tree, nodes $A$ and $B$ are either disjoint,
or one node is strictly contained in the other.

If $A$ and $B$ are disjoint, then
$\psi_A \psi_B \equiv 0$ and hence $\langle \psi_A,\psi_B\rangle_n = 0$.
Suppose instead that $B \subset A$ (the case $A \subset B$ is symmetric).
Since $B$ lies entirely inside one child of $A$, say $B \subseteq A_L$,
the wavelet $\psi_A$ is constant on $B$ with value $c_L$ as above.
Thus
\[
\langle \psi_A,\psi_B\rangle_n
= \int_B \psi_A(\bm x)\psi_B(\bm x)\,\mathrm d\mu_n(\bm x)
= c_L \int_B \psi_B(\bm x)\,\mathrm d\mu_n(\bm x) = 0,
\]
because $\psi_B$ has zero $\mu_n$-mean on its support, as shown above in \eqref{eq:zero-mean}.

This exhausts all possibilities, so every pair of distinct elements of $\mathcal{U}$
is orthogonal, and each has unit norm. Therefore $\mathcal{U}$ is orthonormal
in $L^2(\mu_n)$.
\end{proof}

\section{Connection to CART}
\label{sec:connection-to-CART}

This section explores the similarity between CART and the optimization approach to fitting a UHWT in the grid case.
We first consider how the splits are chosen.
On a grid, and when $|\cdot|$ in (MT-3) counts
design points, this reduction in squared error matches the CART
criterion up to constants. The greedy rule
(MT-4) therefore coincides with the usual CART
split rule. 
For a candidate split $s\colon A\to(A_{1},A_{2})$, the empirical
squared-error decrease for a leafwise estimator is
\begin{align}
\Delta_{Y}(s)
&=
\sum_{i: X_{i}\in A}(Y_{i}-\bar{Y}_{A})^{2}
-
\sum_{i: X_{i}\in A_{1}}(Y_{i}-\bar{Y}_{A_{1}})^{2}
-
\sum_{i: X_{i}\in A_{2}}(Y_{i}-\bar{Y}_{A_{2}})^{2}
\nonumber\\
&=
\frac{N_{A_{1}}N_{A_{2}}}{N_{A}}
(\bar{Y}_{A_{1}}-\bar{Y}_{A_{2}})^{2},
\label{eq:delta-cart}
\end{align}
which, on a grid, is exactly the square of the UH coefficient
$|w_{d,\ell}(A)|$ used in the optimization rule
(MT-4). Thus the greedy UH split and the greedy
CART split share the same basic objective.

The identity \eqref{eq:delta-cart} shows that UH splitting is fully
aligned with CART when the same candidate set is used: both rules
pick the split that maximizes the drop in empirical risk, and the UH
coefficient is simply a more convenient parameterization of that
drop. 

In the analysis of \cite{donoho1997cart}, CART
appears as a method for selecting an orthonormal basis from a
predefined library. Here the same equality between reduction in
squared error and squared UH coefficient is used in the opposite
direction: the CART partition is interpreted as a sequence of UH
atoms and thereby defines an orthogonal wavelet system in
$L^{2}(\mu_{n})$ on the empirical design. Compared with the
univariate UH construction in \citet{fryzlewicz2007unbalanced}, our
optimization approach works in multiple dimensions, allows continuous
split locations, and links the UH coefficients directly to the
regression fit of a multivariate tree.

However, the reconstruction is different due to the regularization (MT-7).
If the wavelet coefficients are not regularized (i.e., if $\tilde{w}_{d,\ell}(A)=w_{d,\ell}(A)$ for all nodes $A$), 
then (MT-5) yields the orthogonal projection of
$y$ onto the span of the selected UH atoms in $L^{2}(\mu_{n})$. 

\begin{theorem}
\label{thm:grid-coincide}
Let $(\bm{x}_{i},y_{i})$ for $i=1,\dots,n$ be training data.
Let $\mathcal{Q}$ be a binary-tree partition of $\Omega$. For
$\bm{x}\in\Omega$ let $A_{\bm{x}}$ denote the leaf of $\mathcal{Q}$
that contains $\bm{x}$ and define
$\hat{f}_{\mathcal{Q}}^{\text{leaf}}(\bm{x})
=
n_{A_{\bm{x}}}^{-1}\sum_{\bm{x}_{i}\in A_{\bm{x}}}y_{i}$,
where $n_{A_{\bm{x}}}$ is the number of training locations in
$A_{\bm{x}}$. If $\Omega$ is a grid and $|\cdot|$ in
(MT-3) equals the number of training locations in
its argument, then $\hat{f}_{\mathcal{Q}}^{\text{leaf}}$ equals
$\hat{f}_{\mathcal{Q}}^{\text{UH}}$ defined in
(MT-5) when $\tilde{w}_{d,\ell}(A)=w_{d,\ell}(A)$.
\end{theorem}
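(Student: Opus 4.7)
The plan is to establish the equality pointwise on each leaf of $\mathcal{Q}$ by a telescoping calculation along the root-to-leaf path indexing (MT-5). Both estimators are constant on each leaf of $\mathcal{Q}$, so I only need to verify that on the path $A_{0}\supset A_{1}\supset\cdots\supset A_{S(\bm{x})}$ each term $u_{i}\,w_{d_{i},\ell_{i}}(A_{i})/c(A_{i})$ equals the increment $\bar{y}_{A_{i+1}}-\bar{y}_{A_{i}}$. Once this is in hand, the telescoping sum collapses to $\bar{y}_{A_{S(\bm{x})}}-\bar{y}_{\Omega}$, and combining with the leading $\bar{y}_{\Omega}$ in (MT-5) gives $\bar{y}_{A_{S(\bm{x})}}=\hat{f}^{\text{leaf}}_{\mathcal{Q}}(\bm{x})$, which closes the identity.

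The key algebraic step substitutes the grid convention $|B|=n_{B}$ into (MT-3) and the normalization in (MT-5). For an internal node $A_{i}$ with children of sample sizes $n_{l},n_{r}$ and means $\bar{y}_{l},\bar{y}_{r}$, this yields $w_{d_{i},\ell_{i}}(A_{i})=\sqrt{n_{l}n_{r}/n_{A_{i}}}\,(\bar{y}_{l}-\bar{y}_{r})$ and $c(A_{i})=n_{A_{i}}^{1/2}(n_{l}/n_{r})^{u_{i}/2}$. For $u_{i}=+1$ the ratio simplifies by direct cancellation to $(n_{r}/n_{A_{i}})(\bar{y}_{l}-\bar{y}_{r})$, and the elementary identity $\bar{y}_{l}-\bar{y}_{A_{i}}=(n_{r}/n_{A_{i}})(\bar{y}_{l}-\bar{y}_{r})$, obtained from $n_{A_{i}}\bar{y}_{A_{i}}=n_{l}\bar{y}_{l}+n_{r}\bar{y}_{r}$, matches this with $\bar{y}_{A_{i+1}}-\bar{y}_{A_{i}}$ since $A_{i+1}=A_{i,l}$. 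The case $u_{i}=-1$ is symmetric, with $n_{l}$ and $n_{r}$ swapping roles and an extra sign from $u_{i}$ cancelling the sign flip in $\bar{y}_{l}-\bar{y}_{r}$.

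As a conceptual cross-check, the same conclusion also follows from Lemma~\ref{lem:uh-orthonormal}: the orthonormal UH system contains $1+|\mathcal{T}^{\circ}|$ elements, which equals the number of leaves of $\mathcal{Q}$ by the standard binary-tree identity; each element already lies in the space $W$ of functions constant on the leaves, and a dimension count identifies $\operatorname{span}\mathcal{U}=W$. Both the unregularized UH reconstruction and the leafwise estimator are then the $L^{2}(\mu_{n})$-orthogonal projection of $y$ onto $W$, and uniqueness of the projection forces them to agree. The main obstacle is the sign bookkeeping of the exponent $u_{i}/2$ in $c(A_{i})$, which flips the ratio $n_{l}/n_{r}$ inside the normalization; once both cases $u_{i}=\pm 1$ are checked, the rest is the one-line identity for $\bar{y}_{l}-\bar{y}_{A}$ and telescoping. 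Admissibility ensures each leaf contains at least one design point, so $\hat{f}^{\text{leaf}}_{\mathcal{Q}}$ is well defined and the function-level equality holds on all of $\Omega$.
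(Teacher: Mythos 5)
Your proof is correct. The main telescoping argument verifies exactly the right per-split identity: with $|B|=n_B$ one gets $u_i\,w_{d_i,\ell_i}(A_i)/c(A_i)=\bar y_{A_{i+1}}-\bar y_{A_i}$ in both sign cases, and summing along the root-to-leaf path collapses to $\bar y_{A_{S(\bm{x})}}-\bar y_{\Omega}$, which together with the leading global mean in (MT-5) gives the leaf mean. The paper organizes the same computation differently: its proof of this theorem proceeds by induction on the partition, adding one split at a time and checking that the discrepancy \eqref{eq:function-difference} vanishes when $|\cdot|$ counts training points (the path-telescoping form of the identity is instead the content of the paper's Theorem~\ref{thm:img-coef}). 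The two organizations rest on the identical one-node algebra, so this part is essentially the paper's argument reshuffled. Your secondary argument via Lemma~\ref{lem:uh-orthonormal} is the genuinely different route: counting $1+|\mathcal{T}^\circ|$ orthonormal atoms against the number of nonempty leaves identifies $\operatorname{span}\mathcal{U}$ with the space of leaf-constant functions, and uniqueness of the $L^2(\mu_n)$-projection forces the two estimators to coincide at the design points, hence everywhere since both are leaf-constant. This buys a coordinate-free explanation with no sign bookkeeping, at the cost of needing the (implicit but necessary) assumption that every leaf contains a training point so that the dimension count closes; the telescoping computation, by contrast, is fully explicit and is what the paper's reconstruction formula (MT-5) is built on. Both arguments are sound.
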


Therefore,  
the main difference between our optimization approach and tree-based regression algorithms such as CART %
lies in how the empirical estimate is regularized. 
CART regularizes by pruning subtrees, which can be interpreted as shrinking all
coefficients in those subtrees to zero before applying (MT-5). 
In contrast, our approach regularizes by thresholding UH coefficients 
at each node using a threshold calibrated by a MAD estimate of the noise. 

To motivate the use of wavelet regularization,
we note that existing consistency results for
CART \citep{scornet2015consistency,wager2018estimation,klusowski2024large}
often rely on no-interaction or smoothness conditions on $f$ that are not
well matched to the piecewise-constant and edge-driven structure of
images and tensors. In contrast, UH-based estimators admit
consistency for broader function classes in one dimension and for
certain multiscale constructions
\citep{fryzlewicz2007unbalanced,kolaczyk2005multiscale}. 
Empirically, \cite{agarwal2022hierarchical} also find (for non-grid domains) that CART's predictive ability often improves when additional ($L_2$) shrinkage is applied to an existing CART tree, even when it has already been pruned.

\section{Proof of Theorem~\ref{thm:grid-coincide}}
\label{sec:Proof-of-CART-coincide} 
\begin{proof}
We will prove this claim by induction.
For the base case, we have $\hat{f}_{\mathcal{Q}}^{\text{leaf}} = \hat{f}_{\mathcal{Q}}^{\text{UH}}$ when $\mathcal{Q} = \{\Omega\}$, in which case the functions $\hat{f}_{\mathcal{Q}}^{\text{leaf}}$ and $\hat{f}_{\mathcal{Q}}^{\text{UH}}$ have constant value $n^{-1}_\Omega \sum_{\bm{x}_i \in \Omega} y_i$, i.e., the sample mean of the observed response values.

Now suppose $\mathcal{P}$ is a binary-tree partition such that $\hat{f}_{\mathcal{P}}^{\text{leaf}} = \hat{f}_{\mathcal{P}}^{\text{UH}}$.
For a leaf node $A \in \mathcal{P}$, suppose $\mathcal{Q}$ is a binary-tree partition that is exactly $\mathcal{P}$ except that $A$ is an internal node in $\mathcal{Q}$ with leaf nodes as children.
By this assumption, we have
\begin{align}
    \hat{f}_{\mathcal{Q}}^{\text{leaf}}(\bm{x}) \;=\; \hat{f}_{\mathcal{P}}^{\text{leaf}}(\bm{x}) \;=\; \hat{f}_{\mathcal{P}}^{\text{UH}}(\bm{x}) \;=\; \hat{f}_{\mathcal{Q}}^{\text{UH}}(\bm{x}) \quad\text{for any } \bm{x} \notin A.
\end{align}
Now denote the left and right child of $A$ as $A_L$ and $A_R$, respectively.
We have 
\begin{align}\label{eq:function-difference}
    \hat{f}_{\mathcal{Q}}^{\text{UH}}(\bm{x}) - \hat{f}_{\mathcal{Q}}^{\text{leaf}}(\bm{x})
    \;=\; n_L \bar{y}_L\Bigg[-\frac{n_R}{n_L} \frac{1}{n_A} + \frac{|A_R|}{|A_L|}\frac{1}{|A|}\Bigg] + n_R \bar{y}_R \Bigg[\frac{1}{n_A} - \frac{1}{|A|}\Bigg] \quad\text{for any } \bm{x} \in A_L.
\end{align}
The RHS vanishes if $|\cdot|$ is the number of training locations in the argument $\cdot$, in which case $\hat{f}_{\mathcal{Q}}^{\text{leaf}}(\bm{x}) = \hat{f}_{\mathcal{Q}}^{\text{UH}}(\bm{x})$ for any $\bm{x} \in A_L$.
Similar reasoning will imply that $\hat{f}_{\mathcal{Q}}^{\text{leaf}}(\bm{x}) = \hat{f}_{\mathcal{Q}}^{\text{UH}}(\bm{x})$ for any $\bm{x} \in A_R$.
Therefore, $\hat{f}_{\mathcal{Q}}^{\text{leaf}} = \hat{f}_{\mathcal{Q}}^{\text{UH}}$.
\end{proof}

\section{Posterior distribution with latent states}
\label{sec:posterior-latent}

The following theorem, Theorem~\ref{thm:posterior}, extends Theorem~MT-2.1 to incorporate latent states.
(Our Theorem~\ref{thm:posterior} is also written in a way that emphasizes that a tree can be grown breadth-first or depth-first.)
For tractability, the distribution of the latent states is modeled as a top-down Markov tree so that
the state of a node can depend on the state of its parent but not on the state of any other node. 
For example, an early-stopping mechanism can be used: a node at depth $j$ is no longer allowed to split with probability $\rho_j$, where $\rho_j=1$ if its parent is not allowed to split and $\rho_j<1$ if its parent is allowed to split; 
this mechanism can greatly reduce the bottom-up computation of $\Phi_{s}(A)$ in Theorem~\ref{thm:posterior}.
Alternatively, a spike-and-slab framework enables a less stringent regularization approach:
if a node is assigned to be a spike in order to indicate that its UH coefficient will be omitted when constructing the signal, then the probability that a node is a spike is allowed to depend on whether or not the node's parent is a spike.

\begin{theorem}
\label{thm:posterior}
Suppose $T$ has a RUHWT prior with split dimension probabilities
$\{\lambda_{d}(A)\colon A\in\mathcal{A},d\in\mathcal{D}(A)\}$ and
split location distributions $\{B_{A,d}\colon A\in\mathcal{A},d\in\mathcal{D}(A)\}$
on $[0,1]$, with zero probability of creating a child with no
training locations. 
Suppose that, given $T$ and hyperparameters $\bm{\phi}$, the coefficients are
conditionally independent given latent states
$\mathcal{S}=\{S_{i} \in \{1,\ldots,K\}\colon A_{i} \in T\}$:
\begin{align}
(w_{i},z_{i})\mid S(A_{i})=s
\;\ind\;p_{i}^{(s)}(w,z\mid\bm{\phi}),\quad A_{i} \in T.
\end{align}
Assume that the distribution of the latent states follows a top-down
Markov tree whose transition kernels 
$P(S(A)=s'\mid S(A_p)=s) \eqqcolon \rho_{j}(s,s')$, where $A_p$ is $A$'s parent, can 
depend on $A$'s depth $j$ in the tree $T$. Then the joint marginal posterior of
$(T,\mathcal{S})$ admits the following sequential generative description. 

We begin with $k=0$ and $T^{(0)} = \{\Omega\}$.
Given a tree $T^{(k)}$, a node $A \in T^{(k)}$ whose latent state has not yet been generated, and the latent state of $A$'s parent $A_p$, 
the state of $A$ is drawn from
\begin{align}
P(S(A)=s'\mid S(A_{p})=s,T^{(k)},\bm{y})
&=
\rho_{j}(s,s')
\sum_{d\in D(A)}\lambda_{d}(A)\int_0^1
M_{d,\ell}^{(s')}(A)\,
\frac{\Phi_{s'}(A_{l}^{(d,\ell)})\Phi_{s'}(A_{r}^{(d,\ell)})}{\Phi_{s}(A)}\,\mathrm{d}B_{A,d}(\ell),\label{eq:posterior_state}
\end{align}
where $j$ is the depth of $A$.
Given $S(A)=s'$ and $j<J$,
the split dimension and location of $A$ are drawn from
\begin{align}
P(D(A)=d\mid S(A)=s',T^{(k)},\bm{y})
&\propto
\lambda_{d}(A)\int_0^1
M_{d,\ell}^{(s')}(A)\,
\Phi_{s'}(A_{l}^{(d,\ell)})\Phi_{s'}(A_{r}^{(d,\ell)})\,\mathrm{d}B_{A,d}(\ell),\label{eq:posterior_dimension_latent}\\
\mathrm{d}B_{A,d}(\ell\mid D(A)=d,S(A)=s',T^{(k)},\bm{y})
&\propto
\mathrm{d}B_{A,d}(\ell)\,
M_{d,\ell}^{(s')}(A)\,
\Phi_{s'}(A_{l}^{(d,\ell)})\Phi_{s'}(A_{r}^{(d,\ell)}),\label{eq:posterior_location_latent}
\end{align}
where
$M_{d,\ell}^{(s)}(A_i)\coloneqq\int p_{i}^{(s)}(w_{d,\ell}(A_i),z\mid\bm{\phi})\mathrm{d}z$,
and the state-specific marginal likelihood satisfies
\begin{align}
\Phi_{s}(A)
=
\sum_{s'}\rho_{j}(s,s')
\sum_{d\in D(A)}\lambda_{d}(A)\int_0^1
M_{d,\ell}^{(s')}(A)\,
\Phi_{s'}(A_{l}^{(d,\ell)})\Phi_{s'}(A_{r}^{(d,\ell)})\,\mathrm{d}B_{A,d}(\ell)\label{eq:phi2_latent}
\end{align}
if $A$ is not atomic and $\Phi_{s}(A)=1$ if $A$ is atomic.
\end{theorem}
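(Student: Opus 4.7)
The plan is to prove Theorem~\ref{thm:posterior} by a recursive marginalization of the augmented joint density of $(T, \mathcal{S}, \{(w_i, z_i)\}_{A_i \in T})$, paralleling the proof strategy for Theorem~\ref{thm:posterior-nolatent} but carrying an additional state label through every level of the recursion.

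First I would write down the joint density. The RUHWT prior factorizes over internal nodes of $T$ as a product of $\lambda_{d(A)}(A)\,\mathrm{d}B_{A,d(A)}(\ell(A))$; the top-down Markov prior on $\mathcal{S}$ factorizes as a product of $\rho_{j(A)}(S(A_p), S(A))$ over nodes (with a root convention); and, conditional on $T$ and $\mathcal{S}$, the emission likelihood factorizes across nodes as a product of $p_i^{(S(A_i))}(w_i, z_i \mid \bm{\phi})$. Integrating out each auxiliary variable $z_i$ replaces the $i$-th emission by the marginal $M_{d(A_i), \ell(A_i)}^{(S(A_i))}(A_i)$, producing a density expressed entirely in terms of quantities that appear in the statement of the theorem.

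Second, I would introduce, for each node $A$ and each possible parent state $s$, the quantity $\tilde\Phi_s(A)$ defined as the integral of this marginalized density over all splits, states, and coefficients in the subtree rooted at $A$, conditional on $S(A_p) = s$. A structural induction then identifies $\tilde\Phi_s(A)$ with the $\Phi_s(A)$ defined by the recursion \eqref{eq:phi2_latent}; the base case is atomic $A$, where $\tilde\Phi_s(A) = 1$ by convention. For the inductive step, conditional on $S(A) = s'$, the latent states in $A_l$'s and $A_r$'s subtrees are independent by the Markov property on $\mathcal{S}$, and the coefficient-auxiliary pairs in the two subtrees are conditionally independent by the assumption placed on $(w_i, z_i)$. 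Hence the subtree integral factors as $\tilde\Phi_{s'}(A_l^{(d,\ell)})\,\tilde\Phi_{s'}(A_r^{(d,\ell)})$, and summing over $s'$ with transition weights $\rho_{j(A)}(s, s')$ together with integrating over admissible splits $(d, \ell)$ against $\lambda_d(A)\,\mathrm{d}B_{A,d}(\ell)$ reproduces exactly the right-hand side of \eqref{eq:phi2_latent}.

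Third, for the sequential generative description I would apply Bayes' rule locally at a single node. The conditional joint density of $(S(A), D(A), L(A))$ given $S(A_p) = s$ and the data is proportional to $\rho_{j(A)}(s, s')\,\lambda_d(A)\,M_{d,\ell}^{(s')}(A)\,\Phi_{s'}(A_l^{(d,\ell)})\,\Phi_{s'}(A_r^{(d,\ell)})$, whose normalizer is $\Phi_s(A)$ by the recursion established above. Marginalizing over $(d, \ell)$ yields \eqref{eq:posterior_state}; marginalizing further over $\ell$ with $s'$ fixed yields \eqref{eq:posterior_dimension_latent}; conditioning on $(s', d)$ yields \eqref{eq:posterior_location_latent}. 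Iterating this update node-by-node (breadth-first or depth-first) gives the sampling scheme in the statement. The main obstacle is the subtree factorization $\tilde\Phi_{s'}(A_l)\,\tilde\Phi_{s'}(A_r)$ in the inductive step: it requires the combined use of the top-down Markov structure on $\mathcal{S}$, which makes child-subtree state labels conditionally independent given $S(A) = s'$, and the coefficient-level conditional independence in the hypothesis, which makes the likelihood contributions of the two subtrees independent given their respective states. Once this Markov-on-tree factorization is carefully justified, the rest of the proof is bookkeeping that matches terms in the density against the claimed sequential conditionals, and the only genuine extension over Theorem~\ref{thm:posterior-nolatent} is the additional sum over $s'$ that the new latent-state layer introduces at each node.
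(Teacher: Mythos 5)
Your proposal is correct and follows essentially the same route as the paper's proof: identify $\Phi_{s}(A)$ with the subtree marginal likelihood $P(\bm{y}(A)\mid A\in T, S(A_{p})=s)$ by a bottom-up induction whose base case is the atomic convention $\Phi_{s}(A)=1$, and then obtain \eqref{eq:posterior_state}--\eqref{eq:posterior_location_latent} by applying Bayes' rule locally at each node with $\Phi_{s}(A)$ as the normalizer. The paper delegates some of the factorization bookkeeping to the proof of Theorem~2 of \citet{li2021learning}, whereas you spell it out directly, but the decomposition and the key inductive identification are the same.
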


\section{Proof of Theorems MT-2.1 and \ref{thm:posterior}}

Theorem~MT-2.1 is a special case of Theorem~\ref{thm:posterior} and hence will be proven using the following proof of Theorem~\ref{thm:posterior}.

\label{sec:Proof-of-Posterior} 
\begin{proof}
For notational simplicity, let $T = T^{(k)}$.
This proof will be a slight modification of the proof of Theorem~2
in \citet{li2021learning}. The key difference lies in what is meant
by an atom, which we define as any measurable subset of $\Omega$
that contains exactly one training location. Under this definition,
if $A$ is atomic, then $\Phi_{s}(A)=P(\bm{y}(A)|A\in T,S(A_{p})=s)=1$.
Now, for a node $A$, suppose we have shown that $\Phi_{s}(B)=P(\bm{y}(B)|B\in T,S(B_{p})=s)$
for all possible descendants $B$ of $A$.
If $A$ is at depth $j$, it follows that 
\begin{align*}
& P(\bm{y}(A)|A\in T,S(A_{p})=s) \\
& =\sum_{s'}\sum_{d\in D(A)}\int_{\ell}P(\bm{y}(A)|A\in T,S(A)=s',S(A_{p})=s,D(A)=d,L(A)=\ell)\\
 & \quad\times\underset{\eqqcolon\rho_{j}(s,s')}{\underbrace{P(S(A)=s'| A\in T,S(A_{p})=s)}}\times\underset{\eqqcolon\lambda_{d}(A)}{\underbrace{P(D(A)=d| A\in T,S(A_{p})=s)}}\times\mathrm{d}B_{A,d}(\ell)\\
 & =\sum_{s'}\rho_{j}(s,s')\sum_{d}\lambda_{d}(A)\int_{\ell}M_{d,\ell}^{(s')}(A)\Phi_{s'}(A_{l}^{(d,\ell)})\Phi_{s'}(A_{r}^{(d,\ell)})\,\mathrm{d}B_{A,d}(\ell),
\end{align*}
which leads to the definition of $\Phi_{s}(A)$ in Theorem~\ref{thm:posterior}.

The remainder of the proof follows by modifying the proof of Theorem~2
in \citet{li2021learning} by appropriately including split-location
probabilities in the relevant places in the proof. 

Now we prove the claim \eqref{eq:posterior_state}.
We have
\begin{align*}
    &P(S(A)=s', D(A)=d, L(A)=\ell, \bm{y}(A) \mid S(A_p)=s, T) \\    
    &=\underset{=\rho_j(s,s') \lambda_d(A) B_{A,d}(\ell)}{\underbrace{P(S(A)=s', D(A)=d, L(A)=\ell | S(A_p)=s, T)}} \\
    & \times \underset{=M_{d,\ell}^{(s')}(A) \Phi_{s'}(A_l^{(d,\ell)}) \Phi_{s'}(A_r^{(d,\ell)})}{\underbrace{P(\bm{y}(A) | S(A_p)=s, T, S(A)=s', D(A)=d, L(A)=\ell)}}.
\end{align*}
Marginalizing the above over $d, \ell$, we get 
\begin{align*}
    P(S(A)=s', \bm{y}(A) \mid S(A_p)=s, T) 
    &= \rho_j(s,s') \sum_{d \in D(A)} \lambda_d(A) \int_0^1 M_{d,\ell}^{(s')}(A) \Phi_{s'}(A_l^{(d,\ell)}) \Phi_{s'}(A_r^{(d,\ell)}) \,\mathrm{d}B_{A,d}(\ell).
\end{align*}
From this and the identity 
\begin{align*}
    P(S(A)=s' \mid S(A_p)=s, T, \bm{y}) 
    = \frac{P(S(A)=s', \bm{y}(A) \mid S(A_p)=s, T) }{P(\bm{y}(A) \mid S(A_p)=s, T)}
\end{align*}
where the denominator is just $\Phi_s(A)$,
we get the claim \eqref{eq:posterior_state}.

Finally, we prove the remaining claims \eqref{eq:posterior_dimension_latent} and \eqref{eq:posterior_location_latent}.
We have 
\begin{align*}
    P(D(A)=d, L(A)=\ell \mid S(A)=s', T, \bm{y}) 
    &= \frac{P(D(A)=d, L(A)=\ell, \bm{y}(A) \mid S(A)=s', T)}{P(\bm{y}(A) \mid S(A)=s', T)}\\
    &\propto \lambda_d(A) B_{A,d}(\ell) M_{d,\ell}^{(s')}(A) \Phi_{s'}(A_l^{(d,\ell)}) \Phi_{s'}(A_r^{(d,\ell)}).
\end{align*}
\eqref{eq:posterior_dimension_latent} follows by marginalizing the above over $\ell$.
\eqref{eq:posterior_location_latent} follows from 
\begin{align*}
    P(L(A)=\ell \mid D(A)=d, S(A)=s', T, \bm{y})
    &= \frac{P(D(A)=d, L(A)=\ell \mid S(A)=s', T, \bm{y}) }{P(D(A)=d \mid S(A)=s', T, \bm{y})}\\
    &\propto B_{A,d}(\ell) M_{d,\ell}^{(s')}(A) \Phi_{s'}(A_l^{(d,\ell)}) \Phi_{s'}(A_r^{(d,\ell)}).
\end{align*}

\end{proof}

\section{Proofs for Function Reconstruction}

\label{sec:reconstruction}

\subsection{Euclidean}

\label{sec:reconstruction-euclidean}

The following theorem provides an exact reconstruction of the original
noisy image. To construct the estimated image, we replace the unregularized
wavelet coefficients $w_{d_{i},\ell(A_{i})}(A_{i})$ in \eqref{eq:IHUH}
by the regularized versions. 
\begin{theorem}
\label{thm:img-coef}Given a collection $\{\bm{x}_{i},y(\bm{x}_{i})\}_{i=1}^{n},\bm{x}_{i}\in\Omega$
with $n>1$ and $y(\bm{x}_{i})\in[0,1]$, let $M_{\Omega}$ be the
mean of all observed values $y(\bm{x}_{i})$ in the domain $\Omega$,
and let $L$ be the depth of the tree that characterizes the unbalanced
splits. Given a UHWT, suppose $A_{0},A_{1},\ldots,A_{S}$ is the decreasing
sequence of nodes that contain atom $\x$, where $S\leq L-1$. For
each $i=0,1,\ldots,S$, let $d_{i}$ be the axis along which node
$A_{i}$ is split, and let $u_{i}$ be $-1$ if $\bm{x}$ is in the
right child of the $i$th split, and $1$ if $\bm{x}$ is in the left
child of the $i$th split. Then $y(\bm{x})$ equals 
\begin{align}
M_{\Omega}+\sum_{i=0}^{S-1}u_{i}\frac{w_{d_{i},\ell(A_{i})}(A_{i})}{c(A_{i})}\text{ where }c(A_{i})=|A_{i}|^{1/2}\left(\frac{|A_{i,l}^{(d_{i})}|}{|A_{i,r}^{(d_{i})}|}\right)^{u_{i}/2},\label{eq:IHUH}
\end{align}
where $w_{d_{i},\ell(A_{i})}$ is the UH coefficient (MT-3)
for the partition component for the node $A_{i}$. 
\end{theorem}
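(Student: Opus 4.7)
The cleanest route is a short telescoping argument along the root-to-leaf path for $\bm{x}$, avoiding any global basis expansion. For any node $B$, let $\bar{y}_B$ denote the $\mu$-weighted mean of $y$ on $B$; in particular $\bar{y}_{A_0} = M_\Omega$, and when $A_S$ contains the single training point $\bm{x}$ (the ``atomic'' case) we have $\bar{y}_{A_S} = y(\bm{x})$. The plan is to show that on each internal node the pair $(\bar{y}_A, w_{d,\ell}(A))$ determines $(\bar{y}_{A_l}, \bar{y}_{A_r})$, and then iterate along $A_0 \supset A_1 \supset \cdots \supset A_S$.

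The key algebraic step is to invert the two identities
\begin{align}
|A|\,\bar{y}_A \;=\; |A_l|\,\bar{y}_{A_l} + |A_r|\,\bar{y}_{A_r}, \qquad
w_{d,\ell}(A) \;=\; \sqrt{\tfrac{|A_l||A_r|}{|A|}}\,(\bar{y}_{A_l} - \bar{y}_{A_r}),
\end{align}
for $\bar{y}_{A_l}$ and $\bar{y}_{A_r}$. A direct computation yields, for the left child ($u=+1$),
$\bar{y}_{A_l} = \bar{y}_A + w_{d,\ell}(A)\,|A_r|^{1/2}/\bigl(|A|^{1/2}|A_l|^{1/2}\bigr)$, and for the right child ($u=-1$) an analogous expression with the opposite sign and with $|A_l|$, $|A_r|$ swapped inside the square root. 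A one-line check shows that both cases can be written uniformly as $\bar{y}_{\mathrm{child}(u)} = \bar{y}_A + u\,w_{d,\ell}(A)/c(A)$ with the normalizer $c(A) = |A|^{1/2}(|A_l|/|A_r|)^{u/2}$, which is exactly the $c(A_i)$ in the statement.

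With this child-update in hand, I would telescope along the decreasing chain $A_0 \supset A_1 \supset \cdots \supset A_S$, adding the contribution $u_i\,w_{d_i,\ell(A_i)}(A_i)/c(A_i)$ at each step. Using $\bar{y}_{A_0} = M_\Omega$ and $\bar{y}_{A_S} = y(\bm{x})$ gives the claimed identity. An alternative derivation is to evaluate the orthonormal UH expansion of $y$ on the atomic partition guaranteed by Lemma B.1 at $\bm{x}$: only the UH atoms along the path $A_0,\dots,A_{S-1}$ are supported at $\bm{x}$, and matching the pointwise value $\psi_{A_i}(\bm{x}) = u_i/c(A_i)$ (a direct evaluation of the definition in Section~MT-2.2) reproduces the same formula.

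The main obstacle is purely bookkeeping: one must check that the exponent $u_i/2$ inside $c(A_i)$ genuinely produces the correct normalizer in both cases $u_i = \pm 1$, since the asymmetry between $(|A_l|/|A_r|)$ and $(|A_r|/|A_l|)$ flips with the child chosen. Once that verification is written out explicitly, the induction is immediate and no additional analytic tools beyond the definitions of $w_{d,\ell}(A)$ and $c(A)$ are needed.
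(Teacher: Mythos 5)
Your proposal is correct and follows essentially the same route as the paper's proof: the paper defines partial sums $Q_k = \overline{y(A_k)} + \sum_{i=k}^{S-1} u_i\,w_{d_i,\ell(A_i)}(A_i)/c(A_i)$ and verifies the same single-step identity $\overline{y(A_{k+1})} = \overline{y(A_k)} + u_k\,w_{d_k,\ell(A_k)}(A_k)/c(A_k)$ that you derive by inverting the mean-decomposition and coefficient equations, then telescopes from $\overline{y(A_0)} = M_\Omega$ down to $\overline{y(A_S)} = y(\bm{x})$. Your sign/normalizer check for $u_i = \pm 1$ is exactly the case analysis the paper writes out, so no further comparison is needed.
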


\begin{proof}
For any node $A$, let $\overline{y(A)}\coloneqq|A|^{-1}\sum_{\bm{x}\in A}y(\bm{x})$
be the mean of all pixel values in $A$. For all $k=0,1,\ldots,S$,
define 
\begin{align*}
Q_{k} & \coloneqq\overline{y(A_{k})}+\sum_{i=k}^{S-1}u_{i}\frac{w_{d_{i},\ell(A_{i})}(A_{i})}{c(A_{i})}.
\end{align*}
(Here a summation vanishes if its starting index is larger than its
ending index.) Because $Q_{0}$ equals \eqref{eq:IHUH} and $Q_{S}=\overline{y(A_{S})}=y(\bm{x})$,
it suffices to show that $Q_{0}=Q_{1}=\cdots=Q_{S}$.

For all $k=0,1,\ldots,(S-1)$, we have $\overline{y(A_{k})}=\frac{1}{|A_{k}|}\sum_{\bm{a}\in A_{k,l}^{(d_{k})}}y(\bm{a})+\frac{1}{|A_{k}|}\sum_{\bm{a}\in A_{k,r}^{(d_{k})}}y(\bm{a})$
and 
\begin{align*}
\overline{y(A_{k})}+\ell_{k}\frac{w_{d_{k}}^{\text{UH}}(A_{k})}{c(A_{k})} & =\left[\frac{1}{|A_{k}|}+\ell_{k}\frac{1}{c(A_{k})}\left(\frac{1}{|A_{k,l}^{(d_{k})}|}-\frac{1}{|A_{k}|}\right)^{1/2}\right]\sum_{\bm{a}\in A_{k,l}^{(d_{k})}}y(\bm{a})\\
 & +\left[\frac{1}{|A_{k}|}-\ell_{k}\frac{1}{c(A_{k})}\left(\frac{1}{|A_{k,r}^{(d_{k})}|}-\frac{1}{|A_{k}|}\right)^{1/2}\right]\sum_{\bm{a}\in A_{k,r}^{(d_{k})}}y(\bm{a})\\
 & =\begin{cases}
\overline{y(A_{k,l}^{(d_{k})})}\text{ if }\ensuremath{\ell_{k}=1}\\
\overline{y(A_{k,r}^{(d_{k})})}\text{ if }\ensuremath{\ell_{k}=-1}
\end{cases}\\
 & =\overline{y(A_{k+1})}
\end{align*}
By adding the term $\sum_{i=k+1}^{S-1}u_{i}\frac{w_{d_{i},\ell(A_{i})}(A_{i})}{c(A_{i})}$
to every line in the preceding panel, we get $Q_{k}=Q_{k+1}$, which
implies $Q_{0}=Q_{1}=\cdots=Q_{S}$. 
\end{proof}

\subsection{General domain\label{subsec:General-domain}}

Theorem~\ref{thm:img-coef-general} below extends Theorem~\ref{thm:img-coef}
to 1-4 splits and
to a more general domain that has a notion of ``area'' or ``volume.''
Such a domain might not have a natural notion of which child is the
``left'' or ``right'' one, particularly if there is no natural
coordinate axis. Here we provide a labeling for such a case. (The
end of this subsection shows that this labeling is valid even though
it does not depend on any intrinsic properties of the nodes and is
chosen entirely by the user.)

Define the \emph{signed indicator}: 
\begin{align}
\ell_{j}=\begin{cases}
+1, & \text{if }x\in A_{j,\ell}\\
-1, & \text{if }x\in A_{j,r}
\end{cases}\,\text{(in a 1--2 split)}\quad\text{or}\quad\ell_{j}=\begin{cases}
+1, & \text{if }x\in\{A_{j,0},A_{j,2}\}\\
-1, & \text{if }x\in\{A_{j,1},A_{j,3}\}
\end{cases}\,\text{(in a 1--4 split)}
\end{align}
A 1--2 split we label the two children 
\begin{align}
A_{\ell}\quad\text{(the positive half-plane)},\qquad A_{r}\quad\text{(the negative one)}.
\end{align}
A 1--4 split still has one parent $A$ but now four grandchildren:
\begin{align}
A_{0},\;A_{1},\;A_{2},\;A_{3}.
\end{align}
To preserve the orthogonality of the UH wavelets, we \emph{group}
those four grandchildren into two aggregate halves: 
\begin{align}
A^{+}=A_{0}\cup A_{2},\qquad A^{-}=A_{1}\cup A_{3}.
\end{align}
Everything that was ``left vs right'' in the 1--2 split becomes
``$A^{+}$ vs $A^{-}$'' in the 1--4 split. 
\begin{itemize}
\item $\bar{y}_{A_{\ell}}\longrightarrow{\displaystyle \bar{y}_{A^{+}}=\frac{|A_{0}|\bar{y}_{A_{0}}+|A_{2}|\bar{y}_{A_{2}}}{|A^{+}|}}$ 
\item $\bar{y}_{A_{r}}\longrightarrow{\displaystyle \bar{y}_{A^{-}}=\frac{|A_{1}|\bar{y}_{A_{1}}+|A_{3}|\bar{y}_{A_{3}}}{|A^{-}|}}$ 
\item $|A_{\ell}|\longrightarrow|A^{+}|=|A_{0}|+|A_{2}|$ 
\item $|A_{r}|\longrightarrow|A^{-}|=|A_{1}|+|A_{3}|$ 
\item $\ell_{j}=\begin{cases}
+1 & \text{if }x\in\{A_{0},A_{2}\}\\
-1 & \text{if }x\in\{A_{1},A_{3}\}
\end{cases}$ 
\end{itemize}
\begin{theorem}
\label{thm:img-coef-general} Suppose we have a training collection
$\{\bm{x}_{a},y(\bm{x}_{a})\}_{a=1}^{n},\bm{x}_{a}\in\Omega$ with
$n>1$ and $y(\bm{x}_{a})\in\mathbb{R}$. For any node $A$, define
$y(A)\coloneqq\sum_{\bm{x}_{a}\in A}y(\bm{x}_{a})$, i.e., $y(A)$
is the sum of all training response values whose locations are in
$A$. Given a UHWT, for any node $A$ in the UHWT with child sets
$A^{+}$ and $A^{-}$ also in the UHWT, let 
\begin{align}
w(A)=\sqrt{\frac{|A^{+}|\cdot|A^{-}|}{|A|}}\left[y(A^{+})-y(A^{-})\right]
\end{align}
where $|A|=|A|_{n}$ denotes the number of training points in node
$A$. Given a location $\bm{x}^{*}\in\{\bm{x}_{i}\}_{i=1}^{n}$, suppose the
UHWT has a node $A_{S}$ at depth $S$ that contains $\bm{x}^{*}$ and
no other training location, and thus let $A_{S},A_{S-1},\ldots,A_{1},A_{0}$
be the telescoping sequence of nodes along the UHWT tree, where $A_{0}=\Omega$.
For each $i=0,1,\ldots,S$, let $u_{i}$ be $-1$ if $\bm{x}^{*}$
is in the ``minus'' child set $A_{i}^{-}$ of the $i$th split,
and $1$ if $\x^{*}$ is in the ``plus'' child set $A_{i}^{+}$
of the $i$th split. For all $k=0,1,\ldots,S$, define 
\begin{align*}
Q_{k} & \coloneqq\frac{y(A_{k})}{|A_{k}|}+\sum_{i=k}^{S-1}u_{i}\frac{w(A_{i})}{c(A_{i})},\text{ where }c(A_{i})=|A_{i}|^{1/2}\left(\frac{|A_{i}^{+}|}{|A_{i}^{-}|}\right)^{u_{i}/2}.
\end{align*}
Then $y(\bm{x}^{*})=Q_{0}=Q_{1}=\cdots=Q_{S}$. 
\end{theorem}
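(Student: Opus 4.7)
The plan is to prove the chain $y(\bm{x}^{*}) = Q_{0} = Q_{1} = \cdots = Q_{S}$ by downward induction on $k$, in direct parallel with the proof of Theorem~\ref{thm:img-coef} in Appendix~\ref{sec:reconstruction-euclidean}. The base case $Q_{S} = y(\bm{x}^{*})$ is immediate: since $A_{S}$ contains only $\bm{x}^{*}$ among the training locations, $|A_{S}| = 1$ and $y(A_{S}) = y(\bm{x}^{*})$, and the summation in the definition of $Q_{S}$ is empty. The inductive step $Q_{k} = Q_{k+1}$ reduces, after cancelling the common tail $\sum_{i=k+1}^{S-1} u_{i} w(A_{i})/c(A_{i})$, to the one-step telescoping identity
\begin{align*}
\frac{y(A_{k})}{|A_{k}|} + u_{k}\,\frac{w(A_{k})}{c(A_{k})} = \frac{y(A_{k+1})}{|A_{k+1}|}.
\end{align*}

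The main algebraic step is verifying this identity. Writing $p = |A_{k}^{+}|$, $q = |A_{k}^{-}|$, $P = y(A_{k}^{+})$, $Q = y(A_{k}^{-})$, so that $|A_{k}| = p+q$ and $y(A_{k}) = P+Q$, I substitute the definitions of $w(A_{k})$ and $c(A_{k})$ and simplify. The wavelet normalization $\sqrt{pq/(p+q)}$ combines with $c(A_{k}) = \sqrt{p+q}\,(p/q)^{u_{k}/2}$ to give $q/(p+q)$ when $u_{k} = +1$ and $p/(p+q)$ when $u_{k} = -1$, multiplied by the contrast of empirical child means $P/p - Q/q$. Adding this to $y(A_{k})/|A_{k}| = (P+Q)/(p+q)$ collapses to $P/p = y(A_{k}^{+})/|A_{k}^{+}|$ when $u_{k} = +1$ and to $Q/q = y(A_{k}^{-})/|A_{k}^{-}|$ when $u_{k} = -1$; in either case this equals $y(A_{k+1})/|A_{k+1}|$ because $A_{k+1}$ is by construction the child containing $\bm{x}^{*}$. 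Iterating the identity from $k = S-1$ down to $k = 0$ yields the claim.

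For the 1--4 split variant, I would handle it by viewing each 1--4 split as two consecutive 1--2 splits along the tree path: an outer split contrasting the aggregates $A_{k}^{+} = A_{k,0} \cup A_{k,2}$ and $A_{k}^{-} = A_{k,1} \cup A_{k,3}$, whose coefficient is exactly the $w(A_{k})$ defined in the theorem via the signed indicator, followed by an inner split that resolves the chosen aggregate into its specific constituent piece containing $\bm{x}^{*}$ and contributes the next telescoping term. The 1--4 case then reduces to applying the 1--2 one-step identity twice along consecutive tree levels, so no new calculation is needed.

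The main obstacle is purely notational: one must consistently interpret the wavelet coefficient $w(A)$ as proportional to the contrast of empirical \emph{means} $P/p - Q/q$ (as in \eqref{eq:UHbasis_function} and (MT-3)) rather than of raw sums, so that the $c(A_{k})$ normalization cancels correctly with the $\sqrt{pq/(p+q)}$ factor; and the aggregate-then-subdivide labeling $u_{k}$ for the 1--4 rule must be tracked carefully so that ``positive grandchildren'' $\{A_{k,0}, A_{k,2}\}$ align across the two sequential 1--2 identities. Once these bookkeeping conventions are fixed, the argument is a routine but careful computation that mirrors the rectangular case.
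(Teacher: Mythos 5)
Your proof is correct and follows essentially the same route as the paper's: reduce to the one-step telescoping identity $\frac{y(A_{k})}{|A_{k}|}+u_{k}\frac{w(A_{k})}{c(A_{k})}=\frac{y(A_{k+1})}{|A_{k+1}|}$ and verify it by the case split $u_{k}=\pm1$, with the base case $Q_{S}=y(\bm{x}^{*})$ immediate from $|A_{S}|=1$. Your remark that $w(A)$ must be interpreted as the normalized contrast of child \emph{means} $y(A^{+})/|A^{+}|-y(A^{-})/|A^{-}|$ (consistent with (MT-3)) rather than of the raw sums literally written in the theorem statement is exactly right --- the paper's own proof silently makes the same substitution in its first display, and the one-step identity fails if raw sums are used.
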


\begin{proof}
Because $Q_{S}=\frac{y(A_{S})}{|A_{S}|}=y(\bm{x}^{*})$, it suffices
to show that $Q_{0}=Q_{1}=\cdots=Q_{S}$.

For any $i=0,1,\ldots,S-1$, we have 
\begin{align*}
\frac{w(A_{i})}{c(A_{i})}=\frac{1}{|A_{i}|}\left(\frac{|A_{i}^{-}|}{|A_{i}^{+}|}\right)^{u_{i}/2}\left[\left(\frac{|A_{i}^{-}|}{|A_{i}^{+}|}\right)^{1/2}y(A_{i}^{+})-\left(\frac{|A_{i}^{+}|}{|A_{i}^{-}|}\right)^{1/2}y(A_{i}^{-})\right].
\end{align*}

For all $k=0,1,\ldots,(S-1)$, we have $y(A_{k})=y(A_{k}^{+})+y(A_{k}^{-})$
and thus 
\begin{align*}
\frac{y(A_{i})}{|A_{i}|}+u_{i}\frac{w(A_{i})}{c(A_{i})} & =\frac{1}{|A_{i}|}\left[1+u_{i}\left(\frac{|A_{i}^{-}|}{|A_{i}^{+}|}\right)^{(1+u_{i})/2}\right]y(A_{i}^{+})\\
 & +\frac{1}{|A_{i}|}\left[1-u_{i}\left(\frac{|A_{i}^{+}|}{|A_{i}^{-}|}\right)^{(1-u_{i})/2}\right]y(A_{i}^{-})\\
 & =\begin{cases}
y(A_{i}^{+})/|A_{i}^{+}| & \text{ if }\ensuremath{u_{i}=1}\\
y(A_{i}^{-})/|A_{i}^{-}| & \text{ if }\ensuremath{u_{i}=-1}
\end{cases}\\
 & =\frac{y(A_{i+1})}{|A_{i+1}|}.
\end{align*}
In this panel, the left-most term minus the right-most term equals
$Q_{i}-Q_{i+1}$. Thus we get $Q_{i}=Q_{i+1}$. Because this holds
for all $i$, we get $Q_{0}=Q_{1}=\cdots=Q_{S}$. 
\end{proof}
We remark that the "plus" and "minus" labeling of children does
not depend on any intrinsic properties of the nodes and is chosen
entirely by the user. We prove this here. Given the telescoping sequence
$A_{S},\ldots,A_{1},A_{0}$ from the previous theorem, let $B_{k}=A_{k}$
for all $k$, but suppose we swap the plus and minus labels of the
children of the node $B_{j}$ at some depth $j=0,1,\ldots,S-1$ so
that $A_{j}^{+}=B_{j}^{-}$ and $A_{j}^{-}=B_{j}^{+}$, but $A_{k}^{+}=B_{k}^{+}$
and $A_{k}^{-}=B_{k}^{-}$ and for all $k\neq j$. We then would have
$\ell_{j}^{A}=-\ell_{j}^{B}$, but $\ell_{k}^{A}=\ell_{k}^{B}$ for
all $k\neq j$. %
Comparing $\{Q_{k}^{B}\}_{k=0}^{S}$ to $\{Q_{k}^{A}\}_{k=0}^{S}$,
we note that any difference could only stem from a difference between
\begin{align}
\ell_{j}^{B}\frac{w(B_{j})}{c(B_{j})}\quad\text{vs}\quad\ell_{j}^{A}\frac{w(A_{j})}{c(A_{j})}.\label{eq:comparisonAB}
\end{align}
But we have 
\begin{align*}
c(B_{j})=|B_{j}|^{1/2}\left(\frac{|B_{j}^{+}|}{|B_{j}^{-}|}\right)^{\ell_{j}^{B}/2}=|A_{j}|^{1/2}\left(\frac{|A_{j}^{-}|}{|A_{j}^{+}|}\right)^{(-\ell_{j}^{A})/2}%
=c(A_{j})
\end{align*}
and 
\begin{align*}
\ell_{j}^{B}w(B_{j})&=\ell_{j}^{B}\sqrt{\frac{|B_{j}^{+}|\cdot|B_{j}^{-}|}{|B_{j}|}}\left[y(B_{j}^{+})-y(B_{j}^{-})\right]=(-\ell_{j}^{A})\sqrt{\frac{|A_{j}^{-}|\cdot|A_{j}^{+}|}{|A_{j}|}}\left[y(A_{j}^{-})-y(A_{j}^{+})\right]\\
&=\ell_{j}^{A}w(A_{j}).
\end{align*}
Thus the two terms in \eqref{eq:comparisonAB} are equal to each other,
and so $Q_{k}^{B}=Q_{k}^{A}$ for all $k$.

\subsection{Connection to SHAH Algorithm}

\label{sec:Connection-to-SHAH}

Our proposed UHWT algorithm has a connection to the SHAH algorithm
from \citet{fryzlewicz2016shah}. The SHAH algorithm takes a bottom-up
approach that merges the two neighboring pixel pairs with the minimum
absolute value of detailed coefficients, which has the same complexity
as single linkage hierarchical clustering of $O(n^{2})$, slower than
our Algorithm~\ref{alg:uhwt-opt}'s additional $O(D\cdot L\cdot n/2^{L})$
complexity if $L$ grows with $n$ (Also see Section~\ref{sec:Complexity}).

To see this connection, suppose the SHAH algorithm merges two nodes
$j$ and $k$ (not necessarily atoms), where their respective squared
weights $w_{j}^{2}$ and $w_{k}^{2}$ equal the number of cells in
the respective nodes $j$ and $k$. Then the merged node by definition
has squared weight equal to $w_{j}^{2}+w_{k}^{2}$, which by assumption
equals the sum of the number of cells in the two nodes $j$ and $k$.
By definition of SHAH algorithm, the merged node (which we index as
node $\ell$) has term 
\begin{align*}
X_{\ell}\coloneqq\frac{w_{j}X_{j}+w_{k}X_{k}}{\sqrt{w_{j}^{2}+w_{k}^{2}}}.
\end{align*}
then $w_{j}X_{j}$ equals the sum of the pixel values in node $j$,
and $w_{k}X_{k}$ equals the sum of the pixel values in node $k$.
Thus the numerator in the above panel equals the sum of the pixel
values in the merged node. Furthermore, the denominator equals the
square root of number of atoms in the merged node.

Suppose any atom has weight equal to 1 and is an atomic node, the
detailed coefficient 
\begin{align*}
\tilde{d}_{\ell}\coloneqq\frac{w_{j}}{\sqrt{w_{j}^{2}+w_{k}^{2}}}X_{k}-\frac{w_{k}}{\sqrt{w_{j}^{2}+w_{k}^{2}}}X_{j}=\frac{1}{\sqrt{n_{\ell}}}\left(\frac{\sqrt{n_{j}}}{\sqrt{n_{k}}}\sum_{\text{atom \ensuremath{i} in node }k}y_{i}-\frac{\sqrt{n_{k}}}{\sqrt{n_{j}}}\sum_{\text{atom \ensuremath{i} in node }j}y_{i}\right),
\end{align*}
which is exactly $w_{d,\ell}(A)$ in (MT-3). By
the SHAH algorithm, two neighboring nodes are chosen to merge if the
edge between them has the minimum absolute value of detail among any
pair of neighboring nodes. This is equivalent to splitting a node
at the location with the maximum absolute value of detail. Therefore,
our algorithm can be considered a generalization of SHAH.

\section{Proof of Theorem~MT-5.1}
\begin{proof}
Since $\mu=\mu_{n}$ on the grid, $\{\psi_{j}\}$ is orthonormal in
$L^{2}(\mu)$ and $w_{j}=\theta_{j}+z_{j}$ with $\theta_{j}=\langle f,\psi_{j}\rangle$
and $z_{j}=\langle\varepsilon,\psi_{j}\rangle$ sub-Gaussian with
proxy $\sigma/\sqrt{n}$. A union bound gives $\max_{j}|z_{j}|\le\tau$
with probability at least $1-\delta$. On this event, the standard
orthogonal soft-thresholding argument in Lemma~\ref{lem:soft-thresh}
yields, for any $S\subset\{1,\dots,M\}$, the inequality
\begin{align}
\sum_{j=1}^{M}\big(\operatorname{sgn}(w_{j})(|w_{j}|-\tau)_{+}-\theta_{j}\big)^{2}\ \le\ 4\sum_{j\notin S}\theta_{j}^{2}+4|S|\tau^{2}.
\end{align}
Adding the constant component and using orthonormality in $L^{2}(\mu)$
gives the UH-estimator inequality (MT-26). 
For the leafwise inequality (MT-27), 
for each $A \in \mathcal{P}$ we define
\[
\bar{f}_A \coloneqq \frac{1}{N_A} \sum_{i\colon \mathbf{X}_i \in A} f(\mathbf{X}_i)
\qquad\text{and}\qquad
\bar{\varepsilon}_{A} \coloneqq \frac{1}{N_A} \sum_{i\colon \mathbf{X}_i \in A} \varepsilon_i,
\]
decompose $\widehat{f}_{\mathcal{P}}^{\mathrm{leaf}}-f$ on $A$ as $(\bar{f}_A-f)+\bar{\varepsilon}_{A}$,
and union-bound control $\bar{\varepsilon}_{A}$ to yield the stated variance term. The Hölder bound on $\|f-f_{\mathcal{P}}\|$
is standard. 

In both the grid construction and the triangulated construction, the
paper proves that each admissible split produces two child indicator
functions which are balanced and normalized with respect to the
current measure. This guarantees orthogonality of the new detail to
the space generated so far, and it also guarantees unit norm in $L_{2}(\nu)$.
Since every split raises the number of cells by one, after $|\mathcal{P}|-|\mathcal{P}_0|$
splits we have exactly $|\mathcal{P}|-|\mathcal{P}_0|$ such details. The reconstruction
identities given in the paper for the grid case and for the triangulated
case show that these details together with the global mean span the
whole space of piecewise-constant functions on the final partition.
Once this orthonormal system is available, the coefficientwise soft-thresholding argument of the paper applies without change in either
domain and yields the oracle inequality with $M=|\mathcal{P}|-|\mathcal{P}_0|$.
The proof of the leafwise bound is also independent of the domain
since it only uses the union bound for the empirical noise means on
each cell of the same partition. 
\end{proof}

\begin{lemma}[Soft-thresholding inequality] \label{lem:soft-thresh}
Let $\theta=(\theta_{j})_{j=1}^{M}\in\mathbb{R}^{M}$. Suppose for
some $\tau>0$ that $z=(z_{j})_{j=1}^{M}\in\mathbb{R}^{M}$ satisfies
the coordinatewise bound 
\begin{align}
\|z\|_{\infty}\;\coloneqq\;\max_{1\le j\le M}|z_{j}|\;\le\;\tau.\label{eq:inf-bound}
\end{align}
Define the soft--threshold map $\eta_{\tau}(y)=\mathrm{sign}(y)\,(|y|-\tau)_{+}$.
Define $\widehat{\theta}=(\widehat{\theta}_{j})_{j=1}^{M}$ by soft--thresholding
the noisy coefficients $\theta+z$: 
\begin{align}
\widehat{\theta}_{j}\;=\;\eta_{\tau}(\theta_{j}+z_{j})\qquad(j=1,\dots,M).\label{eq:def-that}
\end{align}
Then the following deterministic inequalities hold: 
\begin{align}
\sum_{j=1}^{M}\big(\widehat{\theta}_{j}-\theta_{j}\big)^{2} 
& \le4\sum_{j=1}^{M}\min\big\{\theta_{j}^{2},\ \tau^{2}\big\},\label{eq:soft-master}\\
 & \le4\bigg(\sum_{j\notin S}\theta_{j}^{2}\;+\;|S|\,\tau^{2}\bigg)\quad\text{for every subset }S\subset\{1,\ldots,M\}.\label{eq:soft-oracle}
\end{align}
\end{lemma}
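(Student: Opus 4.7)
The plan is to reduce the lemma to a single deterministic scalar inequality applied coordinatewise. Since both sides of \eqref{eq:soft-master} are sums $\sum_j$ of terms depending only on the pair $(\theta_j, z_j)$, and $|z_j| \le \tau$ for each $j$ by \eqref{eq:inf-bound}, it suffices to establish
\[
  \bigl(\eta_\tau(\theta + z) - \theta\bigr)^2 \;\le\; 4\min\{\theta^2,\,\tau^2\}
  \qquad \text{for every } \theta \in \mathbb{R} \text{ and every } z \text{ with } |z| \le \tau,
\]
and then sum over $j = 1,\ldots,M$. The oracle form \eqref{eq:soft-oracle} is then immediate: given any $S \subset \{1,\ldots,M\}$, bound $\min(\theta_j^2,\tau^2) \le \tau^2$ for $j \in S$ and $\min(\theta_j^2,\tau^2) \le \theta_j^2$ for $j \notin S$, and sum.

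The scalar bound follows from a short case analysis on whether soft-thresholding zeros out its input. In the \emph{killed} regime $|\theta + z| \le \tau$, $\eta_\tau(\theta + z) = 0$, so the squared error equals $\theta^2$; the triangle inequality $|\theta| \le |\theta + z| + |z| \le 2\tau$ gives $\theta^2 \le 4\tau^2$, and trivially $\theta^2 \le 4\theta^2$, which covers both sides of the minimum. In the \emph{kept} regime $|\theta + z| > \tau$, we have $\eta_\tau(\theta + z) = (\theta + z) - \tau\,\mathrm{sign}(\theta + z)$, so the error equals $z - \tau\,\mathrm{sign}(\theta + z)$; its absolute value is at most $|z| + \tau \le 2\tau$, which delivers the $4\tau^2$ side. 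For the $4\theta^2$ side, a short sign-tracking argument shows that $|\theta + z| > \tau$ combined with $|z| \le \tau$ forces $\mathrm{sign}(\theta + z) = \mathrm{sign}(\theta)$ (and in particular $\theta \neq 0$); chasing $\tau - |\theta| < z \le \tau$ in the positive-$\theta$ case (and symmetrically for $\theta < 0$) then yields $|\eta_\tau(\theta + z) - \theta| < |\theta|$, which closes this regime.

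I expect the main obstacle to be the sign-tracking step in the kept regime. A naive triangle-inequality estimate $|\eta_\tau(\theta + z) - \theta| \le |z| + \tau$ yields only the $4\tau^2$ bound and misses the $4\theta^2$ bound, which is precisely what is needed to exploit small $|\theta_j|$ in the sparsity-adaptive oracle inequality used in Theorem~MT-5.1. The essential observation is that the hypothesis $|z| \le \tau$ prevents any sign flip whenever the coefficient survives thresholding, so that the subtracted term $\tau\,\mathrm{sign}(\theta + z)$ aligns with $\theta$ and cancels most of the shrinkage error, tightening it to a multiple of $|\theta|$. Once this cancellation is recorded, no probabilistic tool is used: both \eqref{eq:soft-master} and \eqref{eq:soft-oracle} are purely deterministic consequences of the scalar bound and the definition of $\eta_\tau$.
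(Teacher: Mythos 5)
Your proof is correct and follows essentially the same route as the paper's: a coordinatewise reduction to a deterministic scalar bound, split on whether soft-thresholding kills the coefficient, with the triangle inequality handling the killed regime and the $4\tau^2$ side of the kept regime. The only (harmless) difference is in the kept regime's $\theta^2$ side, where the paper uses the chain $|\widehat{\theta}_j-\theta_j|\le|\widehat{\theta}_j|+|\theta_j|\le(|\theta_j|+|z_j|-\tau)+|\theta_j|\le 2|\theta_j|$, whereas your sign-tracking argument yields the slightly sharper $|\widehat{\theta}_j-\theta_j|<|\theta_j|$; both suffice for the stated constant $4$.
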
 
\begin{proof}
We prove \eqref{eq:soft-master} by a coordinatewise argument, then
sum over $j$; \eqref{eq:soft-oracle} is an immediate consequence.

Fix an index $j$ and set 
\begin{align}
a\;=\;\theta_{j},\qquad b\;=\;z_{j},\qquad u\;=\;a+b,\qquad\widehat{a}\;=\;\eta_{\tau}(u).
\end{align}
Our goal is to show the scalar inequality 
\begin{align}
(\widehat{a}-a)^{2}\;\le\;4\min\{a^{2},\ \tau^{2}\}\quad\text{under the bound }|b|\le\tau.\label{eq:scalar-goal}
\end{align}
Summing \eqref{eq:scalar-goal} over $j$ yields \eqref{eq:soft-master}.

We consider two disjoint cases:

\textit{Case 1:} $|u|\le\tau$. By the triangle inequality, 
\begin{align}
|a|\;\le\;|u|+|b|\;\le\;\tau+\tau\;=\;2\tau,\label{eq:a-upper-2tau}
\end{align}
hence $a^{2}\le4\tau^{2}$. Because the inequality $|u|\le\tau$ implies
$\widehat{a}=\eta_{\tau}(u)=0$, we get 
\begin{align}
(\widehat{a}-a)^{2}=(0-a)^{2}=a^{2}\;\le\;\min\{a^{2},\ 4\tau^{2}\}\;\le\;4\min\{a^{2},\ \tau^{2}\}.\label{eq:case1-end}
\end{align}
 \textit{Case 2:} $|u|>\tau$. Then $\widehat{a}=\eta_{\tau}(u)=u-\tau\,\mathrm{sign}(u)$
and 
\begin{align}
\widehat{a}-a=\{u-\tau\,\mathrm{sign}(u)\}-a=b-\tau\,\mathrm{sign}(u).\label{eq:case2-diff}
\end{align}
We split Case 2 into two subcases. 

\textit{Case 2a:} $|a|\ge\tau$. Then $\min\{a^{2},\tau^{2}\}=\tau^{2}$.
From \eqref{eq:case2-diff} and the triangle inequality, 
\begin{align}
|\widehat{a}-a|\;=\;|b-\tau\,\mathrm{sign}(u)|\;\le\;|b|+\tau\;\le\;\tau+\tau\;=\;2\tau.\label{eq:bound-by-2tau}
\end{align}
Hence 
\begin{align}
(\widehat{a}-a)^{2}\;\le\;4\tau^{2}\;=\;4\min\{a^{2},\tau^{2}\}.\label{eq:case2a-end}
\end{align}

\textit{Case 2b:} $|a|<\tau$. Then $\min\{a^{2},\tau^{2}\}=a^{2}$.
Using \eqref{eq:case2-diff}, the triangle inequality, and the identity
$|\eta_{\tau}(u)|=|u|-\tau$ (which holds because $|u|>\tau$ by assumption), we obtain 
\begin{align}
|\widehat{a}-a| & \le|\widehat{a}|+|a|=(|u|-\tau)+|a|\label{eq:use-modulus}\\
 & \le(|a|+|b|-\tau)+|a| \qquad\qquad\text{(by $|u|=|a+b|\le|a|+|b|$)}\label{eq:triangle-u}\\
 & \le|a|+|a|\;=\;2|a| \quad\qquad\qquad\text{(since $|b|\le\tau$)}\label{eq:use-b-bound}
\end{align}
and therefore 
\begin{align}
(\widehat{a}-a)^{2}\;\le\;4a^{2}\;=\;4\min\{a^{2},\tau^{2}\}.\label{eq:case2b-end}
\end{align}

In Case 1 we have \eqref{eq:case1-end}. In Case 2a we have \eqref{eq:case2a-end}.
In Case 2b we have \eqref{eq:case2b-end}. These three inequalities
exhaust all possibilities for $(a,b)$ under $|b|\le\tau$ and each
yields \eqref{eq:scalar-goal}. Summing the scalar bound over $j=1,\dots,M$
gives \eqref{eq:soft-master}.

Fix an arbitrary subset $S\subset\{1,\dots,M\}$. For each $j$, 
\begin{align}
\min\{\theta_{j}^{2},\tau^{2}\}\;\le\;\begin{cases}
\tau^{2}, & j\in S,\\
\theta_{j}^{2}, & j\notin S.
\end{cases}
\end{align}
Combine with \eqref{eq:soft-master} to obtain \eqref{eq:soft-oracle}. 
\end{proof}

\section{Proof of Corollary~MT-5.2}
\begin{proof}
By Theorem~MT-5.1, for $\tau=\sigma\sqrt{2n^{-1}\log(2M/\delta)}$
the UH soft-thresholded estimator satisfies, with probability at least
$1-\delta$, 
\begin{align}
\|\widehat{f}_{P,\tau}^{\mathrm{UH}}-f\|_{L^{2}(\mu)}^{2}\le4\inf_{S\subset\{1,\dots,M\}}\left\{ \|f-f_{S}\|_{L^{2}(\mu)}^{2}+2|S|\,\frac{\sigma^{2}}{n}\log\frac{2M}{\delta}\right\} .
\end{align}
Taking $S=S^{\star}$ and using the assumption $f=f_{S^{\star}}$ yields 
\begin{align}
\|\widehat{f}_{P,\tau}^{\mathrm{UH}}-f\|_{L^{2}(\mu)}^{2}\le8\,s\,\frac{\sigma^{2}}{n}\log\frac{2M}{\delta}.
\end{align}

For the leafwise estimator, Theorem~MT-5.1 gives,
with probability at least $1-\delta$, 
\begin{align}
\|\widehat{f}_{\mathcal{P}}^{\mathrm{leaf}}-f\|_{L^{2}(\mu)}^{2}\le\|f-f_{\mathcal{P}}\|_{L^{2}(\mu)}^{2}+4\,\frac{\sigma^{2}}{m_{\min}(\mathcal{P})}\log\frac{2|\mathcal{P}|}{\delta}.
\end{align}
Since $f$ is piecewise-constant on $\mathcal{P}$ by assumption ($f\in\mathrm{span}\{1,\psi_{j}\}_{j\in S^{\star}}$),
we have $f=f_{\mathcal{P}}$ and the approximation term vanishes. The $c$-balance
condition $m_{\min}(\mathcal{P})\ge c\,n/|\mathcal{P}|$ implies 
\begin{align}
\|\widehat{f}_{\mathcal{P}}^{\mathrm{leaf}}-f\|_{L^{2}(\mu)}^{2}\le\frac{4}{c}\,|\mathcal{P}|\,\frac{\sigma^{2}}{n}\log\frac{2|\mathcal{P}|}{\delta}.
\end{align}
This proves the two bounds and the stated comparison of variance scales. 
\end{proof}

\section{Complexity\label{sec:Complexity}}
We use the following notations in this section: 
\begin{itemize}
\item $n$: number of observations, $\bm{x}_{i}\in\mathbb{R}^{D}$, all assumed
unit-norm. 
\item $D$: embedding dimension ($D=3$ for the sphere). 
\item $L$: maximal tree depth; with regular refinement $L=\lceil\log_{2}n\rceil$. 
\item $n_{\ell}$: total number of points stored in depth $\ell$ of a tree. 
\item $M$: number of trees in an ensemble. 
\end{itemize}
For every method we give \emph{time} $T(\cdot)$ and \emph{space}
$S(\cdot)$ in big-$O$ / big-$\Theta$ form and lower-order constants
are omitted.

\begin{table}[h!]
\centering\label{tbl:complexity} %
\begin{tabular}{lcc}
\hline 
{\small{}{}{}{}{}{}{}{}Method}  & \multicolumn{1}{c}{{\small{}{}{}{}{}{}{}{}Time $T(n,D)$}} & \multicolumn{1}{c}{{\small{}{}{}{}{}{}{}{}Space $S(n)$}}\tabularnewline
\hline 
{\small{}{}{}{}{}{}{}{}single-tree\,\texttt{balance} or \texttt{balance4}}  & {\small{}{}{}{}{}{}{}{}$\Theta\bigl(D\,nL\bigr)$}  & {\small{}{}{}{}{}{}{}{}$\Theta(n)$}\tabularnewline
{\small{}{}{}{}{}{}{}{}single-tree\,\texttt{adapt} or \texttt{adapt\_vertex}}  & {\small{}{}{}{}{}{}{}{}$\Theta\bigl(D\,n^{2}\bigr)$}  & {\small{}{}{}{}{}{}{}{}$\Theta(n)$}\tabularnewline
{\small{}{}{}{}{}{}{}{}Random Rotation Forest\,(\texttt{balance})}  & {\small{}{}{}{}{}{}{}{}$M\!\bigl(O(D^{3})+O(nD^{2})+\Theta(D\,nL)\bigr)$}  & {\small{}{}{}{}{}{}{}{}$\Theta(Mn)$}\tabularnewline
{\small{}{}{}{}{}{}{}{}Random Rotation Forest\,(\texttt{adapt})}  & {\small{}{}{}{}{}{}{}{}$M\!\bigl(O(D^{3})+O(nD^{2})+\Theta(D\,n^{2})\bigr)$}  & {\small{}{}{}{}{}{}{}{}$\Theta(Mn)$}\tabularnewline
{\small{}{}{}{}{}{}{}{}Rotation Forest (Euclidean\ PCA)}  & {\small{}{}{}{}{}{}{}{}$M\!\bigl(O(bD^{3}/K^{2})+\Theta(D\,nL)\bigr)$}  & {\small{}{}{}{}{}{}{}{}$\Theta(Mn)$}\tabularnewline
{\small{}{}{}{}{}{}{}{}SPCA\,Rotation Forest}  & {\small{}{}{}{}{}{}{}{}$M\!\bigl(O(n)+\Theta(D\,nL)\bigr)$}  & {\small{}{}{}{}{}{}{}{}$\Theta(Mn)$}\tabularnewline
{\small{}{}{}{}{}{}{}{}Gaussian Process on $\mathbb{S}^{D-1}$}  & {\small{}{}{}{}{}{}{}{}$\Theta(n^{3})$}  & {\small{}{}{}{}{}{}{}{}$\Theta(n^{2})$}\tabularnewline
\hline 
\end{tabular}\caption{Asymptotic training complexity (See SM~\ref{sec:Complexity}
for detailed derivations). $L=\lceil\log_{2}n\rceil$, $M$ = ensemble
size, $b$ the bootstrap fraction and $K$ the block count in Euclidean
Rotation Forest. For $D=3$ the terms $O(D^{3})$ and $O(nD^{2})$
are constant-factor overheads.}
\label{tbl:complexity1} 
\end{table}

\subsection{Single-tree splitters}

\textbf{Balanced midpoint split (method }\texttt{\textbf{balance}}\textbf{).}
The procedure satisfies 
\begin{align}
T_{\mathrm{bal}}(n,D)=\Theta\!\bigl(D\,nL\bigr),\qquad S_{\mathrm{bal}}(n)=\Theta(n).
\end{align}

At any node containing $n_{\ell}$ points one performs the 
\begin{enumerate}
\item longest-edge search: $O(1)$ arithmetic, 
\item one midpoint evaluation: $O(D)$, 
\item a single membership scan of the local list: $O(D\,n_{\ell})$. 
\end{enumerate}
Because the lists of siblings form a partition of the parent list
we have $\sum_{\ell=0}^{L}n_{\ell}=n$. Summation over depths therefore
yields $DnL$. Each sample index is stored exactly once, hence linear
memory.

\textbf{Balanced four-way split (method}\texttt{\textbf{balance4}}\textbf{).}
In the worst case 
\begin{align}
T_{\mathrm{bal4}}(n,D)=\Theta\!\bigl(D\,nL\bigr),\qquad S_{\mathrm{bal4}}(n)=\Theta(n).
\end{align}

Every point is still inspected once per tree level; the factor $4$ in
the number of children changes only the leading constant.

\textbf{Adaptive edge split (methods }\texttt{\textbf{adapt}}\textbf{
and }\texttt{\textbf{adapt\_vertex}}\textbf{).}
In the worst case 
\begin{align}
T_{\mathrm{adapt}}(n,D)=T_{\mathrm{adapt\_{vertex}}}(n,D)=\Theta\!\bigl(D\,n^{2}\bigr),\qquad S_{\mathrm{adapt}}(n)=\Theta(n).
\end{align}

Inside a node with $n_{\ell}$ points, the algorithm selects \emph{every}
point as pivot and, for each of three edges, rescans the full list.
Cost per node $=O(3D\,n_{\ell}^{2})$. If early splits are rejected
the root remains with $n_{\ell}=n$, giving $Dn^{2}$. Memory is unchanged.

\subsection{Rotation ensembles}

All ensembles apply a pointwise rotation $\bm{x}\mapsto \bm{x}R^{\top}$ before
calling a base splitter; prediction averages $M$ trees and costs
$O(Mn)$.

\textbf{Random-Rotation Ensemble (RRE).} In each tree: 
\begin{enumerate}
\item sample $R\in SO(D)$ via QR: $O(D^{3})$, 
\item rotate data: $O(nD^{2})$, 
\item build a tree: $T_{\text{tree}}(n,D)$. 
\end{enumerate}
Hence 
\begin{align}
T_{\mathrm{RRE}}(n,D,M)=M\!\left(O(D^{3})+O(nD^{2})+T_{\text{tree}}(n,D)\right).
\end{align}
For $D=3$ the algebraic overhead is constant.

\textbf{Rotation Forest (Euclidean PCA).}
Let $K$ be the block count and $b\in(0,1]$ the bootstrap fraction.
Per tree one performs $K$ SVDs of size $bn\times D/K$, each $O\!\bigl((D/K)^{3}\bigr)$.
Thus 
\begin{align}
T_{\mathrm{RotF}}=M\!\left(O\!\bigl(bD^{3}/K^{2}\bigr)+T_{\text{tree}}(n,D)\right),
\end{align}
again constant for $D=3$.

\textbf{Rotation Forest with spherical PCA (sPCA,\citet{hrluo_2021c}).}
Principal geodesic analysis in $\R^{3}$ costs $O(bnD)+O(D^{3})=O(n)$
per tree (dominated by mean and projection). Therefore 
\begin{align}
T_{\mathrm{sPCA\text{-}RotF}}(n,3,M)=M\,T_{\text{tree}}(n,3)+O(M).
\end{align}

\subsection{Intrinsic Gaussian process on the sphere}

Exact GP regression with a Matérn kernel on the sphere $\mathbb{S}^{2}$ satisfies 
\begin{align}
T_{\mathrm{GP}}(n)=\Theta(n^{3}),\qquad S_{\mathrm{GP}}(n)=\Theta(n^{2}),
\end{align}
and prediction at $m$ new points costs $\Theta(mn)$. 
\begin{proof}
Optimization of the exact marginal likelihood requires one inversion
or Cholesky factorisation of the $n\times n$ covariance matrix; both
are cubic. The dense matrix itself stores $n^{2}$ entries. Posterior
mean evaluation multiplies the same matrix inverse by $m$ right-hand
sides. 
\end{proof}

\section{Other tree-based methods for manifold domains}
\label{sec:other-tree-methods-for-manifold}

Other tree-based methods for manifold domains, such as BAST
\citep{luo2021bast} and BAMDT \citep{luo2022bamdt}, build adaptive
partitions by cutting along a spanning tree in a neighborhood graph
rather than by refining faces of a triangulation. Medoid trees
\citep{bulte2024medoid} partition observations directly to reduce
within node dispersion, but they do not introduce nested simplicial
refinements. In contrast, our UHWT  refines within faces when needed. This
allows it to emulate the flexible random partitions of spanning tree
methods while maintaining a sample exact reconstruction and a
multiscale representation that follows edges on curved surfaces. 
More global approaches based on graph spectral wavelets
\citep{mostowsky2024} or diffusion frames \citep{hui2022neural}
rely on eigen-decompositions of a Laplace operator and do not yield
a recursive data driven partition. Mesh-based wavelets such as those
of \citet{schroder1995spherical} assume a fixed nested tessellation
and so cannot adapt to sparse or irregular sampling, which our data-driven partitions can handle relatively well even with shallow trees.

\section{Further numerical experiments}

\subsection{Comparison to other image denoising methods}

\label{subsec:comparison-image-denoising}

Here we provide numerical comparisons of our wavelet-based methods
and general-use tree ensemble methods. These comparisons will use
real-world images. In both applications, we will see that our boosting
wavelet approach, which utilizes both wavelet technology and ensembling,
often has superior MSE performance against the tested methods. We
will see that many existing wavelet approaches are not able to overcome
the benefit gained by ensembling single learners, and that general-use
tree ensembles cannot overcome the benefits of wavelet-specific approaches
(e.g., energy conservation, sharp-edge detection) that have been massively
developed in image processing.

\subsubsection{OCT}

\label{subsec:OCT}

\begin{figure}
\centering \includegraphics[width=1\textwidth]{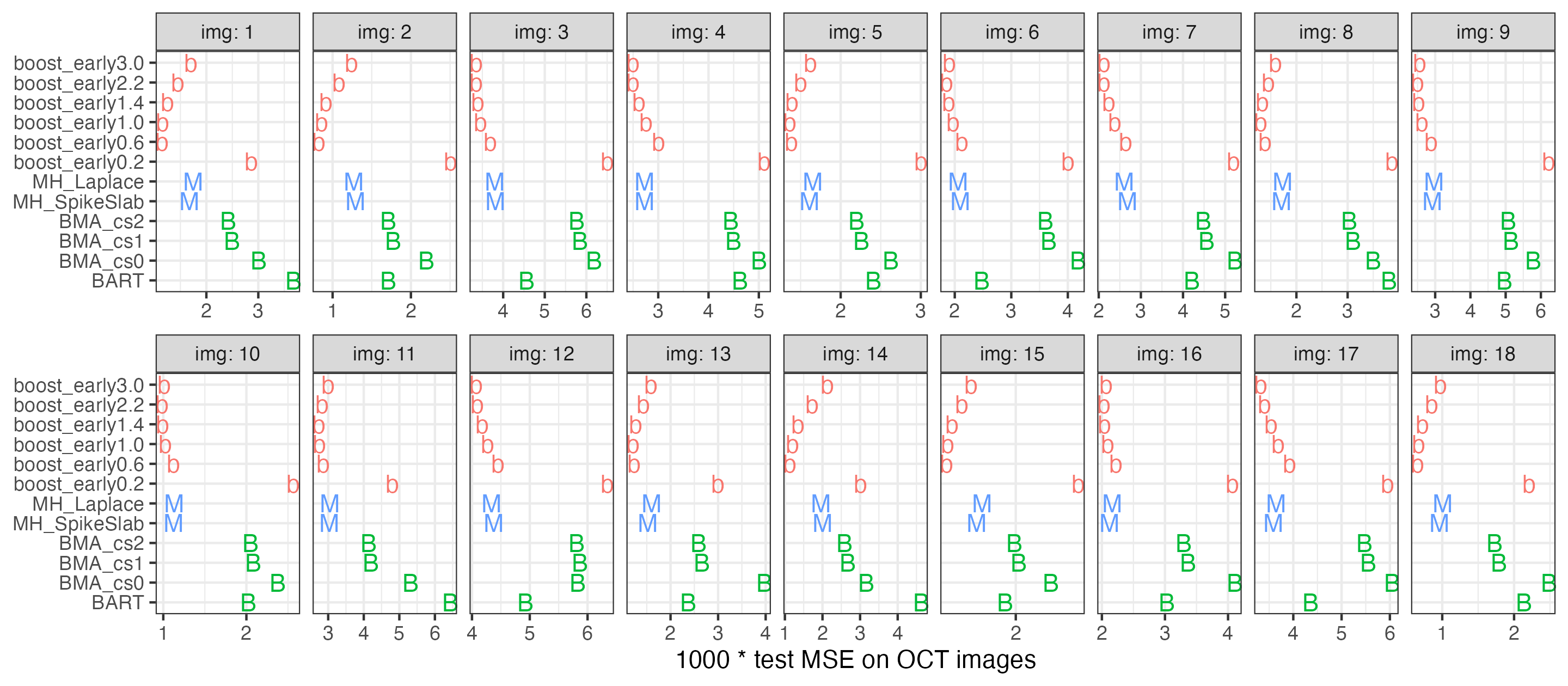}
\caption{Test MSE on 18 OCT images. boost denotes forward stagewise boosting with the
stated prune parameter $b$ used in the data-independent thresholding
quantity $\tau=b\sqrt{2\log n}$ as described in Section~MT-2.2.
MH\_Laplace and MH\_SpikeSlab denote the posterior mean of the backfitting-ensemble
draws with each tree grown using Metropolis-Hastings and either the
Laplace or spike-and-slab likelihood, respectively. BMA denotes WARP
with the stated step size for cycle spinning (cs). }
\label{fig:OCT} 
\end{figure}

For the first set of comparisons, we use optical coherence tomography
(OCT) image slices available at \url{https://people.duke.edu/~sf59/fang_tmi_2013.htm}.
For each of the 18 slices, there is an image with noise already included
and a registered image obtained by averaging 40 repeatedly sampled
scans \citep{fang2016segmentation}; we use the latter as the ``noiseless''
reference image to compare our denoised images against. %
Due to WARP requiring image size dimensions to be powers of two, we
resize these images from $900\times450$ to $1024\times512$.

On these images, we fit models using our boosting approach
and our backfitting approach with each tree grown using Metropolis-Hastings
and either the Laplace or spike-and-slab likelihood. For boosting we use
500 estimators and learning rate $0.05$; we also explore various
pruning values. For our backfitting approach we use 200 estimators
and 70 backfitting iterations and prior node-split probability $0.95*0.5^{\text{depth}}$.
We compare our approaches against WARP with cycle spinning (BMA) \citep{li2021learning,li2015fast},
SHAH \citep{fryzlewicz2016shah}, Translation-invariant (TI) de-noising
\citep{coifman1995translation}, Bayesian Additive Regression Trees
\citep{chipman2010bart} using the \texttt{BART} R package in CRAN
\citep{BARTCRAN}, and nonparametric Bayesian dictionary learning
\citep{zhou2011nonparametric}. Cycle spinning is a technique that
can be used to remove visual artifacts when reconstructing an image.

Figure~\ref{fig:OCT} shows the test MSE of the denoised images using
boosting with various pruning thresholds, our backfitting approach, BMA
with various cycle-spinning steps, and BART. For each image, we found
SHAH and TI to produce markedly larger MSEs than either boosting or BMA.
Furthermore, we found that nonparametric Bayesian dictionary learning
takes longer than boosting and requires 5GB cache for 18 OCT images. Hence,
we show the MSE results for only boosting, MH, BMA, and BART. We find for
prune parameters between 0.6 and 3.0 that boosting has smaller test MSE than
does BMA or BART. It seems that increasing the amount of cycle spinning
in BMA will further lower the test MSE, but not to an amount smaller
than the shown boosting MSEs. Furthermore, we see that the boosting MSEs exhibit
a `U' shape with respect to the prune parameter, which indicates a
bias-variance trade-off. Namely, the prune parameter $b=0.2$ provides
too little regularization to prevent overfitting. Finally, we see
that our backfitting approach produces test MSEs comparable to those
of boosting. Generally, we find that the denoised images for boosting and BMA
to be crisper than those for MH and BART, which tend to be more textured.

\subsubsection{BSDS}

\label{subsec:BSDS}

For this set of comparisons, we use 300 natural images from the Berkeley
segmentation dataset (BSDS300). For each noiseless image, we created
six noisy versions. For the first noise mechanism, we add Gaussian
noise with a standard deviation equal to, respectively, 0.2, 0.4,
and 0.6 times the noiseless image's pixel-value standard deviation.
For the second noise mechanism, we add Gaussian noise with a standard
deviation equal to $5/255$, $10/255$, and $15/255$. Thus each model
is ultimately trained on $300\times(3+3)=1800$ noisy images. Due
to the many training images and WARP requiring image size dimensions
to be powers of two, we resize these natural images to $256\times256$.

To these images, we fit models using our boosting approach
and our backfitting approach with either Laplace or spike-and-slab
likelihoods. For boosting we use 500 estimators and learning rate $0.05$;
we also explore various pruning values. For our backfitting approach
we use 1000 estimators and 500 backfitting iterations and prior node-split
probability $0.95*0.5^{\text{depth}}$. We compare our approaches
against BMA, BART, SHAH, and TI using the same code implementation
as in Section~\ref{subsec:OCT}.

The top row of Figure~\ref{fig:BSDS-mse} shows the test MSE averaged
over all 300 images for each noise level when the noise level depends
on the noiseless image's standard deviation. At each noise level,
the smallest test MSE is achieved by either boosting or BMA. boosting does best
at the smallest noise level, but BMA overtakes boosting for the larger noise
levels. However, we find that BMA tends to oversmooth textures in
the noiseless images. For example, the top row of Figure~\ref{fig:BSDS293029}
shows a noisy image (noise level $0.2$) with visible hexagonal tiling
on the ground. BMA's denoised image omits almost any hint of the hexagons,
whereas boosting is able to preserve most of the hexagons. Even when the
noise level is increased to $0.4$ (bottom row of Figure~\ref{fig:BSDS293029}),
boosting is still able to preserve a little bit of the hexagonal tiling.
The general pattern we found is that boosting tends to preserve textures
(e.g., ripples in a river, rock surfaces) and curved boundaries (e.g.,
of an owl) better than BMA does.

\begin{figure}[ht!]
\centering \includegraphics[width=1\textwidth]{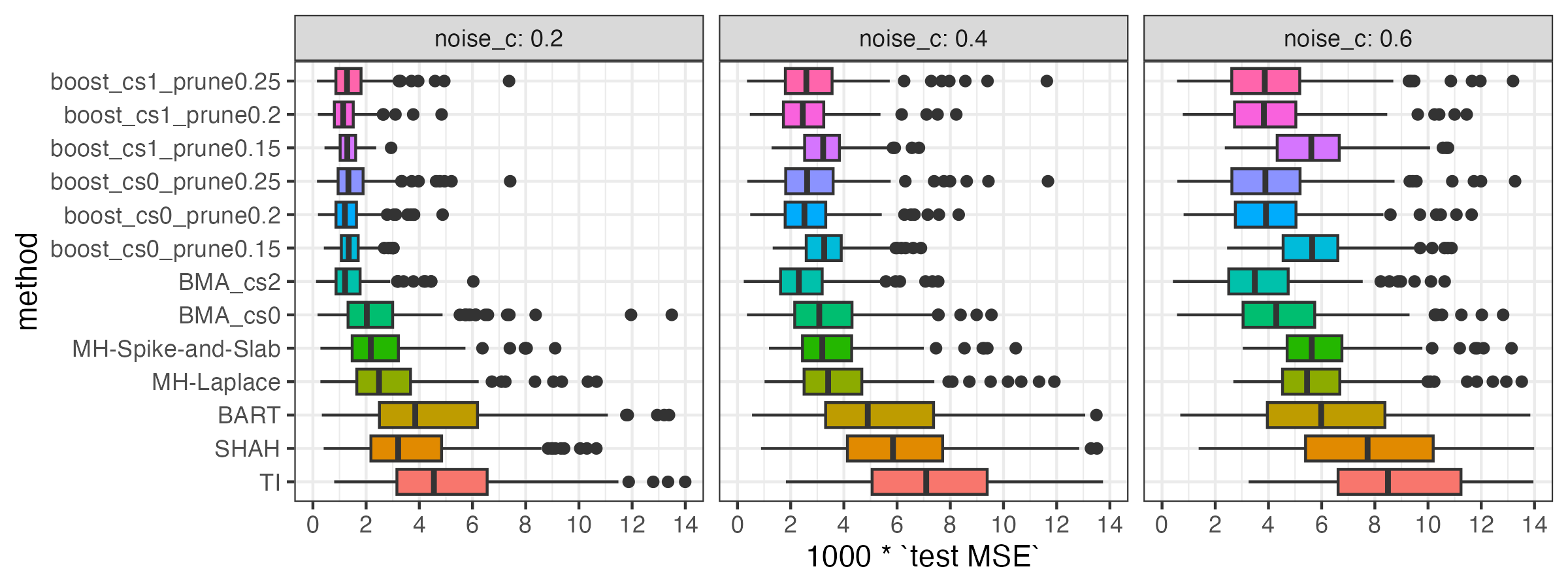}
~ \includegraphics[width=1\textwidth]{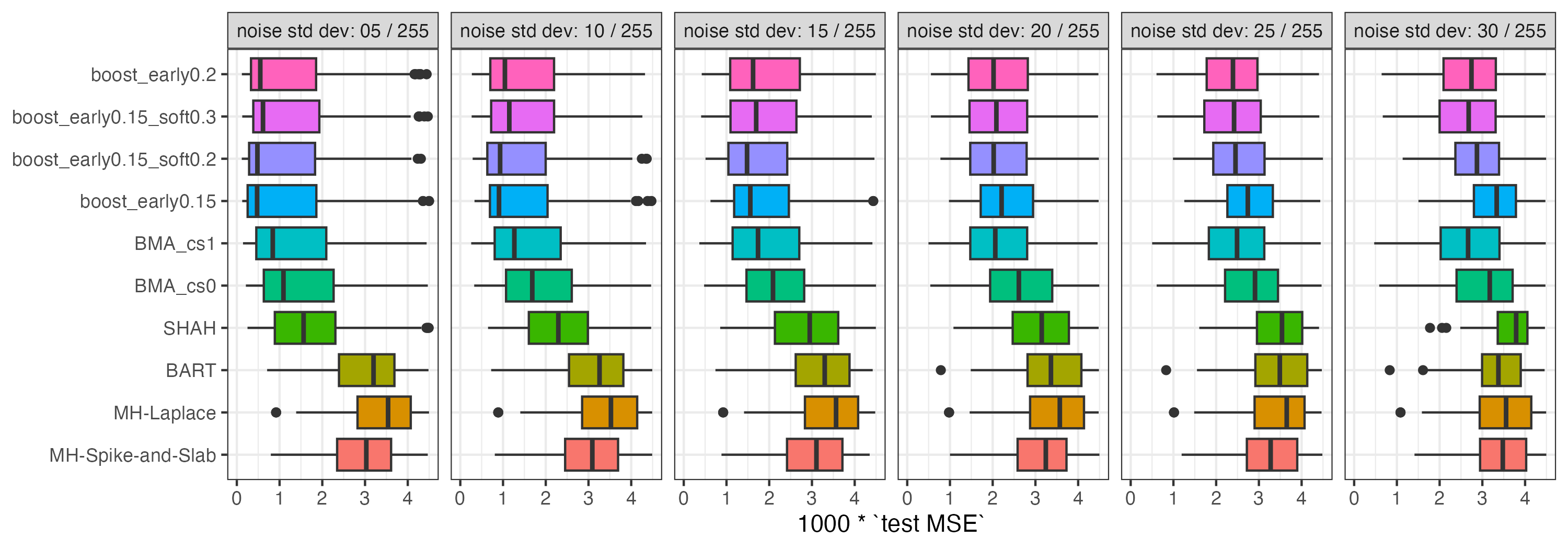}
\caption{Test MSE times $10^{3}$ averaged over 300 BSDS images. Each panel
corresponds to the stated noise level (\textbf{top:} noise standard
deviation is noise\_c $\times$ sd(image); \textbf{bottom:} noise
standard deviation is as stated and is independent of sd(image)).
Each row represents a method with some hyperparameter values; for
the BMA columns, \textquotedblleft cs\textquotedblright{} indicates
cycle spinning with some number of steps; for the boost columns, \textquotedblleft$b$\textquotedblright{}
indicates hyperparameter $b$ defined in \S~MT-6.1.
MH-Laplace and MH-Spike-and-Slab represent the posterior mean of the
backfitting-ensemble draws with each tree grown using Metropolis-Hastings
and either the Laplace or spike-and-slab likelihood, respectively.}
\label{fig:BSDS-mse} 
\end{figure}

\begin{figure}[t]
\centering \includegraphics[width=0.24\textwidth]{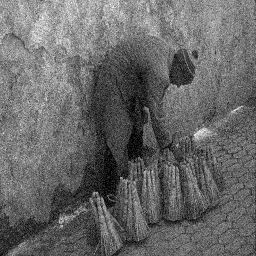}
~ \includegraphics[width=0.24\textwidth]{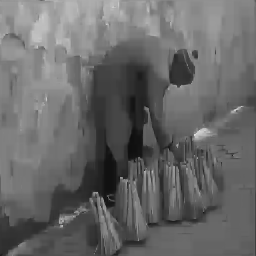}
~ \includegraphics[width=0.24\textwidth]{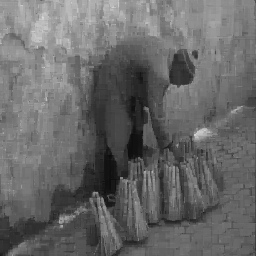}
~ \includegraphics[width=0.24\textwidth]{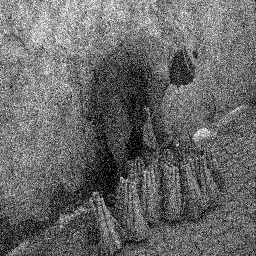}
~ \includegraphics[width=0.24\textwidth]{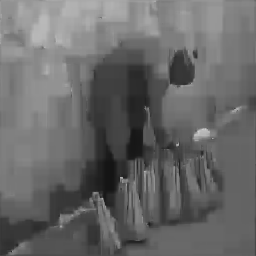}
~ \includegraphics[width=0.24\textwidth]{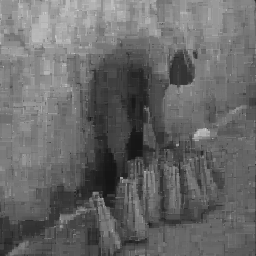}
\caption{Image 293029 from the BSDS data set. \textbf{Top row:} Left: image
with noise standard deviation $0.2\times\text{sd(image)}$. Center:
image denoised by BMA with cycle-spinning step=2 (MSE = $1.55\times10^{3}$).
Right: image denoised by   boosting with prune 0.2 and learning
rate 0.05 (MSE = $1.32\times10^{3}$). \textbf{Third row:} Left:
image with noise standard deviation $0.4\times\text{sd(image)}$.
Center: image denoised by BMA with cycle-spinning step=2 (MSE = $2.34\times10^{3}$).
Right: image denoised by   boosting with prune 0.2 and learning
rate 0.05 (MSE = $2.20\times10^{3}$). }
\label{fig:BSDS293029} 
\end{figure}

For the first noise mechanism, the additive noise is generated according
to 
\begin{align}
\sigma\in\{0.2,0.4,0.6\}\times\operatorname{sd}(I_{i}),
\end{align}
where $\operatorname{sd}(I_{i})$ is the standard deviation of the
$i$-th clean BSDS image $I_{i}$. This rule couples the magnitude
of the perturbation to image content: high-contrast or highly textured
images receive larger \emph{absolute} noise than smooth images. Consequently
the experiment becomes heteroskedastic across images even when the
nominal ``noise level'' is the same. %
The BMA denoiser, particularly with cycle spinning, adapts its shrinkage
to the noise standard deviation on a per-image basis and thereby behaves
like a variance-normalized estimator. When the noise standard deviation
$\sigma_{i}$ increases with $\operatorname{sd}(I_{i})$ (where $I_{i}$
is the $i$th clean BSDS image), the risk-optimal shrinkage on rough
images becomes more aggressive, which reduces the pixelwise MSE but
also suppresses fine textures. Models whose hyperparameters are tuned
globally across images (e.g., boosted ensembles with capacity fixed
per “noise level’’) cannot match this per-image adaptation and are
penalized precisely on the images where the absolute noise has been
inflated by design. The pattern in the top row of Figure~\ref{fig:BSDS-mse}---BMA
overtaking at the higher reported noise levels while qualitative panels
reveal oversmoothing---is consistent with this mechanism.

To eliminate the confounding, the injected noise must be decoupled
from image content. Hence for our second noise mechanism, we add i.i.d.\ Gaussian
noise with a fixed \emph{absolute} standard deviation $\sigma_{i}\in\{5,10,15,20,25,30\}/255$
after scaling all images to $[0,1]$. Hyperparameters are selected
once per $\sigma$ on a validation set of training images rather than
per test image. Model capacities are matched across methods by equating
effective degrees of freedom, for example by comparing a single tree
with $L$ leaves to a boosting run whose total leaves across trees
are approximately $L$.

We re-run the study under both the original noise rule and the corrected
protocol and examine the per-image difference 
\begin{align}
\Delta_{i}\;=\;\operatorname{MSE}_{\text{BMA}}(i)-\operatorname{MSE}_{\text{boost}}(i).
\end{align}
Under the original rule a simple regression $\Delta_{i}=\beta_{0}+\beta\,\operatorname{sd}(I_{i})+\epsilon_{i}$
should yield $\hat{\beta}<0$, indicating that BMA appears better
precisely when $\operatorname{sd}(I_{i})$ is large because those
images were assigned larger absolute noise. Under the corrected protocol
the slope should be statistically indistinguishable from zero, and
the apparent advantage of BMA on high-contrast images should diminish;
as capacity increases under matched budgets the ranking between single-tree
and ensemble estimators should stabilize or reverse for reasons intrinsic
to the estimators rather than to the noise-generation artifact.

\begin{figure}
\centering 
\includegraphics[width=0.85\textwidth]{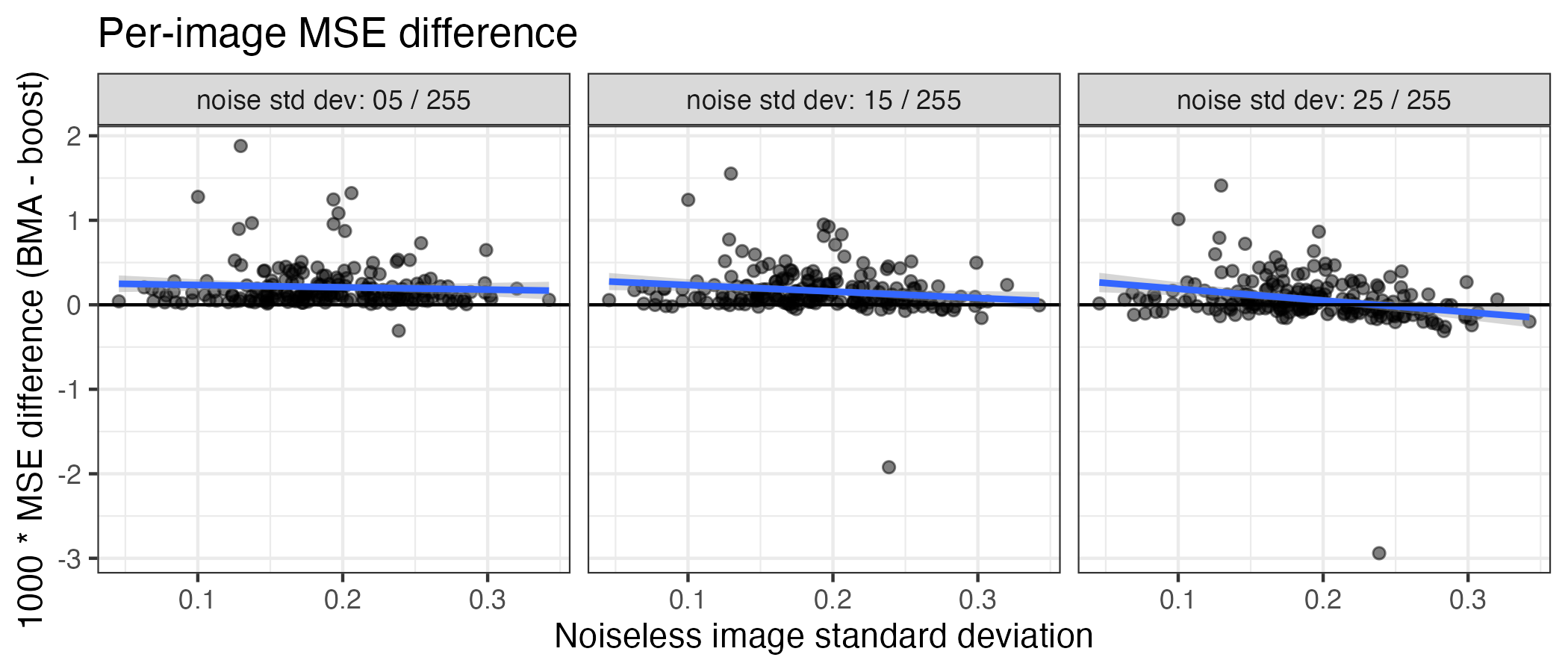} \caption{Per-image test MSE difference between BMA with one cycle-spinning
step and boosting with $b=0.2$. Each point corresponds to a BSDS image.}
\label{fig:BSDS-msediff_perimage} 
\end{figure}

\subsection{Numerical results for tensor}

\label{subsec:numerical-tensor}

We conduct an empirical study to evaluate the performance of unbalanced
Haar (UH) partitioning methods extended to three-dimensional tensor
domains. Here we consider a discrete, regular grid $\Omega\subset\mathbb{R}^{3}$
with dimensions $64\times64\times64$, representing a video sequence.
The clean video tensor signal $f\colon\Omega\to\mathbb{R}$ is corrupted
by additive i.i.d.\ Gaussian noise $\varepsilon\sim\mathcal{N}(0,\sigma^{2})$
where $\sigma=0.1\cdot\text{std}(f)$, yielding the observed tensor
$y=f+\varepsilon$. For tensor decision tree methods, we extract overlapping
patches of size $2\times2\times1$, generating $N=(62)\times(62)\times(64)=248832$
training samples.

The evaluation encompasses three reconstruction approaches. The first
follows the standard Gibbs-style backfitting, implementing single-step
proposals per component with $K=200$ additive terms 500 backfitting
iterations. The second utilizes a patch-based regression framework
using TensorDecisionTreeRegressor \citep{luo2024efficient} as weak
learners with maximum depth of 3 within a boosting ensemble
of $M=50$ estimators, where reconstruction proceeds by averaging
overlapping patch predictions. The third is a single TensorDecisionTreeRegressor
learner.

The quality of a reconstructed tensor $\hat{f}$ relative to the clean
signal $f$ is assessed via mean squared error: $\text{MSE}=\mathbb{E}[\|f-\hat{f}\|_{2}^{2}]$.
Figure~\ref{fig:Top:-noiseless-video.} illustrates the reconstruction
performance across temporal slices. The best performance is achieved
by the boosted ensemble of TensorDecisionTreeRegressors. This is perhaps
expected, seeing as these weak learners use tensor-specific technology.
In contrast, UHWTs are not tensor specific, and still the
backfitting ensemble of UHWTs achieves good performance
as well. Interestingly, the reconstruction by the boosted ensemble
looks to be a bit blurry, whereas the reconstruction by the backfitting
ensemble is crisper but does not denoise as well. Ensembling seems
necessary to achieve a satisfactory MSE, seeing as the single TensorDecisionTreeRegressor
learner produces an MSE that is at least five times either of those
by the ensemble models. 

Figure~\ref{fig:tensorUQ} demonstrates the uncertainty quantification
capabilities inherent in the Bayesian framework. The posterior mean
provides a denoised reconstruction, while the posterior standard deviation
$\sigma$ reveals regions of higher uncertainty, particularly along
boundaries and in areas with complex geometric features. The 95\%
credible interval width offers additional insight into reconstruction
confidence, with narrower intervals in homogeneous regions and wider
intervals in areas where the model exhibits greater uncertainty. This
uncertainty quantification proves valuable for assessing reconstruction
reliability and identifying regions requiring additional modeling
attention.

This evaluation confirms the viability of extending UH partitioning
to higher-dimensional tensor domains while highlighting the trade-offs
between single adaptive partitions versus additive ensemble strategies
for tensor reconstruction tasks.
\begin{figure}[H]
\centering

Clean video

\includegraphics[width=1\textwidth]{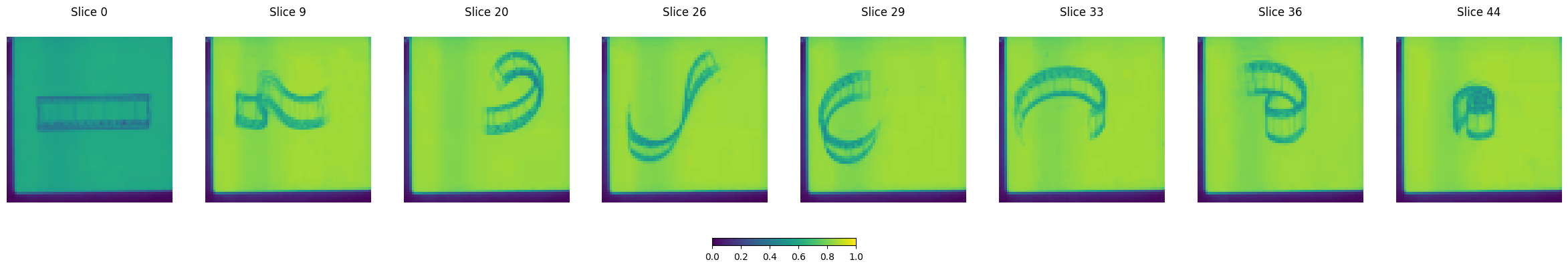}

Noisy video. Variance=$4.85\times10^{-2}$. 

\includegraphics[width=1\textwidth]{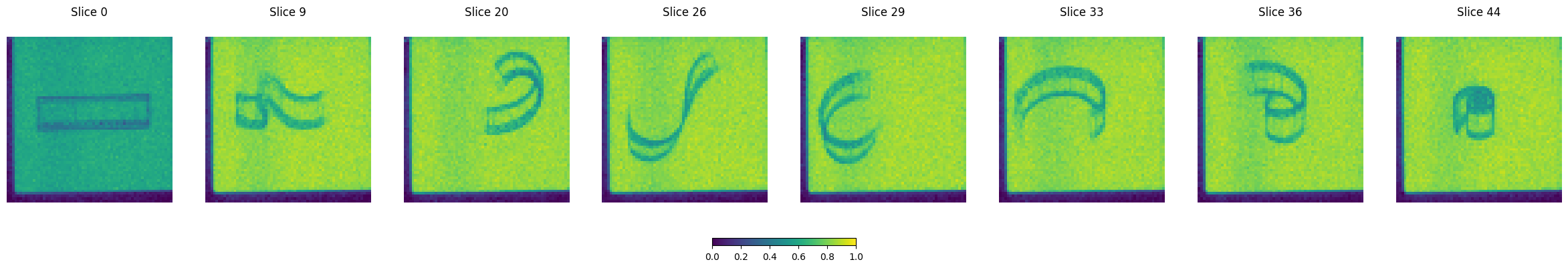}

backfitting ensemble of UHWTs. MSE=$2.44\times10^{-4}$.
Fit time = 8.5 min. 

\includegraphics[width=1\textwidth]{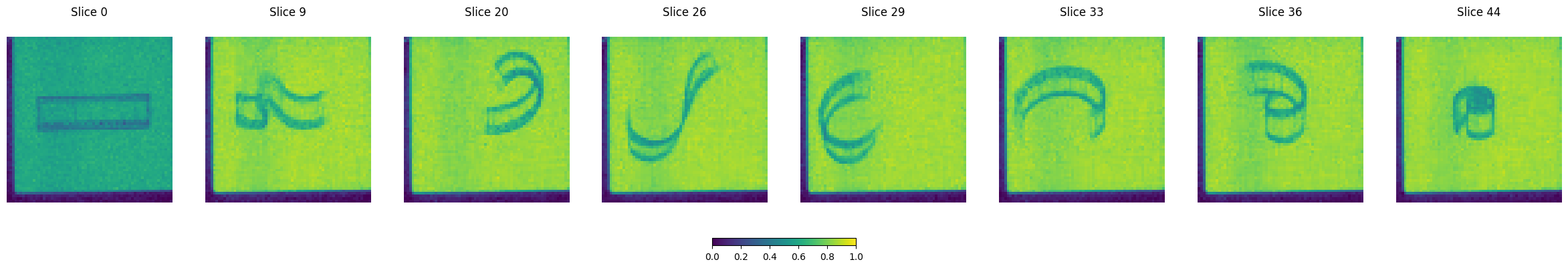}

boosted ensemble of TensorDecisionTreeRegressors \citep{luo2024efficient}. MSE=$1.33\times10^{-4}$.
Fit time = 26.5 min. 

\includegraphics[width=1\textwidth]{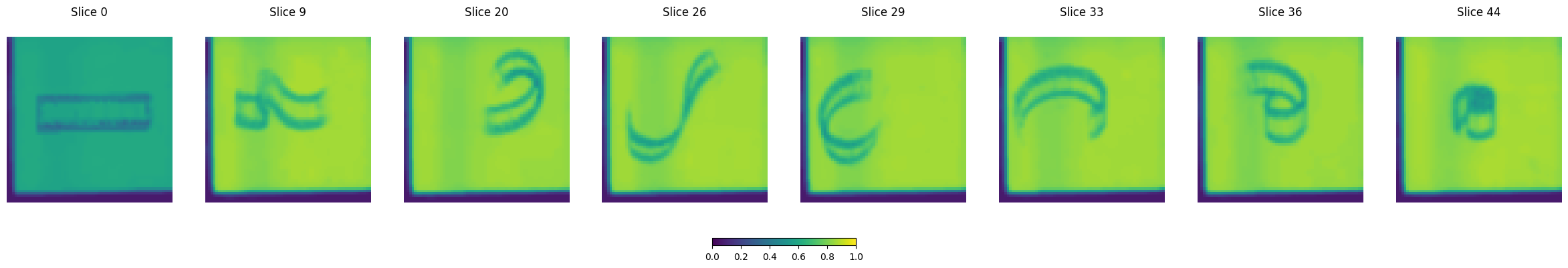}

single TensorDecisionTreeRegressor learner. MSE=$1.23\times10^{-3}$.
Fit time = 23.6 sec. 

\includegraphics[width=1\textwidth]{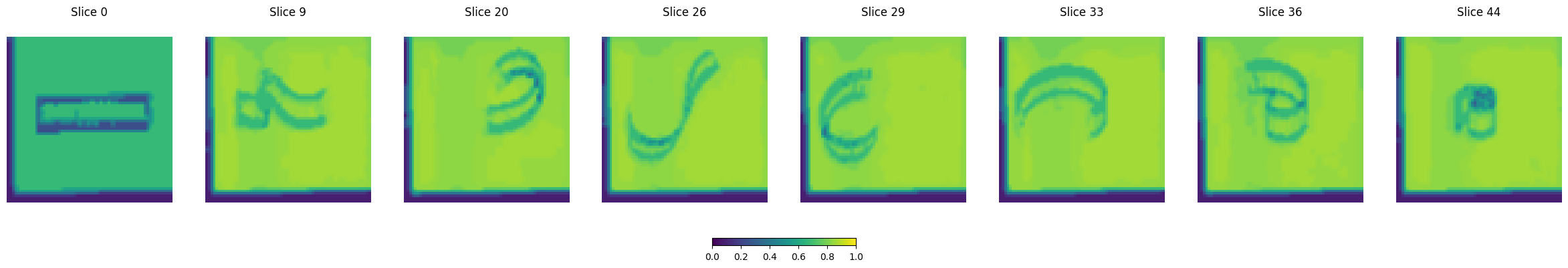}

\caption{\label{fig:Top:-noiseless-video.}Temporal slice comparison demonstrating
video denoising performance. Row 1: ground truth clean video frames
showing temporal evolution of the underlying signal structure. Row
2: noisy observations with additive Gaussian noise ($\sigma=0.1\cdot\text{std}(f)$).
Row 3: reconstructed frames using a backfitting ensemble of 200 weak
UH learners. Row 4: reconstructed frames using a boosted ensemble
of 50 weak TensorDecisionTreeRegressor learners. Row 5: a single TensorDecisionTreeRegressor
learner. }
\end{figure}

\begin{figure}[H]
\centering \includegraphics[width=1\textwidth]{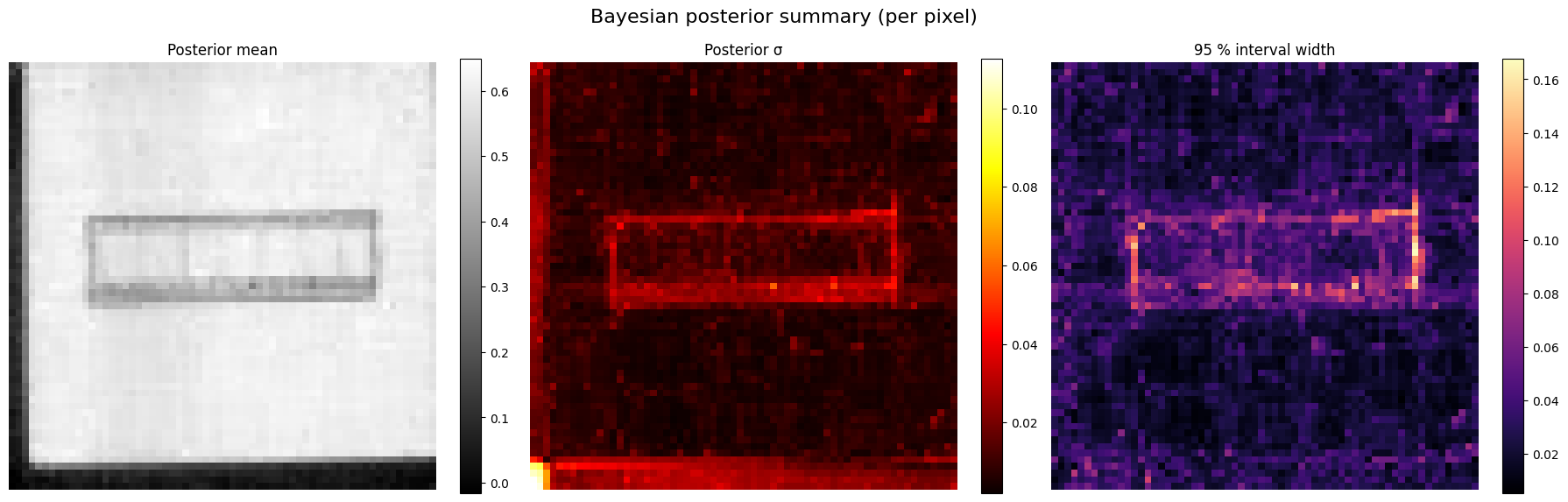}
\caption{\label{fig:tensorUQ}Bayesian uncertainty quantification for tensor
reconstruction (slice 0). Left: Posterior mean estimate providing
the denoised reconstruction. Center: Posterior standard deviation
revealing spatially-varying uncertainty, with higher values (red/yellow
regions) indicating areas where the model exhibits greater reconstruction
uncertainty, particularly along object boundaries and complex geometric
features. Right: Width of 95\% credible intervals, offering pixel-wise
confidence assessment with narrower intervals (dark blue) in homogeneous
regions and wider intervals (yellow) in structurally complex areas. }
\end{figure}

\subsection{Numerical results for sphere}

\label{subsec:Numerical-results-forspheres}

\subsubsection{Simulation experiment 1: bias-variance analysis}

\label{sec:sphere-bias-variance}

In a Random Forest framework, our fitted spherical wavelet trees are
essentially identical, and so the ensemble's variance will not be
reduced by model averaging (as seen by the roughly constant test MSE);
in this case, the random rotations provide a much-needed decorrelation
between learners as seen by the large difference in test MSE between
the rotated and non-rotated RF ensembles using our fitted spherical
wavelet trees, but the test MSE lower bound of $\approx0.4$ leaves
room for improvement. In contrast, a non-rotated Fréchet tree can
split away from the midpoint and hence already produces fairly uncorrelated
learners; in this case, the ensemble's variance is sharply reduced
by model averaging, but the test MSE fails to fall below $0.2$, which
indicates the ensemble's bias that the averaging process cannot reduce.
The random rotations, which are independent of each other and of the
data, slightly worsen the test MSE of the non-rotated Fréchet Forest.
If the learners are truly uncorrelated, then the ensemble's variance
shrinks to zero as the number of learners increases, regardless of
the base learner's variance. Hence, the increased test MSE must come
from an increase in the ensemble's bias, which itself equals each
base learner's bias. Thus, we conclude that each random rotation increases
both the bias and variance of the base learner.

In a boosting framework, a non-rotated Fréchet tree can split away
from the midpoint and hence already allows diversity between learners;
in this case, the ensemble's test MSE sharply decreases to $\approx0.18$
but plateaus after roughly 100 learners. However, the random rotations
slightly worsen the test MSE of the non-rotated Fréchet boosted ensemble.
Seeing as boosting is meant to reduce bias in the ensemble, the slightly
larger test MSE seems to result from the increase in each base learner's
variance from the random rotations. In contrast, our fitted spherical
wavelet trees intuitively do not produce much diversity between learners
due to splitting only at midpoints and hence would strongly benefit
from the inter-learner diversity induced by random rotations, even
at the cost of increased bias and variance in each base learner. This
is reflected empirically in the substantial reduction in test MSE
when random rotations are incorporated in a boosted ensemble using
our fitted spherical wavelet tree as the base learner. Perhaps surprisingly,
this ensemble achieves the smallest test MSE ($\approx0.15$) among
all tested methods. (This includes the test MSEs for three residual
deep GP models shown in Table~MT-2, where we see that
the smallest mean MSE is $0.182$, and for the spherical GP model,
whose test MSE was $0.477$.) Furthermore, this ensemble seemingly
can still improve if the number of learners were to increase beyond
500. In this scenario, for either base learner, boosting produced
smaller test MSEs than did model averaging, which suggests that each
learner's bias dominates its variance. In addition, a single spherical
wavelet tree using \texttt{adapt} has a smaller variance and larger
bias than a single Fréchet tree has, which makes it a more promising
base learner in a boosting framework. As a bonus, it can be applied
to generic topological triangulations and hence can be used for a
larger class of input domains than can a Fréchet tree, which requires
inputs to lie on a metric space.

\subsubsection{Simulation experiment 2: comparing splitting methods}

\label{sec:sphere-sim-4}

\begin{figure}[t]
\centering \includegraphics[width=0.85\textwidth]{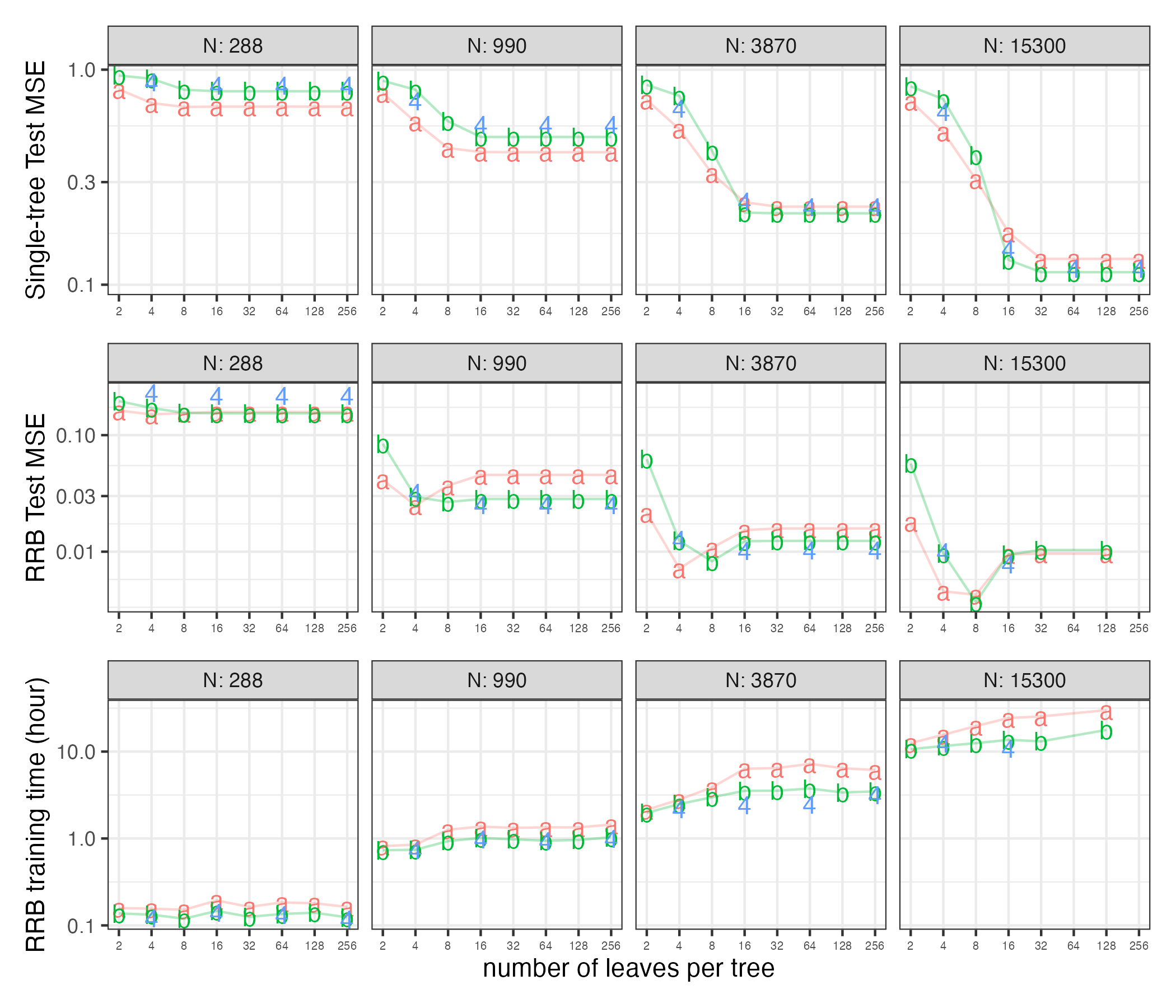}
\caption{Empirical comparison of the impact of three splitting methods --
\texttt{adapt} (`a'), \texttt{balance} (`b'), and \texttt{balance4}
(`4') -- on single-tree and random-rotation boosted models trained
on $N$ points $(\bm{x},y)$, where each location $\bm{x}=(x_{1},x_{2},x_{3})$
was drawn uniformly on the unit sphere $\mathbb{S}_{2}$, and the
observed response $y$ at $\bm{x}$ is the signal described in Figure~MT-5
plus i.i.d.\ Gaussian noise with zero mean and standard deviation
equal to $0.1$ of the standard deviation of $f$. Boosted methods
use learning rate $0.05$.}
\label{fig:sphere-mse-4} 
\end{figure}

Here we compare the impact of three splitting methods -- \texttt{adapt}, \texttt{balance},
and \texttt{balance4} -- on the empirical performance of a single-tree model
per level-0 triangle and on a random-rotation boosted (RRB) ensemble
of 500 trees per level-0 triangle. (The fourth -- \texttt{adaptvertex} --
takes too much training time.) Each comparison is for a given number
of leaves per tree and a given test function from which training and
test data, each of size $N\in\{288,990,3870,15300\}$, are generated.

First we will compare the performance of the adapt and balance splitting
methods. Figure~\ref{fig:sphere-mse-4} shows that the smallest test
MSE is usually achieved by the adapt method, and it is achieved with
4 leaves per tree. In contrast, the balance method with 4 leaves per
tree typically produces a larger test MSE, but once the number of
leaves per trees is with least 8, the resulting test MSEs are comparable
to the smallest test MSE achieved using the adapt method. Although
training a tree with 8 leaves using the balance method is typically
still a bit faster than training a tree with 4 leaves using the adapt
method, we must also consider the storage cost: a tree with 8 leaves
contains twice as much information as a tree with 4 leaves and hence
requires twice as much memory to store. This is also an important
consideration for speeding up any post-hoc operations on the tree,
such as signal reconstruction.

Now we will compare against the balance4 splitting method. For each
of the latter two methods (for the $N\ge990$ cases), we see that
the respective test MSE curve exhibits a U-shape due to the bias-variance
trade-off: as tree depth increases, the ensemble increasingly adapts
to the data but also increasingly risks overfitting. Due to the steepness
of the dips, especially for larger $N$, it is important to choose
the number of leaves per tree that minimizes the test MSE. However,
this task is more difficult for the balance4 method because there
is less granularity in the choice of the number of leaves per tree;
the number of leaves must be a power of 4 for the balance4 method,
whereas it must only be a power of 2 for the adapt and balance methods.
In particular, the difference between 4 and 16 leaves per tree can
have a significant impact on the test MSE, training time, and storage
requirements. Hence, we argue that methods that split nodes into 2
children are favorable over methods that split nodes into 4 children.

\subsubsection{Simulation experiment 3: misaligned spherical features}

\label{sec:sphere-sim-2}

Here, we consider a more complex benchmark function $g$ for spherical
functions. We want to validate the wavelet system on a spherical triangulation,
$g$ is a much better “structured yet controlled” test signal: intrinsic,
oriented and slightly asymmetric (tilt) so we can spot real directional
behavior rather than coordinate artifacts.

\begin{align}
g(\bm{x})=\sum_{i=1}^{3}\!\left[\exp\!\Big(-\frac{(\mathbf{n}_{i}^{\top}\bm{x})^{2}}{\sigma_{n}^{2}}\Big)+\beta\,\exp\!\Big(-\frac{(\mathbf{n}_{i}^{\top}\bm{x})^{2}}{\sigma_{w}^{2}}\Big)\right],
\end{align}
where the unit normals $\{\mathbf{n}_{i}\}$ define three planes (one
tilted) whose intersections with $\mathbb{S}^{2}$ are great circles.

Both functions are useful. The baseline $f$ defined in Figure~MT-5
is simple, smooth, and inexpensive to evaluate, so it is well suited
for pipeline checks and calibration. However, for fitting wavelets
on a triangulated sphere, $g$ is preferable for six reasons. First,
$g$ is intrinsic: since $\mathbf{n}_{i}^{\top}\bm{x}=\cos\theta$,
it organizes energy by geodesic distance, whereas $f$ depends on
extrinsic coordinates. Second, $g$ supplies multiple oriented features,
enabling stringent tests of directional selectivity absent from $f$'s
single dominant oscillation. Third, $g$ provides localized multiscale
content via narrow bands and diffused halos, unlike $f$'s broadly
distributed trends. Fourth, $g$ exposes mesh anisotropy because its
features are uniform along geodesics, while $f$ may align with grid
directions. Fifth, $g$ yields a balanced dynamic range with salient
circle intersections that stabilize error metrics; $f$'s linear and
saturating parts can overwhelm $\cos(10x_{2})$. Sixth, $g$ contains
curvilinear singular structures that are harder than the near-band-limited
behavior of $f$, thereby offering a more discriminative benchmark
for spherical wavelets.

As mentioned in the main text, the prediction and MSE behaviors (shown
in Figure~\ref{fig:sphere-mse-2}) are similar to those for the signal
described in Figure~MT-5. Interestingly, here the
residual deep GP models (MSEs in Table~\ref{tab:deepGP-2}) fare
much better. In fact, the test MSE ($0.0374\pm0.0030)$ of the best
performing GP model (1 layer with inducing points, so it is not a
deep GP) is smaller than all but the random-rotation boosted ensemble's
($0.0330$). %
We also note that it appears that increasing the number of ensembles
will further decrease the test MSE for the random-rotation boosted
ensemble, whereas there is no clear path for improvement for the GP
model with inducing points.

\afterpage{
\begin{figure}[ht!]
\centering \includegraphics[width=0.95\textwidth]{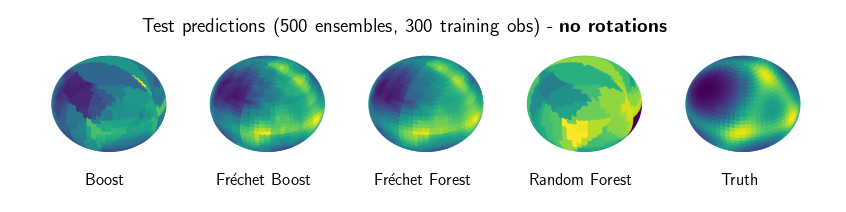}
\includegraphics[width=0.95\textwidth]{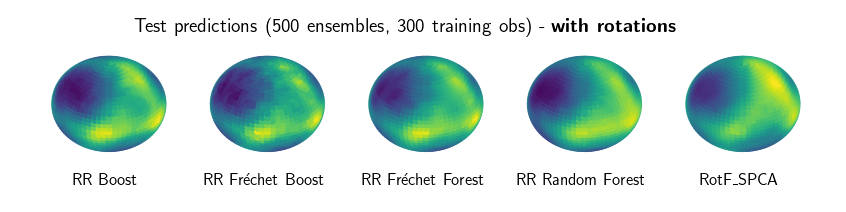}
\includegraphics[width=1\textwidth]{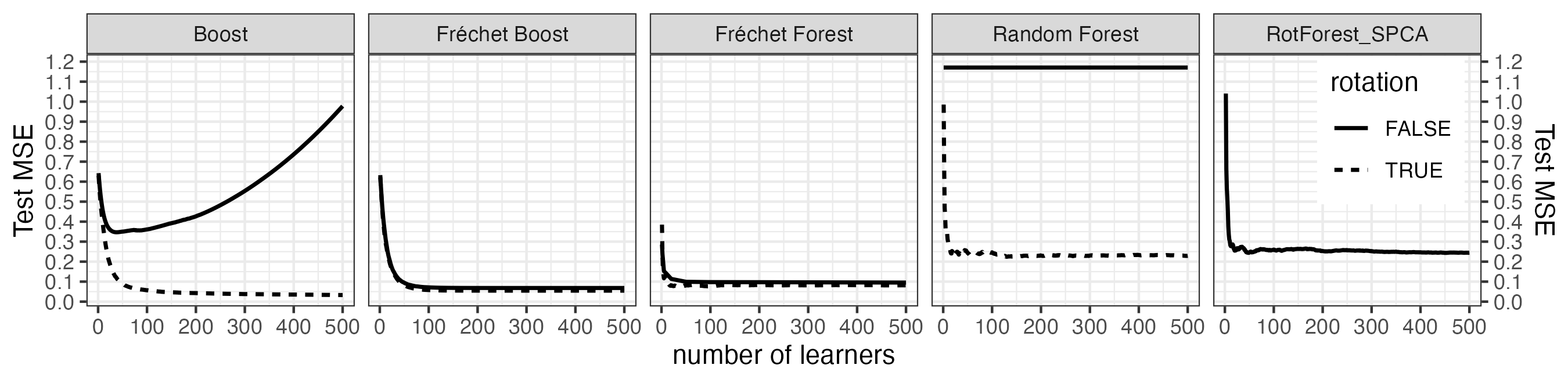}
\begin{tabular}{rlll}
\toprule 
 & 500 ensembles & 1000 ensembles & 1500 ensembles \tabularnewline
\midrule 
Test MSE for RR Boost & $29.0\times10^{-3}$ & $27.6\times10^{-3}$ & $27.0\times10^{-3}$ \tabularnewline 
\bottomrule
\end{tabular}
\caption{Empirical comparison of various tree ensemble models (\textquotedblleft RR\textquotedblright{}
refers to the method version that randomly rotates covariates before
fitting any base learner) trained on $n=300$ points $(\bm{x},y)$, where
each location $\bm{x}=(x_{1},x_{2},x_{3})$ was drawn uniformly on the
unit sphere $\mathbb{S}_{2}$, and the observed response $y$ at $\bm{x}$
is the signal $f(\bm{x})=\exp\{-16x_{3}^{2}\}+1.5\exp\{-16x_{3}^{2}/9\}+\exp\{-16x_{2}^{2}\}+1.5\exp\{-16x_{2}^{2}/9\}+\exp\{-16n(\bm{x})\}+1.5\exp\{-16n(\bm{x})/9\}$,
where $n(\bm{x})=[x_{1}\cos(35^{\circ})-x_{2}\sin(35^{\circ})+x_{3}\sin(35^{\circ})]^{2}$
plus i.i.d.\ Gaussian noise with zero mean and standard deviation
equal to $0.1$ of the standard deviation of $f$. The true signal
is shown in the right-most sphere of the top row. All tree-based methods
allow trees to grow to full depth. Boosted methods use learning rate
$0.05$. Top two rows: predictions of the ensemble methods. Third
row: MSE on 2448 test points.}
\label{fig:sphere-mse-2} 
\end{figure}

\begin{table}[h!]
\centering
\resizebox{\linewidth}{!}{
\begin{tabular}{rllll}
\toprule 
Test MSE for residual deep GP & 1 layer & 2 layers & 3 layers & 4 layers \tabularnewline
\midrule 
+hodge+spherical\_harmonic\_features  & \textbf{52.7 (0)}  & 53.1 (0.1)  & 53.1 (0.1)  & 53.2 (0.1) \tabularnewline
+inducing\_points  & \textbf{37.4 (3.0)}  & 53.1 (16.5)  & 52.9 (16.2)  & 52.6 (15.2) \tabularnewline
+spherical\_harmonic\_features  & \textbf{52.7 (0)}  & 53.6 (0.1)  & 53.8 (0.1)  & 53.8 (0.1) \tabularnewline
\bottomrule
\end{tabular}
}
\caption{Mean and standard deviation of test MSE ($\times10^{-3}$) for three
residual deep GP methods \citep{wyrwal2025residual} over 10 starting
seeds. Data were generated as described in Figure~\ref{fig:sphere-mse-2}.
Smallest test MSE for each method is bolded.}
\label{tab:deepGP-2} 
\end{table}

}

\subsubsection{Simulation experiment 4: irregular benchmark function}

\label{sec:sphere-sim-3}

Here, we consider an ``irregular'' benchmark function for spherical
functions as described in Figure~4a of \citet{wyrwal2025residual}.
The prediction and MSE behaviors, shown in Figure~\ref{fig:sphere-mse-3},
are similar to those for the signal described in Figure~\ref{fig:sphere-mse-2}.
Again, here the residual deep GP models produce competitive test MSEs;
the test MSE ($1.657\pm0.027)\times10^{-3}$ of the best performing
GP model is smaller than all but the random-rotation boosted ensemble's
($1.514\times10^{-3}$ with 1500 learners, not shown). We also note
that even after 1500 learners, the random-rotation boosted ensemble's
test MSE looks like it will further decrease as the number of learners
increases. At 1500 learners, this test MSE is smaller than any of
the deep GP MSEs shown in Table~\ref{tab:deepGP-3}.

\afterpage{
\begin{figure}[ht!]
\centering \includegraphics[width=0.95\textwidth]{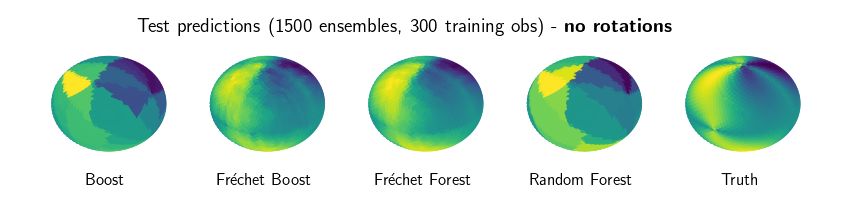}
\includegraphics[width=0.95\textwidth]{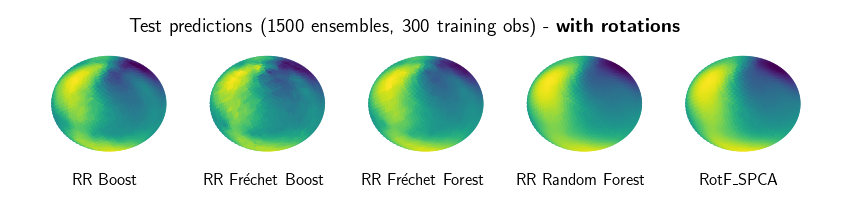}
\includegraphics[width=1\textwidth]{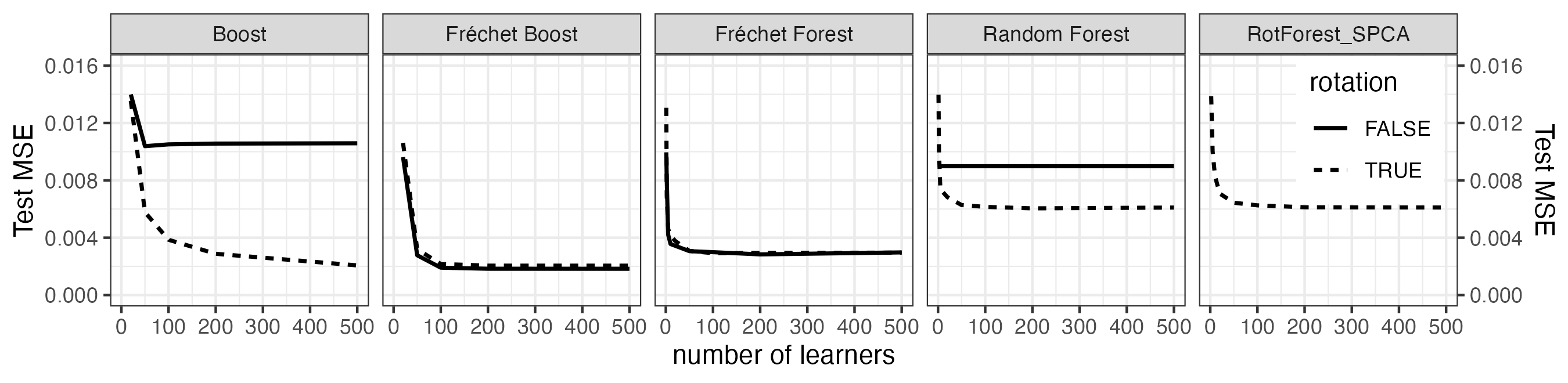}
\begin{tabular}{rlll}
\toprule 
 & 500 ensembles & 1000 ensembles & 1500 ensembles \tabularnewline
\midrule 
Test MSE for RR Boost  & $2.0676\times10^{-3}$  & $1.669\times10^{-3}$  & $\mathbf{1.514\times10^{-3}}$ \tabularnewline 
\bottomrule
\end{tabular}
\caption{Empirical comparison of various tree ensemble models (\textquotedblleft RR\textquotedblright{}
refers to the method version that randomly rotates covariates before
fitting any base learner) trained on $n=300$ points $(\bm{x},y)$, where
each location $\bm{x}=(x_{1},x_{2},x_{3})$ was drawn uniformly on the
unit sphere $\mathbb{S}_{2}$, and the observed response $y$ at $\bm{x}$
is the signal described in Figure~4a of \citet{wyrwal2025residual}
plus i.i.d.\ Gaussian noise with zero mean and standard deviation
equal to $0.1$ of the standard deviation of $f$. The true signal
is shown in the right-most sphere of the top row. All tree-based methods
allow trees to grow to full depth. Boosted methods use learning rate
$0.05$. Top two rows: predictions of the ensemble methods. Third
row: MSE on 2448 test points.}
\label{fig:sphere-mse-3} 
\end{figure}
\begin{table}[h!]
\centering %
\begin{tabular}{rllll}
\toprule 
{\tiny Test MSE for residual deep GP} & 1 layer & 2 layers & 3 layers & 4 layers \tabularnewline
\midrule 
{\tiny +hodge+spherical\_harmonic\_features}  & 3.078 (0)  & 2.555 (0.582)  & 2.511 (0.473)  & \textbf{2.424 (0.579)} \tabularnewline
{\tiny +inducing\_points}  & 3.078 (0)  & 1.716 (0.075)  & \textbf{1.662 (0.025)}  & \textbf{1.663 (0.027)} \tabularnewline
{\tiny +spherical\_harmonic\_features}  & 3.078 (0)  & 1.769 (0.205)  & \textbf{1.657 (0.027)}  & 1.677 (0.063) \tabularnewline
\bottomrule
\end{tabular}
\vskip 0.1cm
\caption{Mean and standard deviation of test MSE ($\times10^{-3}$) for three
residual deep GP methods \citep{wyrwal2025residual} over 10 starting
seeds. Data were generated as described in Figure~\ref{fig:sphere-mse-3}. Smallest test MSE for each method is bolded.}
\label{tab:deepGP-3} 
\end{table}
}

\subsubsection{Uncertainty quantification on a spherical domain}

Here we apply a frequentist approach to uncertainty quantification (UQ) on the GISS data from Section~MT-6.3.
In Figure~\ref{fig:triangulation_noaa_uq}, we also provide predictive UQ by adapting
the approach in quantile regression forests \citep{meinshausen2006quantile}
to our random-forest ensemble of $B=500$ spherical learners using \texttt{adapt}.
This UQ method can be incorporated for Random Forest or Boosting (with or without random rotations), 
but because the purpose is not necessarily to compare the UQ performance of various ensembles, 
we illustrate this UQ method for only one ensemble.
For a query
$\bm{x}$, a quantile regression forest provides conditional quantiles
by assigning to each training index ($i=1,\dots,n$) the weight 
\[
w_{i}(\bm{x})=\frac{1}{B}\sum_{b=1}^{B}\sum_{A\in\mathcal{L}(T_{b})}\frac{\mathbf{1}\{\bm{x}_{i}\in A\}\mathbf{1}\{\bm{x}\in A\}}{|\{j\colon \bm{x}_{j}\in A\}|},
\]
where $\mathcal{L}(T)$ denotes the set of leaf nodes in tree $T$,
and then inverting the weighted
empirical cumulative distribution function $\widehat{F}_{Y\mid X=x}(t)=\sum_{i=1}^{n}w_{i}(x)\mathbf{1}\{y_{i}\le t\}$.

\begin{figure}[h]
\centering \includegraphics[width=1\textwidth]{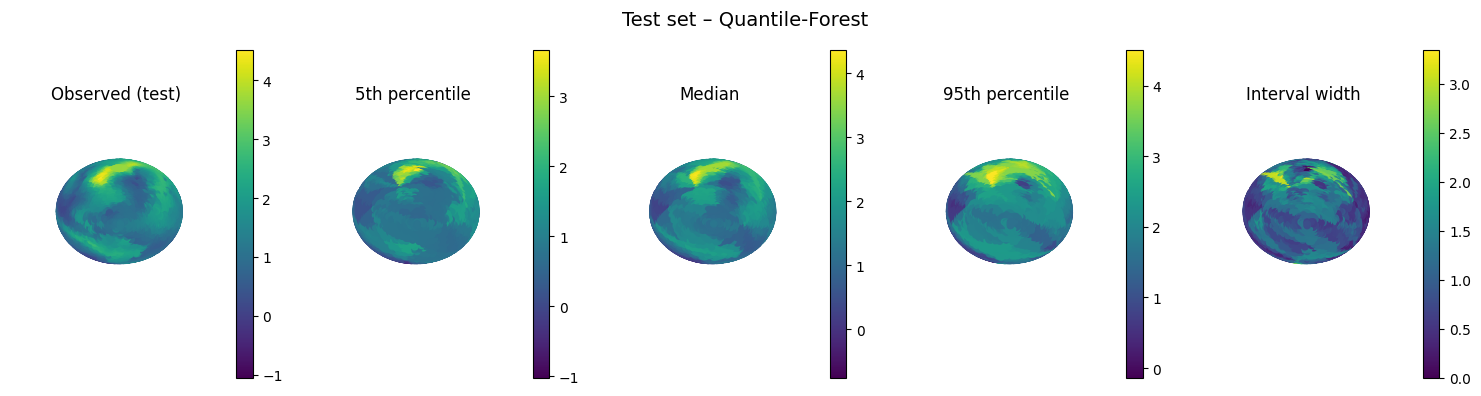}
\caption{Quantile predictions of a Random Forest of 500 spherical learners per initial triangle
using \texttt{adapt} trained data described in Figure~MT-6.
This ensemble achieved an empirical $90\%$ coverage of $87.7\%$.}
\label{fig:triangulation_noaa_uq}
\end{figure}

\end{document}